\def\anon{0}
\def\draft{1}
\newcommand{\ignore}[1]{}
\newenvironment{breakablealgorithm}
  {%
   \begin{center}
     \refstepcounter{algorithm}%
     \hrule height.8pt depth0pt \kern2pt%
     \renewcommand{\caption}[2][\relax]{%
       {\raggedright\textbf{\ALG@name~\thealgorithm} ##2\par}%
       \ifx\relax##1\relax %
         \addcontentsline{loa}{algorithm}{\protect\numberline{\thealgorithm}##2}%
       \else %
         \addcontentsline{loa}{algorithm}{\protect\numberline{\thealgorithm}##1}%
       \fi
       \kern2pt\hrule\kern2pt
     }
  }{%
     \kern2pt\hrule\relax%
   \end{center}
  }
\setlist[itemize]{itemsep=-0.1em}
\setlist[enumerate]{itemsep=-0.1em}
\def\th@plain{%
  \thm@notefont{}%
  \itshape %
}
\def\th@definition{%
  \thm@notefont{}%
  \normalfont %
}
\newtheorem{theorem}{Theorem}[section]
\newtheorem{lemma}[theorem]{Lemma}
\newtheorem{corollary}[theorem]{Corollary}
\newtheorem{hypothesis}[theorem]{Hypothesis}
\newtheorem{question}{Question}
\newtheorem{claim}[theorem]{Claim}
\newtheorem{definition}[theorem]{Definition}
\newtheorem{remark}[theorem]{Remark}
\newtheorem{proposition}[theorem]{Proposition}
\newtheorem{example}[theorem]{Example}
\tiny\color{gray},
\newcommand{\FF}{\ensuremath{\mathbb F}}
\renewcommand{\epsilon}{\varepsilon}
\newcommand{\eps}{\varepsilon}
\def \poly {\mathop{\rm poly}} %
\def\tO{\tilde{O}}
\def\eps{{\varepsilon}}
\newcommand{\mina}[1]{\textcolor{red}{(Mina: #1)}}
\newcommand{\surya}[1]{\textcolor{magenta}{(Surya: #1)}}
\newcommand{\yinzhan}[1]{\textcolor{green}{(Yinzhan: #1)}}
\newcommand{\virgi}[1]{\textcolor{red}{(Virginia: #1)}}
    \newcommand{\mina}[1]{}
    \newcommand{\surya}[1]{}
    \newcommand{\yinzhan}[1]{}
    \newcommand{\virgi}[1]{}
\newcommand\cliquelist[1]{(#1)\text{-}\mathsf{Clique}\text{-}\mathsf{Listing}}
\newcommand\cliquedet[1]{(#1)\text{-}\mathsf{Clique}\text{-}\mathsf{Detection}}
\newcommand{\exactclique}[1]{\mathsf{Exact}\text{-}#1\text{-}\mathsf{Clique}}
\newcommand\MM{\mathsf{MM}}
\newcommand\sparse{{\tt Sparse}}
\newcommand\dense{{\tt Dense}}
\title{Towards Optimal Output-Sensitive Clique Listing\\
{{or: Listing Cliques from Smaller Cliques}}}
    \author{Anonymous}
    \author{\normalsize Mina Dalirrooyfard\thanks{Morgan Stanley Research. \href{mailto:minad@mit.edu}{\texttt{minad@mit.edu}}. While at MIT, supported by a Google Faculty Research Award and an Akamai MIT CS Theory Group Fellowship.} \and \normalsize Surya Mathialagan\thanks{MIT. \href{mailto:smathi@mit.edu}{\texttt{smathi@mit.edu}}. Supported by the Siebel Scholars program, by DARPA under Agreement No. HR00112020023 and by NSF grant CNS-2154149.}
    \and \normalsize Virginia Vassilevska Williams\thanks{MIT. \href{mailto:virgi@mit.edu}{\texttt{virgi@mit.edu}}. Partially supported by NSF Career Award CCF-1651838, NSF Grant CCF-2129139, a Sloan Research Fellowship and a Google Faculty Research Award.} 
    \and \normalsize  Yinzhan Xu\thanks{MIT. \href{mailto:xyzhan@mit.edu}{\texttt{xyzhan@mit.edu}}. Supported by NSF Grant CCF-2129139.}}
\date{}
\begin{document}

\maketitle 
\pagenumbering{gobble} 
\begin{abstract}
We study the problem of finding and listing $k$-cliques in an $m$-edge, $n$-vertex graph, for constant $k\geq 3$. This is a fundamental problem of both theoretical and practical importance.

Our first contribution is an algorithmic framework for finding $k$-cliques that gives the first improvement in 19 years over the old runtimes for $4$ and $5$-clique finding, as a function of $m$ [Eisenbrand and Grandoni, TCS'04]. 
With the current bounds on matrix multiplication, our algorithms run in $O(m^{1.66})$ and $O(m^{2.06})$ time, respectively, for $4$-clique and $5$-clique finding.

Our main contribution is an output-sensitive algorithm for listing $k$-cliques, for any constant $k\geq 3$. We complement the algorithm with tight lower bounds based on standard fine-grained assumptions. Previously, the only known conditionally optimal output-sensitive algorithms were for the case of $3$-cliques given by Bj\"{o}rklund, Pagh, Vassilevska W. and Zwick [ICALP'14]. If the matrix multiplication exponent $\omega$ is $2$, and if the number of $k$-cliques $t$ is large enough, the running time of our algorithms is 
$$\tilde{O}\left(\min\{m^{\frac{1}{k-2}}t^{1 - \frac{2}{k(k-2)}},n^{\frac{2}{k-1}}t^{1-\frac{2}{k(k-1)}}\}\right),$$
and this is {\em tight} under the Exact-$k$-Clique Hypothesis. This running time naturally extends the running time obtained by Bj\"{o}rklund, Pagh, Vassilevska W. and Zwick for $k=3$.

Our framework is very general in that it gives $k$-clique listing algorithms whose running times can be measured in terms of the number of $\ell$-cliques $\Delta_\ell$ in the graph for any $1\leq \ell<k$. This generalizes the typical parameterization in terms of $n$ (the number of $1$-cliques) and $m$ (the number of $2$-cliques).

If $\omega$ is $2$, and if the size of the output, $\Delta_k$, is sufficiently large, then for every $\ell<k$, the running time of our algorithm for listing $k$-cliques is 
$$\tilde{O}\left(\Delta_\ell^{\frac{2}{\ell (k - \ell)}}\Delta_k^{1-\frac{2}{k(k-\ell)}}\right).$$ 
We also show that this runtime is {\em optimal} for all $1 \leq \ell < k$ under the Exact $k$-Clique hypothesis. 
\end{abstract}

\newpage

\tableofcontents{}
\newpage 
\pagenumbering{arabic} 
\section{Introduction}
Finding, counting and listing cliques in graphs are fundamental tasks with numerous applications. In any type of network (social, biological, financial, web, maps, etc.) clique listing is used to find patterns such as communities, spam-link farms, motifs, correlated genes and more (see \cite{SchankW05,listingcliqueswww} and the many citations within).

As finding a clique of maximum size has long been known to be NP-hard \cite{Karp72}, the focus in numerous practical works (see \cite{listingcliqueswww,listingcliquesdensest,listingcliquesnucleus,count5via3,trilistlatapy,ChibaN85,SchankW05,ShunT15,ChuC11}) 
is on listing cliques of small size such as triangles and $4$-cliques. 

More generally, in an $n$-node $m$-edge graph, for a constant $k \geq 3$  (independent of $n$ and $m$), we want to find, count or list the $k$-cliques in $G$.
Chiba and Nishizeki \cite{ChibaN85} presented an algorithm that for any constant $k\geq 3$ can list all $k$-cliques in a graph in $O(m\alpha^{k-2})$ time, where $\alpha\leq O(\sqrt{m})$ is the {\em arboricity} of the given graph. This algorithm is among the most efficient clique-listing approaches in practice (see e.g. \cite{listingcliqueswww} and the references within).

Purely in terms of $m$, Chiba and Nishizeki's algorithm runs in $O(m^{k/2})$ time. Since $O(m^{k/2})$ is also the maximum number of $k$-cliques in an $m$-edge graph, this algorithm is optimal, as long as the graph has $\Theta(m^{k/2})$ cliques (e.g. when the graph itself is a clique). However, when the graph has $t$ $k$-cliques, where $t$ is $o(m^{k/2})$, the optimality argument no longer works. In fact, it has been known for almost 40 years \cite{nesetril1985complexity} that when $t=1$, a much faster runtime is possible using fast matrix multiplication.

This motivates the study of {\bf output-sensitive} algorithms for $k$-clique listing: algorithms whose running time depends on the number of $k$-cliques in the output. An even more desirable version of an output-sensitive algorithm is one that can also take as input some parameter $t$, and can list up to $t$ $k$-cliques in the graph. When $t$ is much smaller than the number of $k$-cliques in the graph, such an algorithm could potentially be more efficient. These two versions are actually runtime-equivalent up to logarithmic factors for most natural running times (we provide a proof in Section~\ref{sec:prelim} for completeness). We thus use these  two notions interchangeably.

Bj{\"o}rklund, Pagh, Vassilevska W. and Zwick \cite{bjorklund2014listing} designed such output-sensitive algorithms for triangle listing with runtime $\tilde{O}(n^{\omega}+n^{\frac{3(\omega-1)}{5-\omega}}t^{\frac{2(3-\omega)}{5-\omega}})$ and $\tilde{O}(m^{\frac{2\omega}{\omega+1}}+m^{\frac{3(\omega-1)}{\omega+1}}t^{\frac{3-\omega}{\omega+1}})$\footnote{We use $\tilde{O}$ to hide polylog factors.}, where $\omega<2.372$~\cite{duan2023, VXXZ24} is the exponent of matrix multiplication and $t$ is the number of triangles listed. If $\omega=2$, the runtimes simplify to $\tilde{O}(n^2+nt^{2/3})$ and $\tilde{O}(m^{4/3}+mt^{1/3})$, and these are shown to be conditionally optimal for any $t= \Omega(n^{1.5})$ and $t= \Omega(m)$ respectively under the popular $3$SUM hypothesis \cite{patrascu2010towards,kopelowitz2016higher} and the even more believable Exact Triangle hypothesis \cite{williams2020monochromatic}. 
There have also been many recent works focusing on output-sensitive cycle-listing algorithms. The works of \cite{abboud2022listing, jin2023removing} show $O(\min\{n^2 + t, m^{4/3} + t\})$ algorithms for listing $t$ 4-cycles, and the work of ~\cite{jin2024listing} shows $\tilde{O}(n^2 + t)$ algorithm for listing $t$ 6-cycles. Moreover, matching conditional lower bounds for $4$-cycle listing were shown under the 3SUM hypothesis~\cite{jin2023removing,abboud2023stronger3sum}, which was subsequently strengthened to hold under the Exact Triangle hypothesis \cite{CX24}. 

While the output-sensitive questions for triangle listing and 4-cycle listing are is well-understood by now, no similar conditionally optimal results are known for $k$-clique listing when $k\geq 4$. 

\begin{question}\label{q2}
    What is the best output-sensitive algorithm for $k$-clique listing for $k>3$? 
\end{question}

When analyzing algorithms, researchers look at a variety of {\bf parameters} to understand performance: the size of the input (typically $n$ and $m$ for graph problems), the size of the output (the number of $k$-cliques), and other natural parameters of the input (e.g. the arboricity, as in \cite{ChibaN85}). In this work, we study clique-listing algorithms parameterized by $\Delta_\ell$, the number of $\ell$-cliques in the graph for $\ell < k$.

To motivate this, let us consider the first non-trivial algorithm for $k$-clique finding by Ne\v{s}etril and Poljak \cite{nesetril1985complexity}.
For simplicity, assume that $k$ is divisible by $3$. First, the algorithm enumerates all $k/3$-cliques in the input graph $G$, and forms a new graph $H$ whose nodes represent the $k/3$-cliques of $G$ and whose edges connect two $k/3$-cliques that together form a $2k/3$-clique. 
The triangles of $H$ correspond to $k$-cliques in $G$, and so Ne\v{s}etril and Poljak reduce $k$-clique finding, counting and listing in $G$ to finding, counting and listing (respectively) of {\em triangles} in $H$\footnote{Note the reduction also works for counting and listing because every $k$-clique is represented by exactly $\binom{k}{k/3,k/3,k/3}$ triangles.}. As there are $O(n^{k/3})$ $k/3$-cliques in $G$, and since triangle finding or counting in $N$-node graphs can be done in $O(N^\omega)$ time \cite{itairodeh}, \cite{nesetril1985complexity} gave an $O(n^{\omega k/3})$ time algorithm for $k$-clique finding or counting in $n$-node graphs.
Eisenbrand and Grandoni \cite{eisenbrand2004complexity} extended Ne\v{s}etril and Poljak's reduction to obtain a $k$-clique runtime of $O(n^{\beta(k)})$ where $\beta(k)=\omega(\lceil{k/3}\rceil, \lceil{(k-1)/3}\rceil, \lfloor{k/3}\rfloor)$, and $\omega(a,b,c)$ is the exponent of multiplying an $n^a\times n^b$ matrix by an $n^b\times n^c$ matrix. As the runtime of $k$-clique detection has remained unchallenged for several decades, the  hypothesis that these algorithms are optimal has been used to provide conditional lower bounds in several works (e.g.~\cite{AbboudBW18,BackursT17,BringmannW17}). Throughout the paper, we consider the word-RAM model of computation with $O(\log n)$ bit words. 

\begin{hypothesis}[$k$-Clique Hypothesis]
\label{hyp:k_clique}
    On a word-RAM model with $O(\log n)$ bit words, detecting a $k$-clique in an $n$-node graph requires $n^{\beta(k) - o(1)}$ time , where $\beta(k)=\omega(\lceil{k/3}\rceil, \lceil{(k-1)/3}\rceil, \lfloor{k/3}\rfloor)$.
\end{hypothesis}

Now, suppose $G$ has a small number $q$ of $k/3$-cliques and suppose we can list these $k/3$-cliques quickly, then Ne\v{s}etril and Poljak's algorithm would run in only $O(q^{\omega})$ additional time which can be much faster than $O(n^{\omega k/3})$.

More generally, if a graph has a small number $\Delta_\ell$ of $\ell$-cliques for $\ell<k$, a simple generalization of Ne\v{s}etril and Poljak's reduction would reduce $k$-clique to $k/\ell$-clique in a graph with $\Delta_\ell$ nodes (assuming $k$ is divisible by $\ell$ for simplicity). If one can list the $\ell$-cliques fast, then $k$-clique finding, listing and detection can all be done faster in graphs with small $\Delta_\ell$.

In other words, for $k$-clique problems, the number of $\ell$-cliques $\Delta_\ell$, where $\ell<k$ is arguably the most natural parameter. The usual input parameters $n$ and $m$ can be viewed as the special cases $\Delta_1$ and $\Delta_2$.
We are not the first to suggest this natural parameterization of the input. In fact, small $\Delta_\ell$ values have been exploited to obtain faster $k$-clique algorithms in experimental algorithmics: e.g., \cite{count5via3} and \cite{osti_1141233} count $k$-cliques faster in graphs with a small number of triangles. Motivated by these practical results, we are the first to consider the following question within theoretical computer science:

\begin{question}\label{q3}
    Can we get a general conditionally optimal algorithms for output-sensitive $k$-clique listing in terms of the number $\Delta_\ell$ of $\ell$-cliques for any $\ell<k$?
\end{question}

\subsection{Our Contributions}
We present a systematic study of clique finding and listing, and provide answers to both Questions \ref{q2} and \ref{q3}. We give the first output-sensitive algorithms for listing $k$-cliques for $k \geq 4$. We also give the first general algorithms for detecting and listing $k$-cliques in terms of the number of $\ell$-cliques, and the first fine-grained lower bounds for the listing problem for general $k$. Our lower bounds show that our algorithms are tight for a non-trivial range of the number of $k$-cliques to output. We summarize our contributions in Table~\ref{tab:contributions}. $\cliquedet{k,\ell}$ and $\cliquelist{k,\ell}$ refer to \emph{detecting} and \emph{listing} $k$-cliques respectively given a list of all $\ell$-cliques. Here, $t$ is the number of $k$-cliques we are asked to list.

\begingroup
\renewcommand{\arraystretch}{1.2} 
    \begin{table}[h]
        \centering
        \scalebox{0.85}{
        \begin{tabular}{p{2.7cm}|c|c}
        &\textbf{Results} & \textbf{References}\\
        \hline 
        \hline 
             \multirow{2}{=}{\textbf{Detection}} & New $\cliquedet{k,\ell}$ framework &  Section~\ref{sec:detection}\\
             & Improved $\cliquedet{4,2}$ and $\cliquedet{5,2}$ & Theorem~\ref{intro:detection-m} \\
             \hline 
             \textbf{Lower bounds} & Conditional lower bounds for $\cliquelist{k,\ell}$ & Theorems~\ref{thm:lb_intro_mn}, \ref{thm:lb_intro}\\
             \hline 
             \multirow{5}{=}{\textbf{Listing}} & Optimal algorithms for $(4,1)$ and $\cliquelist{5,1}$ & Theorems~\ref{thm:4_1_opt}, \ref{thm:5_1_opt}\\
             & Nearly-everywhere optimal algorithms for $(4,\ell)$, $\cliquelist{5,\ell}$ & Theorems~\ref{thm:4_2}, \ref{thm:5_2}\\
             & Optimal $\cliquelist{k,\ell}$ algorithms for large $t$ & Theorems~\ref{thm:intro-k-large-t-listing-mn}, \ref{thm:intro-k-l-large-t-listing}\\
             & Generalized $\cliquelist{k,\ell}$ algorithm for all $t$ & Section~\ref{sec:general-list}\\
             & Refined analysis for $\cliquelist{6,1}$ & Section~\ref{sec:6clique}
        \end{tabular}}
        \caption{Summary of our contributions. $\cliquedet{k,\ell}$ and $\cliquelist{k,\ell}$ refer to \emph{detecting} and \emph{listing} $k$-cliques respectively given a list of all $\ell$-cliques. Here, $t$ is the number of $k$-cliques we are asked to list.}
        \label{tab:contributions}
    \end{table}
\endgroup

\paragraph{Improved 4 and 5-clique detection in sparse graphs.}
We provide a general algorithmic framework for detecting cliques. 
As special cases of the framework, we give the first improvement over the
the runtime of Eisenbrand and Grandoni \cite{eisenbrand2004complexity} for $4$ and $5$-clique detection in sparse graphs (we show this in Examples~\ref{ex:clique-det-4-2} and \ref{ex:clique-det-5-2} in Section~\ref{sec:detection_examples}).

 \begin{theorem}\label{intro:detection-m}
     There is an $O(m^{1.657})$ time algorithm for 4-clique detection and an $O(m^{2.057})$ time algorithm for 5-clique detection in $m$-edge graphs.
 \end{theorem}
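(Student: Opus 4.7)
The plan is to invoke the general $\cliquedet{k,\ell}$ framework developed in Section~\ref{sec:detection} at the specific cases $(k,\ell)=(4,2)$ and $(k,\ell)=(5,2)$, and then to optimize its internal parameters using the currently best bounds on rectangular matrix multiplication.

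The skeleton I would use is a vertex-degree split at a threshold $\Delta$: call a vertex \emph{heavy} if its degree in $G$ exceeds $\Delta$ (so at most $2m/\Delta$ vertices are heavy) and \emph{light} otherwise. Every $k$-clique either contains some light vertex $v$, in which case the remaining $k-1$ vertices form a $(k-1)$-clique inside $G[N(v)]$, a graph on at most $\Delta$ vertices, or is entirely supported on the heavy subgraph on at most $2m/\Delta$ vertices. I would handle the first case by recursively invoking the best vertex-parameterized $(k-1)$-clique detector on each light neighborhood, and the second by running the best vertex-parameterized $k$-clique detector on the heavy subgraph. For $k=4$ this yields a light-side cost $O(n\Delta^{\omega})$ via triangle detection in each $\Delta$-vertex neighborhood, and a heavy-side cost $O((m/\Delta)^{\omega(2,1,1)})$ by Eisenbrand-Grandoni; for $k=5$ the analogue uses $4$-clique detection in the $\Delta$-vertex neighborhoods and the Eisenbrand-Grandoni exponent $\omega(2,2,1)$ on the $O(m/\Delta)$-vertex heavy subgraph.

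A single-threshold split alone only recovers the previously known $m$-parameterized bound, so in order to land on $m^{1.657}$ and $m^{2.057}$ I would iterate the split recursively inside the heavy subgraph, feeding in intermediate $\ell$-clique listings (for example the $O(M^{3/2})$ triangle listing applied on the $M$-edge heavy subgraph) as inputs to a rectangular matrix product on the $\ell$-clique set. Balancing the resulting multi-level cost expression becomes a small linear program in $\omega$ and the rectangular exponents $\omega(a,b,c)$, and substituting the current best values is what produces the claimed exponents.

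The main obstacle is this numerical optimization: the improvement over the Eisenbrand-Grandoni exponent is driven by the particular shape of the currently best known bounds for $\omega(1,r,1)$ in the ranges of $r$ that arise in the recursion, and the delicate step is to choose the depth of recursion and the degree thresholds that attain the LP optimum, then verify that this optimum lies strictly below the previous bounds at exactly $1.657$ and $2.057$. Establishing that the framework is expressive enough to realize these optima---rather than merely matching the old exponents---is the crux of the argument.
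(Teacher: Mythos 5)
There is a genuine gap. Your first stage (the degree split at threshold $\Delta$, with light vertices handled by $(k-1)$-clique detection inside their neighborhoods) is indeed the paper's first stage, but everything that is supposed to push the exponent below Eisenbrand--Grandoni is missing: as you yourself note, following the split with a vertex-parameterized $k$-clique detector on the $O(m/\Delta)$-vertex heavy subgraph only reproduces $1.668$ and $2.096$, and your proposed remedy --- iterating the split inside the heavy subgraph and ``feeding intermediate $\ell$-clique listings into a rectangular matrix product'' --- is never instantiated. You do not say what the product's dimensions are, which cliques index its rows and columns, how a $4$- or $5$-clique is detected from it, or what cost expression the ``linear program'' balances, and you explicitly flag that showing the framework realizes a strictly better optimum is the unresolved crux. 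Recursing the degree split inside the heavy subgraph does not by itself help (the heavy subgraph has no useful degree bound), and triangle listing plays no role in the paper's $4$- or $5$-clique bounds, so as written the proposal establishes only the previously known exponents.

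The single missing idea is how the heavy phase is executed: after deleting light vertices, the paper does \emph{not} recurse with a vertex-parameterized detector (which forgets that the graph has only $m$ edges); it performs one rectangular matrix multiplication in which one or two dimensions are indexed by the $m$ edges of the original graph and the remaining dimension(s) by the $O(m^{1-x})$ heavy vertices. For $4$-cliques the tripartite decomposition is edge $+$ heavy vertex $+$ heavy vertex, so the heavy phase costs $\MM(m, m^{1-x}, m^{1-x}) = m^{\omega(1,1-x,1-x)}$, balanced against the light phase, which is charged as $\sum_{\deg(v)\le m^{x}}\deg(v)^{\omega}\le \tO\left(m^{1+x(\omega-1)}\right)$ (tighter than your $n\Delta^{\omega}$ accounting, which loses a factor of roughly $\Delta$ and would shift the balance); taking $x\approx 0.478$ with current rectangular bounds gives $1.657$. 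For $5$-cliques the decomposition is edge $+$ edge $+$ heavy vertex: the light phase runs $\cliquedet{4,1}$ in each light neighborhood for $\tO\left(m^{1+x(\omega(1,2,1)-1)}\right)$ total, the heavy phase is $\MM(m,m,m^{1-x}) = m^{\omega(1,1,1-x)}$, and $x\approx 0.469$ gives $2.057$. Keeping the edge list itself as a dimension of a single rectangular product against the heavy vertices is exactly the ingredient your sketch lacks; without it, no choice of recursion depth or thresholds in your scheme produces terms that beat the Eisenbrand--Grandoni bound.
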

We compare the explicit values of~\cite{eisenbrand2004complexity}'s exponent and our improved exponents in Table~\ref{table:improved_det_4_5} in terms of the current bounds for square and rectangular matrix multiplication \cite{VXXZ24}. 
    
    \begin{table}[ht]
        \centering
        \begin{tabular}{c|c|c}
         $k$ &  Previous exponent \cite{eisenbrand2004complexity} & Our exponent (Theorem~\ref{intro:detection-m})\\
         \hline 
         4 & 1.668 & 1.657\\
         5 & 2.096 & 2.057
        \end{tabular}
        \caption{The table contains exponents $c$ such that $4$ and $5$-clique detection is in $O(m^c)$ time. For $\cliquedet{4, 2}$ and $\cliquedet{5, 2}$, the previous exponent was given by $\beta(k) \cdot \beta(k-1)/(\beta(k) + \beta(k-1) - 1)$ \cite{eisenbrand2004complexity}, where $\beta(k)$ is the exponent of $k$-clique detection (as in Hypothesis~\ref{hyp:k_clique}). We give the runtime of their algorithm with the current bounds on square and rectangular matrix multiplication \cite{VXXZ24}. }\label{table:improved_det_4_5}
    \end{table}

\paragraph{Lower bounds for $k$-clique listing.}
Prior works \cite{patrascu2010towards,kopelowitz2016higher,williams2020monochromatic} give fine-grained lower bounds for listing triangles in an $n$-node, $m$-edge graph: triangle-listing requires $n^{1-o(1)}t^{2/3}$ time in $n$-node graphs, and requires $m^{1-o(1)}t^{1/3}$ in $m$-edge graphs time, under standard fine-grained hypotheses. The lower bounds imply tightness of the known algorithms \cite{bjorklund2014listing}  if $t$ is large enough: $t= \Omega(n^{1.5})$ or $t= \Omega(m)$ respectively.

The lower bounds of \cite{patrascu2010towards,kopelowitz2016higher} are under the $3$SUM hypothesis. 
Extending these to lower bounds for $k$-clique listing seems difficult. Instead we focus on the approach of \cite{williams2020monochromatic} who showed hardness under the Exact-Triangle hypothesis which states that finding a triangle of weight sum 0 in an $n$-node edge-weighted graph requires $n^{3-o(1)}$ time in the word-RAM model. The Exact-Triangle hypothesis is one of the most believable hypotheses in fine-grained complexity, as it is implied by both the $3$SUM hypothesis and the APSP hypothesis (see \cite{vsurvey}).

A natural generalization of the Exact-Triangle hypothesis is the Exact-$k$-Clique hypothesis (which coincides with the Exact-Triangle hypothesis for $k=3$):

\begin{hypothesis}[Exact-$k$-Clique hypothesis]\label{hyp:exact_k_clique}
For a constant $k\geq 3$, let $\exactclique{k}$ be the problem that given an $n$-node graph with edge weights in $\{-n^{100k},\dots, n^{100k}\}$, asks to determine whether the graph contains a $k$-clique whose edges sum to $0$.
Then, $\exactclique{k}$ requires $n^{k-o(1)}$ time, on the word-RAM model of computation with $O(\log n)$ bit words. 
\end{hypothesis}

The Exact-$k$-Clique hypothesis is among the popular hardness hypotheses in fine-grained complexity.
Most recently, it has been used to give hardness for the Orthogonal Vectors problem in moderate dimensions \cite{abboud2018more} and join queries in databases \cite{BringmannCM22}. Moreover, due to known reductions  (see e.g. \cite{vsurvey}), the Exact-$k$-Clique hypothesis is at least as believable as  the Max-Weight-$k$-Clique hypothesis which is used in many previous papers (e.g. \cite{AbboudWW14,BackursDT16,BackursT17,LincolnWW18,BringmannGMW20}). 

Under the Exact-$k$-Clique hypothesis we prove lower bounds for $k$-clique listing for all $k \geq 3$. These are the first lower bounds for output-sensitive clique listing for $k \geq 4$.

\begin{theorem}\label{thm:lb_intro_mn}
    For any $k \ge 3$, and $\gamma \in [0, k]$, listing $t$ $k$-cliques in a graph with $n$ vertices, and in a graph with $m$ nodes requires 
    \[
        \left(n^{\frac{2}{k-1}}t^{1-\frac{2}{k(k-1)}}\right)^{1-o(1)} \quad\text{and} \quad\left(m^{\frac{1}{k-2}}t^{1 - \frac{2}{k(k-2)}}\right)^{1-o(1)}
    \]
    time respectively under the Exact-$k$-Clique hypothesis. 
\end{theorem}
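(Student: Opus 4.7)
The plan is to reduce Exact-$k$-Clique on $N$ vertices to the $k$-clique listing problem, generalizing the hashing-based reduction of Williams~\cite{williams2020monochromatic} for the $k=3$ case. Given an Exact-$k$-Clique instance on $N$ vertices, pick a parameter $B$ and use a random linear hash $h$ mapping edge weights into $[B]$. Any zero-sum $k$-clique must have edge hash labels $b_{ij}=h(w_{ij})$ satisfying $\sum_{i<j}b_{ij}\equiv 0 \pmod B$. Enumerate all $B^{\binom{k}{2}-1}$ valid tuples $(b_{ij})_{i<j}$; for each one, construct a $k$-partite sub-instance by splitting the $N$ vertices into $k$ equal parts $V_1,\dots,V_k$ and keeping only the edges between $V_i$ and $V_j$ whose weight hashes to $b_{ij}$. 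Each sub-instance has $\Theta(N)$ vertices, $\Theta(N^2/B)$ edges, and in expectation $\Theta(N^k/B^{\binom{k}{2}})$ $k$-cliques (those with one vertex per part).

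For each sub-instance, invoke the hypothetical fast listing algorithm with target $t := \Theta(N^k/B^{\binom{k}{2}})$, verify the weight sum of every listed clique, and report any zero-sum clique found; this solves the original Exact-$k$-Clique instance. Setting $n = \Theta(N)$ and $m = \Theta(N^2/B)$, the total running time is $B^{\binom{k}{2}-1}$ times the per-instance listing cost. Using the identity $\binom{k}{2}\cdot\tfrac{2}{k(k-1)}=1$, a short exponent calculation shows that
\[
B^{\binom{k}{2}-1}\cdot n^{\tfrac{2}{k-1}}t^{\,1-\tfrac{2}{k(k-1)}} \;=\; B^{\binom{k}{2}-1}\cdot m^{\tfrac{1}{k-2}}t^{\,1-\tfrac{2}{k(k-2)}} \;=\; N^{k},
\]
so a polynomial-factor improvement on either claimed listing runtime would yield an $N^{k-\Omega(1)}$ algorithm for Exact-$k$-Clique, contradicting Hypothesis~\ref{hyp:exact_k_clique}. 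As $B$ sweeps $[1, N^{2/(k-1)}]$, the induced $(n,t)$ and $(m,t)$ pairs cover the parameter ranges of the theorem.

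The main obstacle is controlling the output size across sub-instances: no single sub-instance should contain many more than the expected $t$ cliques, since the listing algorithm is only assumed fast in that regime. This is handled either by a random linear hash combined with standard concentration and a union bound over the $B^{\binom{k}{2}-1}$ buckets, or by a deterministic almost-linear hash in the style of Williams. Secondary but routine technical points include reducing general Exact-$k$-Clique to its $k$-partite version via color-coding, preserving the $N^{o(1)}$ slack in Hypothesis~\ref{hyp:exact_k_clique} through the $\Theta(1)$-factor losses of the reduction, and padding or truncating when $t$ lies near the boundary of the valid range.
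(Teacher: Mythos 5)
Your overall architecture matches the paper's: bucket the edge weights, enumerate the $B^{\binom{k}{2}-1}$ bucket tuples compatible with a zero sum, list roughly $N^k/B^{\binom{k}{2}}$ cliques per bucket, and use the exponent identity to conclude; the identity itself is correct. The genuine gap is the step you dismiss as routine: bounding the number of spurious $k$-cliques in the sub-instance that contains the zero-sum clique. A single random linear hash does not give this. Conditioned on the zero clique $(u_1,\dots,u_k)$ fixing the buckets $b_{ij}$, a candidate clique $(v_1,\dots,v_k)$ survives only if all $\binom{k}{2}$ events ``$h(w(v_i,v_j))$ lands in $b_{ij}$'' occur, and with one shared multiplier these events are highly correlated (e.g.\ if all differences $w(v_i,v_j)-w(u_i,u_j)$ coincide, the events coincide), so the expected number of survivors can be as large as $N^k/B$ rather than $N^k/B^{\binom{k}{2}}$, and ``standard concentration'' has nothing to concentrate around. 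This is exactly where the paper does real work: it hashes via $w'(v_i,v_j)=x\cdot w(v_i,v_j)+y_{v_i,j}+y_{v_j,i}$ with fresh per-node offsets constrained to sum to zero (so zero cliques are preserved), and proves an independence lemma (Lemma~\ref{lem:independent} and Corollary~\ref{cor:independent}) showing the difference variables for any candidate clique are jointly independent, which is what yields the expectation bounds $O(n^\ell/s^{\binom{\ell}{2}})$ and $O(n^k/s^{\binom{k}{2}})$. Your fallback ``deterministic almost-linear hash in the style of Williams'' is precisely the $k=3$ ingredient whose generalization to all $\binom{k}{2}$ edge classes is the main content of the proof, so citing it does not close the gap.

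Two further points, one structural and one secondary. Structurally, your union bound over all $B^{\binom{k}{2}-1}$ buckets is both too strong to obtain and unnecessary: the paper only needs the clique-count bounds for the single bucket $H_0$ containing the zero clique, with probability $1-o(1)$ via Markov, because it runs the ``list a specified number $t$ of cliques'' version of the algorithm (whose runtime does not depend on the true clique count), explicitly counts the $\ell$-cliques of every bucket, skips buckets that are too dense, and pads buckets that are too sparse. Secondarily, you never account for the cost of constructing the sub-instances (distributing edges/cliques to buckets); at the extreme $B=N^{2/(k-1)}$ this naive cost exceeds $N^k$ for the $m$-parameterized bound, and the paper handles it by tying $s$ to $\gamma$ and using the stronger bound $s=O\bigl(n^{\frac{k-\ell}{\binom{k}{2}-\binom{\ell}{2}-1}-\delta}\bigr)$, valid whenever the claimed lower bound is nontrivial.
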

This is a special case of Theorem~\ref{thm:lower_bound} in the main body. 
For $k=3$ this is the same lower bound as previously proven \cite{patrascu2010towards,kopelowitz2016higher, williams2020monochromatic}.
Shortly, we will present algorithms that match our lower bound for all $k,m,n$ and for large $t$ if $\omega =2$, implying that our lower bound is tight. This is in fact {\bf the first output-sensitive lower bound} for $k$-clique listing problems for $k \geq 4$, and the first such lower bound for {\em any } graph pattern of size at least 5. 

\paragraph{Optimal algorithms for 4 and 5-clique listing.} For the special cases of $k = 4, 5$, we give algorithms parametrized by the number of vertices $n$ and number of $k$-cliques $t$ which are  conditionally {\bf optimal} if $\omega = 2$. We prove these results in Corollary~\ref{cor:4-1-opt} and Corollary~\ref{cor:5_1_opt}.

Similar to \cite{bjorklund2014listing}, we state our runtimes in terms of $\omega$. %
In our analysis, we compute rectangular matrix multiplication by truncating it to multiple instances of square matrix multiplication. If one is interested in better numerical values, one could instead use the best upper bound on rectangular matrix multiplication \cite{VXXZ24} in these steps.

\begin{theorem}\label{thm:4_1_opt}
    Given a graph on $n$ nodes, one can list $t$ 4-cliques in $$\tilde{O}\left(n^{\omega + 1} + n^{\frac{4(\omega - 1)(2\omega - 3)}{\omega^2 - 5 \omega + 12}}t^{1 - \frac{(\omega - 1)(2\omega - 3)}{\omega^2 - 5 \omega + 12}}\right)$$ time. If $\omega = 2$, the runtime is $\tilde{O}(n^3 + n^{2/3}t^{5/6}).$
\end{theorem}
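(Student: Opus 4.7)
The plan is to follow and generalize the output-sensitive triangle listing template of~\cite{bjorklund2014listing}, lifted to $4$-cliques through the Ne\v{s}etril--Poljak/Eisenbrand--Grandoni machinery. The base term $n^{\omega+1}$ should arise from a pre-processing pass that essentially runs the $\cliquedet{4,1}$ algorithm developed in Section~\ref{sec:detection} as a black box: this already costs $O(n^{\omega+1})$ in the worst case and is used to compute, for every ordered pair $(u,v)$, a compact description of $N(u)\cap N(v)$ together with the block matrix products that will be needed to drive the listing phase. This handles the regime where $t$ is small.

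For the listing phase itself I would use a two-parameter scheme. First, randomly partition $V$ into $r$ colour classes of size $n/r$; by a Chernoff and union-bound argument, with high probability each of the $r^4$ ordered colour-tuples contains at most $\tilde{O}(t/r^4+1)$ of the 4-cliques we want to output. Next, for each colour tuple $(V_a,V_b,V_c,V_d)$ I would invoke the $\cliquedet{4,1}$ subroutine, aggregated across all $r^4$ tuples into a single block-rectangular matrix multiplication, so that the total detection cost scales like $r^{4-\beta(4)}\cdot n^{\beta(4)}$. For each fruitful tuple I would then enumerate its $4$-cliques by a secondary degree-partitioned procedure: Chiba--Nishizeki-style neighbourhood searches for vertices of low degree in the subgraph, and matrix multiplication on the high-degree part, paying an amount proportional to the per-tuple output (up to polylogarithmic factors).

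The main obstacle is choosing $r$, together with the inner degree threshold, so that the detection term, the per-tuple listing term, and the output term all balance. This reduces to a two-variable optimisation whose optimum is controlled by $\beta(3)=\omega$ and $\beta(4)=\omega(2,1,1)$; after truncating rectangular matrix multiplication to square instances, as announced before the theorem, the specific exponent $\frac{(\omega-1)(2\omega-3)}{\omega^2-5\omega+12}$ is exactly what falls out of equating the three costs. At $\omega=2$ the system collapses and the runtime simplifies to $\tilde{O}(n^3+n^{2/3}t^{5/6})$, which matches the lower bound of Theorem~\ref{thm:lb_intro_mn} and therefore certifies optimality in this regime.

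Finally, to convert this fixed-$t$ procedure into a genuine output-sensitive algorithm that lists up to $t$ $4$-cliques and then stops, I would apply the doubling/equivalence reduction noted in Section~\ref{sec:prelim}, paying only a polylogarithmic overhead on top of the bound above.
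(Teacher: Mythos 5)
Your core load-balancing step is incorrect, and it is where the whole argument lives. Randomly partitioning $V$ into $r$ colour classes does \emph{not} spread the $t$ output cliques so that each of the $r^4$ colour tuples holds $\tilde{O}(t/r^4+1)$ of them: 4-cliques that share vertices are assigned to correlated tuples, so Chernoff plus a union bound does not apply, and the statement itself is false on natural instances (if all $t$ cliques contain a fixed triangle $\{a,b,c\}$, every one of them lands in a tuple agreeing with $(c(a),c(b),c(c))$, so some tuple receives $\Omega(t/r)$ cliques; even a single shared vertex forces $\Omega(t/r^3)$). Without that claim, your balancing of ``detection term, per-tuple listing term, output term'' has no basis, and the assertion that the exponent $\frac{(\omega-1)(2\omega-3)}{\omega^2-5\omega+12}$ ``falls out'' of the optimisation is unverifiable. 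The paper avoids any such uniformity assumption: its Dense routine classifies \emph{edges} by how many 4-cliques they lie in, using $O(\lambda\log n)$ random samples of the $(k-2)$-clique list of size $|L|/\lambda$ so that a $\lambda$-light edge isolates a unique witness with constant probability, retrieves those witnesses through products of the form $\MM(n,n^2/\lambda,n)$ (this is the actual source of the $n^{\omega+1}$ term, not a $\cliquedet{4,1}$ preprocessing pass), and then deletes light edges to leave only $O(t/\lambda)$ edges.

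A second gap is in your per-tuple enumeration. ``Chiba--Nishizeki-style neighbourhood searches for vertices of low degree'' is output-insensitive: brute-forcing triangles in a low-degree neighbourhood costs on the order of $\deg(v)^3$ regardless of how few 4-cliques pass through $v$, and the paper explicitly identifies replacing this brute force by a recursive call to the optimal $\cliquelist{3,1}$ algorithm of Bj\"orklund--Pagh--Vassilevska W.--Zwick (cost roughly $\deg(v)^2+\deg(v)\Delta_4(v)^{2/3}$ when $\omega=2$), aggregated over low-degree vertices via H\"older's inequality, as the key change needed to reach the claimed exponent; similarly, ``matrix multiplication on the high-degree part, paying an amount proportional to the per-tuple output'' is precisely the difficult step and is asserted rather than proved. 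The final doubling reduction to convert a fixed-$t$ procedure into a stopping algorithm is fine and matches Lemma~\ref{lem:equivalence_specify}, but the two gaps above mean the proposed proof does not establish the theorem.
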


Recall that the $4$-Clique hypothesis, which is a special case of  Hypothesis~\ref{hyp:k_clique} when $k = 4$, gives a lower bound of $n^{3-o(1)}$ if $\omega = 2$. Moreover, Theorem~\ref{thm:lb_intro_mn} gives a lower bound of $(n^{2/3}t^{5/6})^{1-o(1)}$. Therefore, this $4$-clique listing algorithm is indeed conditionally optimal.

\begin{theorem}\label{thm:5_1_opt}
    Given a graph on $n$ nodes, one can list $t$ 5-cliques in $$\tilde{O}\left(n^{\omega + 2} + n^{\frac{5(\omega - 1)(2\omega - 3)(3\omega - 5)}{48-47\omega + 16\omega^2 - \omega^3}}t^{1 - \frac{(\omega - 1)(2\omega - 3)(3\omega - 5)}{48-47\omega + 16\omega^2 - \omega^3}}\right)$$
    time. If $\omega = 2$, the runtime is $\tilde{O}(n^4 +n^{1/2}t^{9/10}).$
\end{theorem}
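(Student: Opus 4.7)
My plan is to reduce $5$-clique listing to $4$-clique listing (Theorem~\ref{thm:4_1_opt}) together with $5$-clique detection, via a degree-orientation framework analogous to the one used for $4$-cliques. Orient every edge from the lower-degree endpoint to the higher-degree endpoint, breaking ties by vertex ID. Each $5$-clique then has a unique \emph{sink} vertex $v$ (the last one under this orientation), and the other four vertices lie in the in-neighborhood $N^-(v)$ and induce a $4$-clique inside $G[N^-(v)]$. Hence it suffices, for each candidate sink $v$, to enumerate $4$-cliques of $G[N^-(v)]$ and output them together with $v$.

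Fix a threshold $\Delta$ and call $v$ \emph{light} if $\deg(v)\le \Delta$ and \emph{heavy} otherwise, so there are at most $n^2/\Delta$ heavy vertices. For each light sink, $|N^-(v)|\le \Delta$, and invoking Theorem~\ref{thm:4_1_opt} on $G[N^-(v)]$ costs $\tilde{O}(\Delta^{\omega+1}+\Delta^{\alpha_4} t_v^{\beta_4})$, where $\alpha_4,\beta_4$ are the $4$-clique listing exponents and $t_v$ is the number of $4$-cliques in $G[N^-(v)]$. Summing over light sinks, using $\sum_v t_v\le t$ (sinks are unique) and concavity of $x\mapsto x^{\beta_4}$ with $\beta_4<1$, one gets $\tilde{O}(n\Delta^{\omega+1}+\Delta^{\alpha_4} n^{1-\beta_4}t^{\beta_4})$.

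The main obstacle is the heavy case: $|N^-(v)|$ may be as large as $n$, so the naive per-vertex cost $n^{\omega+1}$ multiplied by the $n^2/\Delta$ heavy sinks can blow up. I would address this with a second layer of degree splitting (or, equivalently, a matrix-multiplication based enumeration restricted to the heavy subgraph): either recurse on Theorem~\ref{thm:4_1_opt} inside a further partition of $N^-(v)$, or exploit the fact that the heavy subgraph has only $O(n^2/\Delta)$ vertices so that the dense matrix-multiplication steps run at a reduced scale. This yields an additional output-sensitive term involving $n^2/\Delta$ together with $t$, and also accounts for the $n^{\omega+2}$ additive term via the cost of $5$-clique detection / the initial heavy matrix-multiplication when $t$ is very small.

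Finally, the threshold $\Delta$ (and any secondary threshold introduced for the heavy case) is chosen to balance all dominant terms. Because the reduction chains together triangle listing, $4$-clique listing (Theorem~\ref{thm:4_1_opt}), and a new top-level balance, the resulting system of equalities is of effective degree three in $\omega$ once one takes logs and clears denominators, and its solution exhibits the denominator $48-47\omega+16\omega^2-\omega^3$, with the numerator $5(\omega-1)(2\omega-3)(3\omega-5)$ arising from multiplying the two $4$-clique exponent factors by an analogous $5$-clique factor. The hardest step is verifying that the heavy-sink analysis indeed produces the cubic balance: I would carry it out by a careful Hölder application across both $|N^-(v)|$ and $t_v$ summed over heavy sinks, using the uniqueness of sinks to control $\sum_v t_v$ and the degree bound to control $\sum_v |N^-(v)|$.
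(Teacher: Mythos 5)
Your light-sink half is essentially the paper's \sparse{} step (the paper also peels low-degree vertices and recursively invokes the $\cliquelist{4,1}$ algorithm of Theorem~\ref{thm:4_1_opt} in their neighborhoods, summing with H\"older exactly as you do), but the heavy case — which is where all the difficulty sits — is not actually handled, and the mechanism you gesture at cannot reach the stated bound. If your only tools are the degree orientation and recursion on the heavy subgraph, then on dense instances every vertex is heavy for any threshold $\Delta \ll n$, and you are eventually forced to pay the full Theorem~\ref{thm:4_1_opt} cost in in-neighborhoods of size up to $n$; with $\omega=2$ this tops out at $\tilde{O}(n^4 + n^{2/3}\cdot n^{1/6}t^{5/6}) = \tilde{O}(n^4+n^{5/6}t^{5/6})$, which is polynomially worse than the claimed $\tilde{O}(n^4+n^{1/2}t^{9/10})$ for all $t = o(n^5)$. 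The missing idea is the output-sensitive \emph{densification} step of the paper's \dense{} procedure (generalizing \cite{bjorklund2014listing}): classify \emph{edges} by the number of $5$-cliques through them, repeatedly ($O(\lambda\log n)$ times) sample a $1/\lambda$-fraction of the $O(n^3)$ triangles and multiply an $n\times n^3/\lambda$ by an $n^3/\lambda \times n$ matrix (with witness-encoding entries) to list every $5$-clique through a $\lambda$-light edge, then delete all light edges so that only $O(t/\lambda)$ edges survive. It is this pruning that makes the subsequent degree-peeling output-sensitive — the H\"older sum in the \sparse{} step is controlled by the surviving edge count $O(t/\lambda)$ rather than by $n\Delta$ — and the alternation of the two steps is what produces the exponents. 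Note also that the additive $n^{\omega+2}$ term is the cost of this rectangular multiplication over all triangles, not a ``$5$-clique detection'' cost, and detection would not suffice since cliques through light edges must be \emph{listed} there.

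A secondary issue: the final paragraph asserts, rather than derives, that the balance yields the numerator $5(\omega-1)(2\omega-3)(3\omega-5)$ and denominator $48-47\omega+16\omega^2-\omega^3$. In the paper these come out of an explicit recursion ($\alpha_5 = \frac{5\alpha_4((5-10)+3\omega)}{4(3+\alpha_4-\omega)}$ with $\alpha_4 = \frac{4(\omega-1)(2\omega-3)}{\omega^2-5\omega+12}$, via Theorem~\ref{thm:k_1_optimal}), together with a verification that the rectangular-MM and low-$t$ terms are dominated once $t \ge n^{\gamma_5}$; without the densification step there is no system of equations whose solution could produce these exponents, so this part of the proposal cannot be repaired by ``a careful H\"older application'' alone.
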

Recall that the 5-Clique hypothesis from Hypothesis~\ref{hyp:k_clique} gives us a lower bound of $n^{4-o(1)}$ if $\omega = 2$. Moreover,  Theorem~\ref{thm:lb_intro_mn} gives a lower bound of $(n^{1/2}t^{9/10})^{1-o(1)}$. Therefore, this $5$-clique listing algorithm is also conditionally optimal. 

\paragraph{Nearly-everywhere optimal algorithms for 4 and 5-clique listing in sparse graphs.} 
In the case of sparse graphs, we obtain conditionally optimal runtimes for $4$ and $5$-clique listing for almost all values of $t$ if $\omega = 2.$ The runtimes are stated in the following theorems and are pictorially depicted in Figure~\ref{fig:4_5_sparse_runtime}.

\begin{figure}[h]
    \centering
    \includegraphics[width=0.4\textwidth]{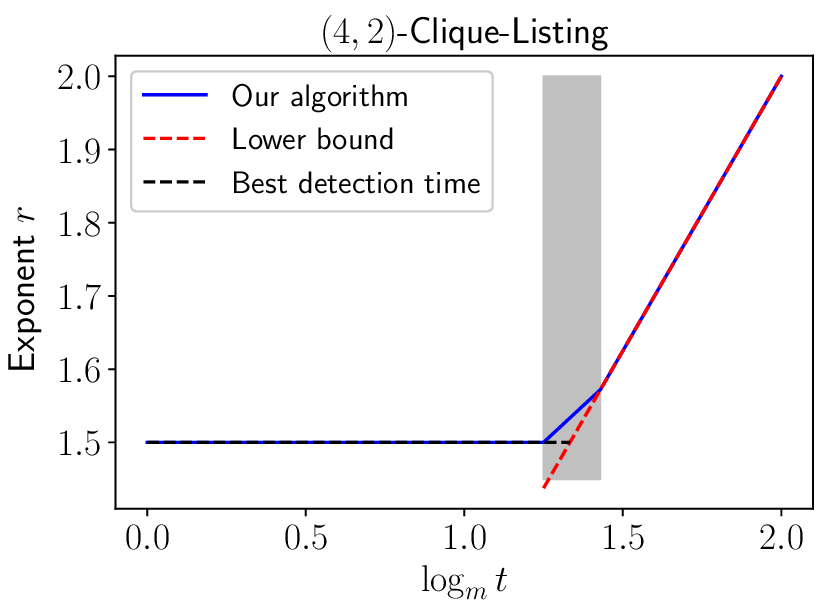}
    \hspace{0.05\textwidth}
    \includegraphics[width=0.4\textwidth]{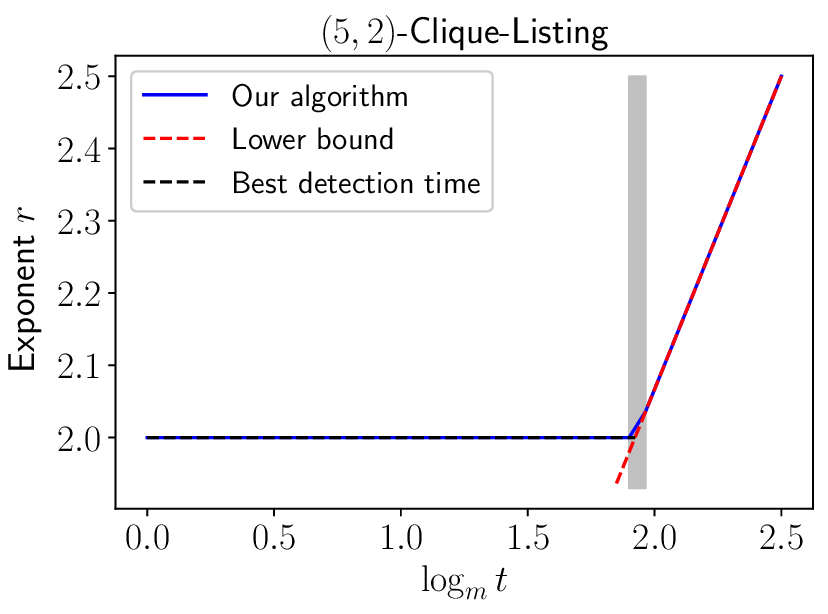}
    \caption{Upper and lower bounds for
    $4$ and $5$-cliques in graphs with $m$ edges, if $\omega = 2$. Here, $r$ is such that one can list 
    $t$ $4$-cliques or $5$-cliques respectively, in $\tilde{O}(m^r)$ time.
    The blue line corresponds to our upper bound from Theorems~\ref{thm:4_2} and \ref{thm:5_2}, the dashed red line denotes our lower bound from Theorem~\ref{thm:lb_intro}, and the dashed black line corresponds to the lower bound from Hypothesis~\ref{hyp:k_clique}. The shaded region highlights the portions of the algorithms which are not conditionally optimal.}
    \label{fig:4_5_sparse_runtime} 
\end{figure}

\begin{theorem}\label{thm:4_2}
    If $\omega = 2$, one can list $t$ 4-cliques in a graph with $m$ edges in time
    \begin{align*}
        \begin{cases}
                \tilde{O}(m^{3/2}) & \text{if $t \leq m^{5/4}$},\\
                \tilde{O}(mt^{2/5})& \text{if $m^{5/4}\leq t \leq m^{10/7}$},\\
                \tilde{O}(m^{1/2}t^{3/4}) & \text{if $t \geq m^{10/7}$}.
        \end{cases}
    \end{align*}
\end{theorem}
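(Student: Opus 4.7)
The plan is to run three different algorithms in parallel (or pick one based on $t$) and take the best, each one handling a different regime. For the small-$t$ regime $t\le m^{5/4}$, I would upgrade the $4$-clique detection algorithm of Theorem~\ref{intro:detection-m} into a listing algorithm; with $\omega=2$ the Eisenbrand--Grandoni-style matrix-multiplication balancing collapses to $\tilde{O}(m^{3/2})$, which is more than enough budget to output the $t\le m^{5/4}$ requested $4$-cliques. For the large-$t$ regime $t\ge m^{10/7}$, I would directly invoke the general $\cliquelist{4,2}$ algorithm from Section~\ref{sec:general-list}: plugging $k=4$, $\ell=2$, and $\omega=2$ into its generic runtime $\tilde{O}(\Delta_\ell^{2/(\ell(k-\ell))}\Delta_k^{1-2/(k(k-\ell))})$ yields precisely $\tilde{O}(m^{1/2}t^{3/4})$.

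The middle regime $m^{5/4}\le t\le m^{10/7}$ is the main new contribution and the plan is a hybrid. Fix a vertex-degree threshold $D=D(m,t)$ to be optimized and call a vertex \emph{heavy} if $\deg(v)>D$ (so there are at most $2m/D$ heavy vertices) and \emph{light} otherwise. Every $4$-clique either (i) contains a light vertex, or (ii) lies entirely inside the heavy subgraph. For (ii), the heavy subgraph has $O(m/D)$ vertices and at most $m$ edges, so we invoke the vertex-parametrized $\cliquelist{4,1}$ algorithm from Theorem~\ref{thm:4_1_opt} on it. For (i), iterate over each light vertex $v$ and list the $4$-cliques through $v$ as triangles in $G[N(v)]$; since $|N(v)|\le D$, we run the output-sensitive triangle lister of~\cite{bjorklund2014listing} inside each neighborhood, which (for $\omega=2$) takes $\tilde{O}(|N(v)|^{2}+|N(v)|s_v^{2/3})$ time per $v$, where $s_v$ is the number of triangles in $G[N(v)]$ (equivalently, the number of $4$-cliques through $v$).

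The hard part will be the aggregation in case (i): the combined bound needs to come out to $\tilde{O}(mt^{2/5})$ rather than the weaker $\tilde{O}(mt^{2/3})$ or $\tilde{O}(mt^{3/4})$ that a single-shot H\"older application to $\sum_v s_v^{2/3}\le n^{1/3}(4t)^{2/3}$ alone would yield. The plan is to invoke H\"older in a coupled way, using the uniform neighborhood bound $|N(v)|\le D$, the handshake-type budget $\sum_v |N(v)|\le 2m$, and $\sum_v s_v\le 4t$ simultaneously, and then to optimize $D$ against the heavy-heavy cost of (ii). The target boundaries $t=m^{5/4}$ and $t=m^{10/7}$ should then fall out exactly as the crossovers between this hybrid bound and the small-$t$ detection bound on one side, and the general $\cliquelist{4,2}$ bound on the other side, yielding the claimed piecewise runtime.
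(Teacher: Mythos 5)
Your middle-regime hybrid is the crux of the claim, and as designed it does not reach $\tilde O(mt^{2/5})$. The problem is the black-box call to Theorem~\ref{thm:4_1_opt} on the heavy subgraph: that bound, $\tilde O\left(n'^3+n'^{2/3}t^{5/6}\right)$ with $n'=O(m/D)$, throws away the fact that the heavy subgraph still has at most $m$ edges, and its $n'^3=(m/D)^3$ term is fatal. Requiring $(m/D)^3\le mt^{2/5}$ forces $D\ge m^{2/3}t^{-2/15}$, while requiring your H\"older term $m^{1/3}t^{2/3}D^{2/3}\le mt^{2/5}$ forces $D\le mt^{-2/5}$; these are compatible only when $t\le m^{5/4}$, i.e.\ only at the left endpoint of the regime. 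Optimizing your four terms balances $m^{1/3}t^{2/3}D^{2/3}$ against $(m/D)^3$ at $D=m^{8/11}t^{-2/11}$ and gives roughly $\tilde O(m^{9/11}t^{6/11})$, which is polynomially worse than $mt^{2/5}$ throughout $m^{5/4}<t\le m^{10/7}$ (at $t=m^{10/7}$ it is $m^{123/77}$ versus the claimed $m^{121/77}$). The paper avoids exactly this loss by not invoking the $n$-parametrized lister as a black box: after peeling vertices of degree at most $x$ (handled, as in your plan, by output-sensitive triangle listing in their neighborhoods, with the same H\"older aggregation), it runs the \dense{} routine parametrized by \emph{both} the $O(m/x)$ remaining vertices and the $\le m$ remaining edges, so the sampled rectangular products cost $\tilde O(\lambda n'^2+n'm)=\tilde O(tm/x^2+m^2/x)$ with $\lambda\approx t/m$; it then lists all $4$-cliques through $\lambda$-light edges, deletes those edges, and recurses into \sparse{} on the $O(t/\lambda)$-edge remainder. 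This interleaved recursion gives $S(m,t)\le\tilde O\left(mx+m^{1/3}t^{2/3}x^{2/3}+mt/x^2+m^2/x\right)$, and $x=m/t^{2/5}$ yields $mt^{2/5}$; the same formula with $x=m^{1/2}$ and $x=m^{1/4}t^{1/8}$ gives the other two regimes, so no case split into separate algorithms is needed.

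Two further gaps in the outer regimes. For $t\ge m^{10/7}$ you cite the generic large-$t$ bound $\tilde O\left(\Delta_\ell^{2/(\ell(k-\ell))}t^{1-2/(k(k-\ell))}\right)$, but for $(k,\ell)=(4,2)$ that theorem is only valid for $t\ge m^{\gamma_{4,2}}=m^{14/9}$, which is strictly above $m^{10/7}$, so the window $m^{10/7}\le t<m^{14/9}$ is not covered by the citation (the paper gets $m^{1/2}t^{3/4}$ there from the recursion above). And for $t\le m^{5/4}$, ``detection runs in $\tilde O(m^{3/2})$, which is enough budget to output $t$ cliques'' is not an argument: a detection routine certifies existence, and turning its matrix-multiplication phase into an enumeration of all $4$-cliques through each edge requires the sampling/witness machinery (equivalently, the listing recursion with $x=m^{1/2}$); a naive detect-and-delete loop costs $\Omega(t)$ detection calls, far over budget.
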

This algorithm matches the lower bound in Hypothesis~\ref{hyp:k_clique} when $t \leq m^{5/4}$, and it matches our lower bound of Theorem~\ref{thm:lb_intro_mn} when $t \geq m^{10/7}$.

\begin{theorem}\label{thm:5_2}
    If $\omega = 2$, one can list $t$ 5-cliques in a graph with $m$ edges in time
    \begin{align*}
        \begin{cases}
                \tilde{O}(m^{2}) & \text{if $t \leq m^{19/10}$},\\
                \tilde{O}(m^{17/18}t^{10/18}) & \text{if $m^{19/10}\leq t \leq m^{55/28}$},\\
                \tilde{O}(m^{1/3}t^{13/15}) & \text{if $t \geq m^{55/28}$}.
        \end{cases}
    \end{align*}
\end{theorem}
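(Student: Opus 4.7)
The plan is to analyze the generalized $\cliquelist{5,2}$ algorithm of Section~\ref{sec:general-list} specialized to $\omega=2$. The algorithm is parameterized, and the three regimes of the theorem should emerge by optimizing the internal parameter(s) as a function of $t$: the two extremes recover the best known detection bound and the optimal large-$t$ listing bound, while the middle piece is an interpolation. The boundaries $t=m^{19/10}$ and $t=m^{55/28}$ are precisely the points where two adjacent expressions coincide, so the three pieces fit together into a continuous piecewise function of $t$.

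The two extreme regimes follow from earlier results. For $t \ge m^{55/28}$, I would invoke Theorem~\ref{thm:intro-k-l-large-t-listing} with $k=5$, $\ell=2$, $\omega=2$, obtaining $\tilde{O}(m^{2/(2\cdot 3)} t^{1-2/(5\cdot 3)}) = \tilde{O}(m^{1/3} t^{13/15})$, exactly matching the third case. For $t \le m^{19/10}$, I would appeal to the 5-clique detection framework of Eisenbrand--Grandoni~\cite{eisenbrand2004complexity}: with $\omega=2$ we have $\beta(5)=4$ and $\beta(4)=3$, so the detection exponent $\beta(5)\beta(4)/(\beta(5)+\beta(4)-1)$ equals $12/6 = 2$, yielding $\tilde{O}(m^2)$ detection. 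The parameterized listing algorithm of Section~\ref{sec:general-list} inherits this as a $t$-independent base cost, so the total is $\tilde{O}(m^2)$ whenever $t \le m^{19/10}$.

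The middle regime $m^{19/10} \le t \le m^{55/28}$ is where the substantive work lies. Here I expect the algorithm of Section~\ref{sec:general-list} to produce $\tilde{O}(m^{17/18} t^{10/18})$ via a degree-threshold split: fix a threshold $D = D(t)$, call a vertex \emph{high} if its degree exceeds $D$ and \emph{low} otherwise (so there are at most $O(m/D)$ high vertices, and each low vertex has at most $D$ neighbors). I would then enumerate 5-cliques by case according to how many of their vertices are high, reducing each case either to a matrix-multiplication--based enumeration on the $O(m/D)$-vertex high subgraph or to a smaller clique-listing task inside a $D$-sized neighborhood of a low vertex, both handled by subroutines of Section~\ref{sec:general-list}. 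The goal is to pick $D(t)$ so that the detection-style cost on the high side balances the output-proportional cost on the low side; the exponents $17/18$ and $10/18$ should drop out of this single balancing equation.

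The main obstacle I anticipate is the case analysis in the middle regime: each way of splitting the five clique vertices between high and low yields a distinct sub-problem, and the \emph{same} threshold $D(t)$ must simultaneously keep all cases within $\tilde{O}(m^{17/18} t^{10/18})$. The assumption $\omega=2$ collapses rectangular matrix-multiplication costs via $\omega(a,b,c) = \max(a+b, b+c, a+c)$, which should make each case analytically clean, but identifying the dominant case and confirming that its optimum $D(t)$ yields precisely the stated exponents, while the other cases stay below, is where the careful bookkeeping lies. Once this is verified, assembling the three regime-specific bounds yields the piecewise expression stated in the theorem.
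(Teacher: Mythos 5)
Your high-level sparse/dense plan is in the right spirit, but as written the proposal has genuine gaps in all three regimes, and it points at the wrong algorithm: Theorem~\ref{thm:5_2} is proved by the refined analysis in Section~\ref{sec:4_5_l_listing} (inside the framework of Section~\ref{sec:upper-bound}), not by the black-box reductions of Section~\ref{sec:general-list}. Concretely: (i) for the third regime you invoke Theorem~\ref{thm:intro-k-l-large-t-listing}, but for $(k,\ell)=(5,2)$ its guarantee only starts at $t \ge m^{\gamma_{5,2}}$ with $\gamma_{5,2} = \frac{5(5^2-2\cdot 5-1)}{2(5^2-5-2-1)} = \frac{35}{17} \approx 2.06$, which is strictly above $\frac{55}{28}\approx 1.96$, so the window $m^{55/28} \le t < m^{35/17}$ is not covered by that theorem; (ii) for the first regime, an $\tilde{O}(m^2)$ \emph{detection} bound does not imply listing up to $m^{19/10}$ cliques in $\tilde{O}(m^2)$, and the Section~\ref{sec:general-list} machinery does not supply such a base cost either (Corollary~\ref{cor:k-2-listing-general} with $s=2$ gives $\tilde{O}(mt^{2/3})$ for $t\ge m^{3/2}$, which is about $m^{2.27}$ at $t=m^{19/10}$); (iii) the middle regime, where the exponents $17/18,10/18$ and both thresholds actually come from, is left as an unexecuted plan.

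The missing idea is a specific refinement of the dense step, not just bookkeeping. The paper runs $\sparse$ with a degree threshold $x$: for each vertex of degree at most $x$ it lists its $5$-cliques by calling the $\cliquelist{4,1}$ algorithm ($\tilde{O}(n^3+n^{2/3}t^{5/6})$ when $\omega=2$) in its neighborhood, and aggregates via H\"older's inequality using $\sum_v \deg(v)\le 2m$ and $\sum_v \Delta_5(v)\le O(t)$, giving a $t$-independent term plus $m^{1/6}t^{5/6}x^{1/2}$. The remaining graph has $O(m/x)$ vertices, and the crucial point is that in the $\dense$ call the inner dimension of the matrix product is the list of \emph{triangles}, whose number is still bounded by $\Delta_3 = O(m^{3/2})$ because the graph has only $m$ edges; this yields a cost of roughly $(m/x)\cdot m^{3/2} + (m/x)^{1/2}t^{9/10}$ instead of the generic $(m/x)^4+\cdots$. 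Balancing $m^{1/6}t^{5/6}x^{1/2}$ against $m^{5/2}/x$ (i.e.\ $x\approx m^{14/9}t^{-5/9}$) gives exactly $m^{17/18}t^{10/18}$; the threshold $t=m^{55/28}$ is where $(m/x)^{1/2}t^{9/10}$ stops being dominated, and $t=m^{19/10}$ is where the whole expression drops to $m^2$ (with $x\approx\sqrt{m}$). Your sketch only mentions a matrix-multiplication enumeration on the $O(m/D)$-vertex high subgraph; without carrying the $m^{3/2}$ triangle bound into that step you recover only the $n$-parameterized bound, which is precisely why the generic threshold is $35/17$ rather than $55/28$ and why neither $m^2$ nor $m^{17/18}t^{10/18}$ falls out of your balancing. (The multi-case split by how many clique vertices are ``high'' is also unnecessary: the paper simply lists all $5$-cliques through low-degree vertices and deletes them before recursing.)
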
 

 This algorithm matches the runtime of the lower bound in Hypothesis~\ref{hyp:k_clique} when $t \leq m^{19/10}$, and it matches our lower bound from Theorem~\ref{thm:lb_intro_mn} when $t \geq m^{55/28}$.

Theorem~\ref{thm:4_2} and Theorem~\ref{thm:5_2} are proved in Section~\ref{sec:4_5_l_listing}.

\paragraph{Optimal algorithms for listing many $k$-cliques.} More generally, we consider the problem of listing $k$-cliques for $k \geq 3$. For instance, consider the problem of listing 6-cliques in sparse graphs with $m$ edges. If we adapt the existing approach for $k$-clique detection~\cite{nesetril1985complexity, eisenbrand2004complexity} and directly reduce it to triangle listing in a graph with $m$ nodes and then use \cite{bjorklund2014listing}, we get an $\tO(m^2+mt^{2/3})$ runtime when $\omega = 2$. In comparison, the lower bound from Theorem~\ref{thm:lb_intro} is $(m^{1/4}t^{11/12})^{1-o(1)}$. When $t$ is close to maximum (as $t \to O(m^3)$), the $\tO(m^2+mt^{2/3})$ runtime is polynomially higher than the lower bound. Therefore, we cannot only rely on such reductions. 
    
Nevertheless, we give a conditionally {\em tight} algorithm for graphs with many $k$-cliques, provided that $\omega=2$ for sufficiently large number of cliques. In particular, the runtime of the algorithm in the theorem below matches the lower bound of Theorem~\ref{thm:lb_intro_mn}. 

\begin{theorem}[Informal]
    \label{thm:intro-k-large-t-listing-mn}
    If $\omega = 2$, 
    there is an algorithm for $k$-clique listing  which runs in time 
        \[ \tilde{O}\left(\min\left\{n^{\frac{2}{k-1}} t^{1 - \frac{2}{k(k-1)}}, m^{\frac{1}{k-2}}t^{1 - \frac{2}{k(k-2)}}\right\}\right)\]
    when $t$ is large. 
\end{theorem}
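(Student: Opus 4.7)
The plan is to derive Theorem~\ref{thm:intro-k-large-t-listing-mn} as a corollary of the more general $\cliquelist{k,\ell}$ result (Theorem~\ref{thm:intro-k-l-large-t-listing}) by instantiating it at $\ell=1$ and $\ell=2$. For the $n$-bound I would invoke the $\cliquelist{k,1}$ algorithm with the trivial input of the vertex list; the general theorem gives runtime $\tilde{O}(\Delta_1^{2/(1\cdot(k-1))} t^{1-2/(k(k-1))}) = \tilde{O}(n^{2/(k-1)} t^{1-2/(k(k-1))})$ once $t$ is large enough. For the $m$-bound, the same framework with $\ell=2$ takes the edge list as input and gives $\tilde{O}(m^{1/(k-2)} t^{1-2/(k(k-2))})$. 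Running both algorithms in parallel and returning whichever produces its output first yields the $\min$ claimed in the theorem.

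The heart of the proof is thus the general $(k,\ell)$ algorithm. I would construct it by induction on $k$, with the base case $k=3$ given by Bj\"orklund, Pagh, Vassilevska W., and Zwick~\cite{bjorklund2014listing}. For the inductive step, the key identity is that the $k$-cliques containing a given $\ell$-clique $C$ are in bijection with the $(k-\ell)$-cliques inside the induced subgraph $G[N(C)]$ on the common neighborhood of $C$, so one natural strategy is to enumerate every $\ell$-clique and recurse. This naive reduction already falls short of the target: a direct H\"older estimate on the per-$C$ runtimes, combined with the bounds $\sum_C |N(C)| \le O(n \cdot \Delta_\ell)$ and $\sum_C (\#\text{$k$-cliques through }C) = O(t)$, yields a strictly worse exponent combination. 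The algorithm therefore supplements the recursive listing with a heavy-light threshold split: $\ell$-cliques participating in many $k$-cliques are processed in bulk via a matrix-multiplication-based detection subroutine that amortizes work across many $\ell$-cliques simultaneously, while the remaining light $\ell$-cliques are handled by the recursive call.

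The main obstacle will be choosing the heavy-light threshold so that the two phases balance exactly at every level of recursion, and then verifying algebraically that the resulting runtime telescopes to $\Delta_\ell^{2/(\ell(k-\ell))} t^{1-2/(k(k-\ell))}$. The sharing of matrix-multiplication work across $\ell$-cliques is precisely what drives the exponent $2/(\ell(k-\ell))$ on the $\Delta_\ell$ term, and a careful case analysis is needed to ensure that the rectangular matrix multiplication at each level matches the $\omega=2$ bound cleanly. Once the general $(k,\ell)$ algorithm is in place, the ``$t$ is large'' hypothesis guarantees that the additive $k$-clique detection term (of order $n^{\beta(k)}$ for the $n$-bound and the analogous $m$-bound) is absorbed into the output-sensitive term, yielding the clean bound stated in Theorem~\ref{thm:intro-k-large-t-listing-mn}.
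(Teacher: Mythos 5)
Your top-level reduction is the same as the paper's: Theorem~\ref{thm:intro-k-large-t-listing-mn} is indeed obtained by instantiating the general $\cliquelist{k,\ell}$ bound at $\ell=1$ (the $n$-bound) and $\ell=2$ (the $m$-bound) and taking the minimum. The problem is the construction you sketch for the general $(k,\ell)$ algorithm, which is where all the content lies, and there is a genuine gap in it. Your heavy/light split assigns the \emph{heavy} $\ell$-cliques (those in many $k$-cliques) to a ``matrix-multiplication-based detection subroutine,'' but detection does not produce a listing, and the $k$-cliques through heavy $\ell$-cliques can account for essentially all $t$ of the output; no amount of amortized detection across heavy $\ell$-cliques yields them. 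In the paper the roles are exactly reversed: the sampling-plus-matrix-multiplication step (the $\dense$ phase of Algorithm~\ref{alg:large_t_sparse_dense}) lists the $k$-cliques through \emph{light} edges, precisely because lightness ($<\lambda$ $k$-cliques per edge) bounds the number $O(\lambda\log n)$ of sampled rounds needed to isolate each completing $(k-2)$-clique; the heavy side is never ``listed by MM'' at all --- instead heaviness is used only to shrink the instance (at most $O(t/\lambda)$ heavy edges, at most $O(\Delta_\ell/x)$ heavy nodes), after which the algorithm recurses with a different parameterization ($\sparse$ handles low-degree nodes by recursive $\cliquelist{k-1,1}$ calls and hands back a graph with few nodes, and for $\ell\ge 2$ Algorithm~\ref{alg:optimal_kl} removes nodes in few $\ell$-cliques via $\cliquelist{k-1,\ell-1}$ and reduces to $\cliquelist{k,1}$). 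Your inversion is not cosmetic: without lightness the sampling trick has no bounded number of repetitions, and without the dense/sparse alternation you do not get the geometric shrinkage in $m$ and $n$ that drives the final exponents.

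A secondary gap is that the quantitative claim --- that the recursion telescopes to $\Delta_\ell^{2/(\ell(k-\ell))}t^{1-2/(k(k-\ell))}$ at $\omega=2$ --- is exactly what you defer (``the main obstacle will be choosing the threshold \dots and verifying algebraically''). You even note yourself that the naive per-$C$ recursion $K_k\ni C \leftrightarrow K_{k-\ell}\subseteq G[N(C)]$ falls short after H\"older; the paper's fix is not a bulk-detection patch but a different recursion structure ($\cliquelist{k-1,\ell-1}$ in node neighborhoods, plus the two-phase $\cliquelist{k,1}$ algorithm as the base), whose analysis in Theorems~\ref{thm:k_1_optimal} and~\ref{thm:k-l-large-t-listing} is the bulk of the proof and also produces the explicit thresholds $\gamma_k,\gamma_{k,\ell}$ that make ``$t$ large'' precise. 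As written, your proposal neither specifies a correct procedure for the heavy side nor carries out the exponent calculation, so it does not establish the stated runtime.
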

We give more explicit bounds on $t$ and the runtimes in terms of $\omega$ in Sections~\ref{sec:k_1_opt} and \ref{sec:k_l_opt}. In other words, we have an algorithm which \textbf{match the lower bound} in Theorem~\ref{thm:lb_intro_mn} for graphs with many $k$-cliques.

\paragraph{General listing algorithm for all $t$.} 
In Section~\ref{sec:general-list}, we give a general black-box approach (by non-trivially adapting previous reductions \cite{nesetril1985complexity, eisenbrand2004complexity}) that uses our (conditionally) optimal algorithm for a large number of $k$-cliques $t$ to obtain a fast algorithm that works for {\em all} $t$. The main advantage of this approach is its simplicity and generality. In particular, we obtain an intuitive and simple analysis of the runtime for all $k,t$. In Section~\ref{sec:general-list}, we show a comparison of our lower bounds and the runtime of our general algorithm in some examples. We illustrate the runtime of the general algorithm for some specific cases in Figure~\ref{fig:general_examples}. 
\begin{figure}[ht]
    \centering
    \includegraphics[width=0.38\textwidth]{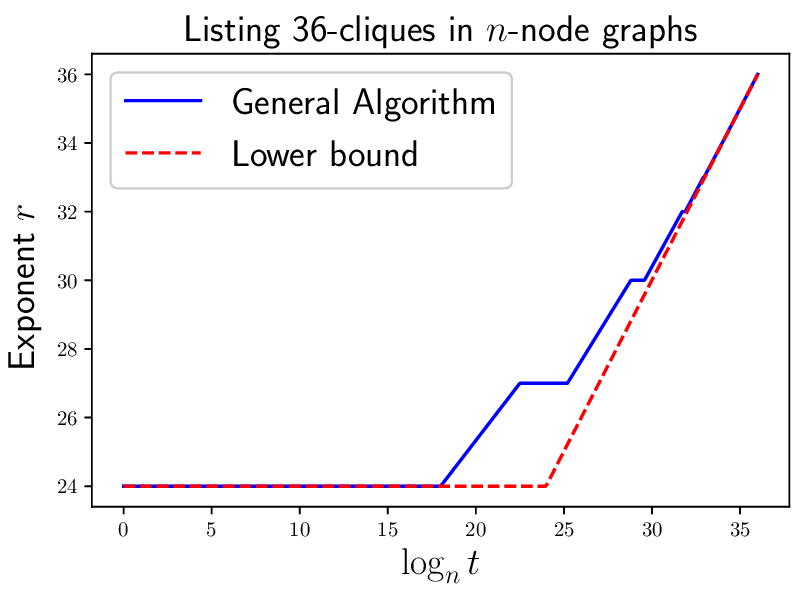}
    \hspace{0.05\textwidth}
    \includegraphics[width=0.38\textwidth]{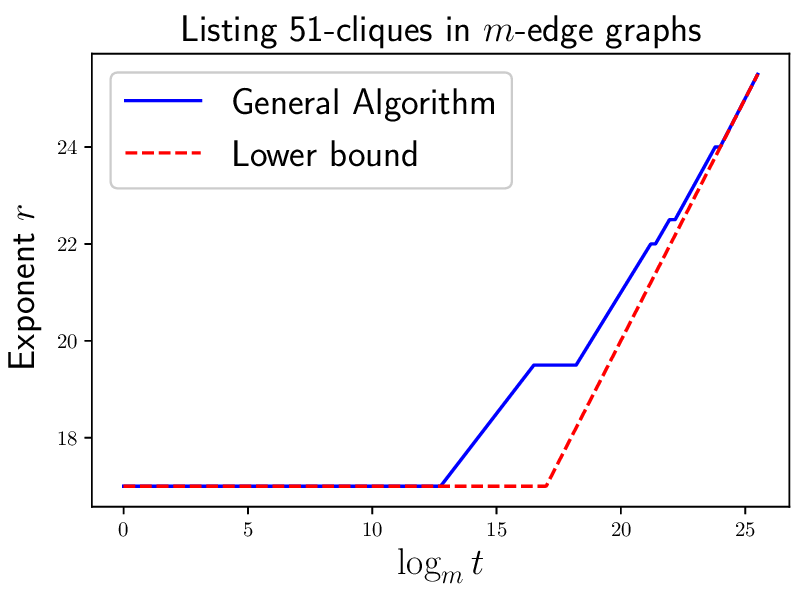}
    \caption{Upper and lower bounds listing 36-cliques in $n$-node graphs, and 51-cliques in $m$-edge graphs if $\omega = 2$. Here, the exponent $r$ is such that one can list $t$ $36$-cliques or $51$-cliques respectively, in $\tilde{O}(n^r)$ and $\tilde{O}(m^r)$ time respectively. The blue line corresponds to our upper bound from the general listing algorithm, and the dashed red line denotes the lower bounds from Hypothesis~\ref{hyp:k_clique} and Theorem~\ref{thm:lb_intro_mn}.}
    \label{fig:general_examples}
\end{figure}

\paragraph{Improved algorithm for 6-clique listing.} We note that our generic algorithm trades simplicity for optimality, and it is not always the best algorithm one can obtain for fixed $k$. 

In Section~\ref{sec:6clique}, we give a more refined algorithm for 6-clique listing in terms of $n$ and $t$ if $\omega = 2$ to illustrate how one might obtain a tighter runtime bound for specific $k$. In Figure~\ref{fig:6_1}, we compare our ``general'' bound, our best bound and our lower bounds to illustrate the improvement in the algorithm. However, since the number of terms and parameters in the runtime increases significantly with $k$, we do not do this refined analysis for all $k$.
 \begin{figure}[ht]
    \centering
    \includegraphics[width=0.45\textwidth]{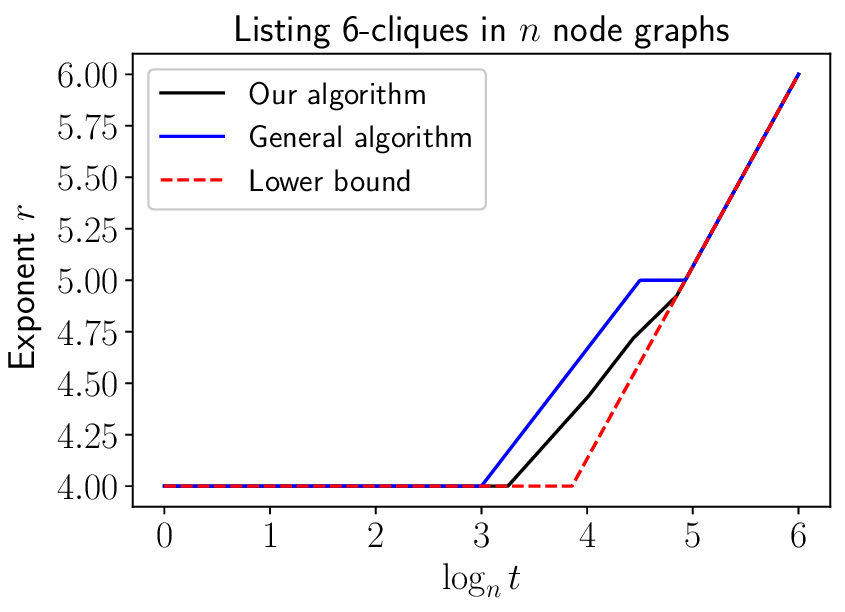}
    \caption{Upper and lower bounds for
    listing 6-cliques in $n$-node graphs if $\omega = 2$. Here, $r$ is such that one can list 
    $t$ $6$-cliques in $\tilde{O}(n^r)$ time.
    The blue line corresponds to the upper bound of our general listing algorithm, the black line corresponds to the upper bound of our refined algorithm, and the dashed red line denotes lower bound from Theorem~\ref{thm:lb_intro_mn} and Hypothesis~\ref{hyp:k_clique}.}
    \label{fig:6_1}
\end{figure}

\paragraph{Listing cliques from smaller cliques.} In fact, our frameworks are much more general and it extends to the problems of finding and listing $k$-cliques given a list of all $\ell$-cliques in the graph, for $\ell \geq 1$. We use the notation $\Delta_\ell$ to denote the number of $\ell$-cliques in the graph.

Let $\cliquedet{k, \ell}$ be the problem of detecting a $k$-clique in a graph $G$, given the list of all $\ell$-cliques in the graph for some $\ell\in \{1,\ldots,k-1\}$. Our framework applies to $\cliquedet{k, \ell}$ for any $k \ge 3, 1 \le \ell < k$. We note that while we only mention $k$-clique detection, we can use well-known techniques to also find $k$-cliques in the same runtime up to a log factor (see Section~\ref{sec:prelim:basic}). Moreover, our algorithm can also be used to count the number of cliques with the same runtime.

In Table~\ref{tab:detection_runtime} we present the exponents of our runtimes for $\cliquedet{k, \ell}$ for small values of $k$ and $\ell$ assuming $\omega = 2$. See Table~\ref{tab:det_exponent} for the runtime in terms of the current bound on $\omega$. 
For $\ell = 1$, we captures the best known $k$-clique detection algorithm and hence matches Hypothesis~\ref{hyp:k_clique}.
Although our general framework is simple, it is actually quite powerful, and allows us to obtain the first improvement in almost 20 years over the runtime of Eisenbrand and Grandoni \cite{eisenbrand2004complexity}, as discussed in Theorem~\ref{intro:detection-m}. 

\begin{table}[ht]
\centering 
\begin{tabular}{c|c|c|c|c|c|c|c|c|c|c}
     \backslashbox{$\ell$}{$k$} & 3 & 4 & 5 & 6 & 7 & 8 & 9 & 10 & 11 & 12 \\
     \hline 
     1 & 2 & 3 & 4 & 4 & 5 & 6 & 6 & 7 & 8 & 8\\
     2 & 4/3 & 3/2 & 2 & 2 & 5/2 & 3 & 3 & 7/2 & 4 & 4\\
     3 & - & 6/5 & 4/3 & 3/2 & 7/4 & 2 & 2 & 7/3 & 8/3 & 8/3\\
     4 & - & - & 8/7 & 6/5 & 7/5 & 3/2 & 8/5 & 9/5 & 2 & 2\\
     5 & - & - & - & 12/11 & 7/6 & 9/7 & 4/3 & 3/2 & 5/3 & 12/7
\end{tabular}
\caption{Our $\cliquedet{k, \ell}$ exponents if $\omega = 2$. The $(k, \ell)$th entry corresponds to the exponent $\alpha$ such that the runtime to detect a $k$-clique is $\tilde{O}(\Delta_\ell^\alpha)$, where $\Delta_\ell$ is the number of $\ell$-cliques. }
\label{tab:detection_runtime}
\end{table}

Let $\cliquelist{k, \ell}$ be the problem of listing all $k$-cliques in a graph $G$, given all the $\ell$-cliques of $G$. Equivalently, it is the problem of listing $t$ $k$-cliques in a graph given all the $\ell$-cliques, where $t$ is an input to the problem (see a proof in Section~\ref{sec:prelim}). 

Under the Exact-$k$-Clique hypothesis we prove  lower bounds for $\cliquelist{k, \ell}$ for \emph{all} $k\geq 3, 1\leq \ell<k$. This is Theorem~\ref{thm:lower_bound} in the main body. In fact, Theorem~\ref{thm:lb_intro_mn} is a special case of this theorem. 

\begin{theorem}\label{thm:lb_intro} 
For any $k \ge 3, 1 \le \ell < k$, and $\gamma \in [0, k/\ell]$, $\cliquelist{k, \ell}$ in a graph with $\Delta_\ell$ given $\ell$-cliques and $t = \tilde{\Theta}(\Delta_\ell^\gamma)$ $k$-cliques requires
$$\left(\Delta_\ell^{\frac{2}{\ell(k-\ell)}} t^{1 - \frac{2}{k(k-\ell)}}\right)^{1-o(1)}$$ time under the  Exact-$k$-Clique hypothesis.
\end{theorem}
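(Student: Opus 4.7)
The plan is to reduce $\exactclique{k}$ on an $N$-vertex graph $H$ to $\cliquelist{k,\ell}$ on a carefully constructed graph with $\Delta_\ell = M$ and $t = \Theta(M^\gamma)$ $k$-cliques for the given $\gamma \in [0,k/\ell]$, generalizing the triangle-listing lower bounds of~\cite{patrascu2010towards,kopelowitz2016higher,williams2020monochromatic} to general $k$ and $\ell$. First, by standard color coding, reduce to the $k$-partite version in which the (possible) zero-weight $k$-clique has exactly one vertex in each color class $V_i$ of size $N/k$. Then partition each $V_i$ into $N/s$ buckets of size $s$, so that each $k$-tuple of buckets gives a sub-instance of $\exactclique{k}$ on $ks$ vertices.

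For each listing call I would form the listing graph $G$ by taking the vertex-disjoint union of $r$ sub-instances, choosing $s = \Theta((t/M)^{1/(k-\ell)})$ and $r = \Theta(M^{k/(k-\ell)} t^{-\ell/(k-\ell)})$ so that $\Delta_\ell(G) = \Theta(M)$ and $\Delta_k(G) = \Theta(t)$. The listing algorithm outputs all $k$-cliques of $G$, each of which is checked for zero edge-sum in $O(k)$ time; iterating across all $(N/s)^k/r = \Theta(N^k/t)$ disjoint groupings ensures every bucket sub-instance of $H$ is covered. The key arithmetic identities $a\ell + bk = k$ and $a\binom{\ell}{2}+b\binom{k}{2}=\binom{k}{2}-1$ (with $a=2/(\ell(k-\ell))$, $b=1-2/(k(k-\ell))$) allow the total reduction time to collapse to $\tilde{O}((N^k/t)\cdot T_0)$; setting $N^k = \Theta(M^a t^b)$ and invoking the Exact-$k$-Clique hypothesis yields $T_0 \geq (M^a t^b)^{1-o(1)}$.

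The main obstacle is that plain bucket decomposition already recovers the bound $T_0 \geq t^{1-o(1)}$, but the overhead $N^k/t$ exactly cancels the desired amplification for $\gamma < k/\ell$. To strengthen this to the full $M^a t^b$ bound over the whole range $\gamma\in[0,k/\ell]$, the plan is to refine the construction by either (i) randomly sub-sampling edges in each sub-instance at a rate $q$ chosen jointly with $s$ and $r$ so that $(\Delta_\ell, \Delta_k) = (M, t)$ holds in expectation, paying a $q^{-\binom{k}{2}}$ factor to preserve the zero-weight $k$-clique, or (ii) supplementing the sub-instance union with disjoint ``decoy'' $\ell$-cliques that inflate $\Delta_\ell$ to $M$ without contributing any $k$-cliques. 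In each case, the delicate part will be to select the parameters $(N,s,r,q)$ so that the reduction overhead is precisely $N^k/(M^a t^b)$, making the final listing lower bound collapse to the claimed $M^a t^b$ rather than to the trivial $\max(M,t)$.
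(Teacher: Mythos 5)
Your proposal has the right target arithmetic (the identities $a\ell+bk=k$ and $a\binom{\ell}{2}+b\binom{k}{2}=\binom{k}{2}-1$ are exactly what make the paper's calculation collapse), but the reduction mechanism is missing the one idea that makes the bound non-trivial, and neither of your proposed fixes supplies it. Bucketing the \emph{vertices} gives you no control over the number of $k$-cliques in the unweighted listing instances: a sub-instance on $ks$ vertices can be complete $k$-partite with no zero-weight clique, so each listing call may be forced to output $\Theta(s^k)$ spurious cliques, and summed over the $\Theta((N/s)^k/r)$ groupings the reduction must touch up to $\Theta(N^k)$ cliques in total --- at which point its running time is $\Omega(N^k)$ and no contradiction with the $N^{k-o(1)}$ hardness of $\exactclique{k}$ can follow. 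Your fix (i), independent edge sub-sampling at rate $q$ with $q^{-\binom{k}{2}}$ repetitions, suffers the identical cancellation: the expected total output over all repetitions is $q^{-\binom{k}{2}}\cdot q^{\binom{k}{2}}\cdot\Delta_k(G)=\Delta_k(G)$, again potentially $N^k$. Fix (ii), padding with decoy $\ell$-cliques, only increases $\Delta_\ell$ at fixed $t$, i.e.\ it enlarges the time budget $\Delta_\ell^{a}t^{b}$ you are allowed to spend, so it can only weaken the lower bound (the paper does pad, but only to normalize $\Delta_\ell$ after sparsity has already been enforced).

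The paper's proof instead partitions the \emph{weights}, not the vertices. It first hashes $w$ to $w'(v_i,v_j)=x\cdot w(v_i,v_j)+y_{v_i,j}+y_{v_j,i}$ over $\FF_p$ (preserving zero-weight $k$-cliques and making the relevant weight differences sufficiently independent, which is the content of its Lemma on independence), splits $\FF_p$ into $s$ intervals, and keeps only those interval combinations $(L_{i,j})$ whose sumset contains $0$. Two things happen simultaneously: each resulting subgraph has, in expectation, only $O(n^{\ell}/s^{\binom{\ell}{2}})$ $\ell$-cliques and $O(n^{k}/s^{\binom{k}{2}})$ spurious $k$-cliques, and --- crucially --- the zero-sum constraint pins down the last interval, so there are only $O(s^{\binom{k}{2}-1})$ subgraphs rather than $s^{\binom{k}{2}}$. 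That single saved factor of $s$ is the entire source of the gain: the hypothetical fast lister then costs $\tO\bigl(s^{\binom{k}{2}-1}\cdot(\Delta_\ell^{a}t^{b})^{1-\eps}\bigr)=\tO\bigl(n^{k}\cdot(s^{\binom{k}{2}-1}/n^{k})^{\eps}\bigr)\le\tO(n^{k-2\eps/(k-1)})$, contradicting the hypothesis, with $s$ tuned as $n^{(k-\gamma\ell)/(\binom{k}{2}-\gamma\binom{\ell}{2})}$ to hit the prescribed $t=\tilde\Theta(\Delta_\ell^{\gamma})$. Structured, correlated weight-interval bucketing cannot be replaced by independent sampling or vertex blocking precisely because independence loses that factor of $s$; without this ingredient your reduction bottoms out at the trivial $\max\{\Delta_\ell,t\}$ bound you yourself anticipate.
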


Moreover, we give a conditionally {\em tight} algorithm for graphs with many $k$-cliques, provided that $\omega=2$. In particular, the runtime of the algorithm in the theorem below matches the lower bound of Theorem~\ref{thm:lb_intro}. 

    \begin{theorem}[Informal]
    \label{thm:intro-k-l-large-t-listing}
    If $\omega = 2$, 
    there exists an algorithm for $\cliquelist{k, \ell}$ which runs in time $$\tilde{O}\left(\Delta_\ell^{\frac{2}{\ell(k-\ell)}}\Delta_k^{1 - \frac{2}{k(k-\ell)}}\right)$$ for $\Delta_k \geq \Delta_\ell^{\gamma_{k, \ell}}$ where
    $\gamma_{k, \ell} = \frac{k(k^2 - 2k - 1)}{\ell(k^2 - k - \ell - 1)}.$
    \end{theorem}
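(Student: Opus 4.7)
The plan is to design an output-sensitive algorithm for $\cliquelist{k,\ell}$ that matches the lower bound of Theorem~\ref{thm:lb_intro} in the large-$\Delta_k$ regime, by generalizing the Bj\"orklund--Pagh--Vassilevska W.--Zwick heavy/light partition to arbitrary $k$ and $\ell$. The algorithm will also be used recursively in its own analysis, so I will treat $\cliquelist{k-\ell,1}$ as already handled (inductively, or as an optimized base case built from the $\ell=1$ algorithms of Theorems~\ref{thm:intro-k-large-t-listing-mn}, \ref{thm:4_1_opt}, \ref{thm:5_1_opt}).

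First I would fix a threshold $\tau$ (to be balanced at the end) and call an $\ell$-clique $C$ \emph{heavy} if it extends to at least $\tau$ $k$-cliques in $G$, and \emph{light} otherwise. A standard double-counting argument bounds the number of heavy $\ell$-cliques by $\binom{k}{\ell}\Delta_k/\tau$. In the \emph{heavy phase}, I iterate over each heavy $C$, compute the common neighborhood $N(C)=\bigcap_{v\in C}N(v)$, and invoke $\cliquelist{k-\ell,1}$ on the induced subgraph to list every $k$-clique containing $C$; each $k$-clique is reported $\binom{k}{\ell}$ times overall, so this phase lists every $k$-clique in which at least one $\ell$-subset is heavy. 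The per-pivot cost is a function of $|N(C)|\le n$ and the number of extensions of $C$; summing via H\"older, using $\sum_C(\text{extensions of }C)\le\binom{k}{\ell}\Delta_k$ and the heaviness lower bound $\tau$, yields a cost I can control by the two quantities $\Delta_k$ and $\tau$.

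In the \emph{light phase} I must list the $k$-cliques whose every $\ell$-subset is light. Here I cannot afford to pivot on a single $\ell$-clique. Instead, I build a tripartite auxiliary graph by splitting $k=a+b+c$ with $a,b,c$ chosen so each part can be enumerated efficiently from the given $\ell$-cliques (with rectangular matrix multiplication used when $\ell\nmid k/3$), whose parts are the light $a$-, $b$-, and $c$-cliques computed recursively by smaller instances of $\cliquelist{\cdot,\ell}$ and whose edges record unions forming a clique of the appropriate size in $G$. The $k$-cliques being sought correspond to triangles in this auxiliary graph, and the light promise ensures each auxiliary ``vertex'' participates in few triangles, so that running a BPVZ-style listing algorithm on the auxiliary graph pays only the output cost plus a matrix-multiplication overhead on the (light) part sizes. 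Because $\omega=2$, this overhead is just the square of the number of auxiliary vertices, and we bound the number of $a$-, $b$-, $c$-cliques by $O(\Delta_\ell^{a/\ell}),O(\Delta_\ell^{b/\ell}),O(\Delta_\ell^{c/\ell})$ (e.g.\ via Kruskal--Katona-type inequalities applied to the hypergraph of $\ell$-cliques).

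Finally I would balance $\tau$ to equate the heavy and light costs and optimize the split $k=a+b+c$; the resulting common value is precisely $\Delta_\ell^{2/(\ell(k-\ell))}\Delta_k^{1-2/(k(k-\ell))}$, and the threshold on $\Delta_k$ under which the matrix-multiplication overhead from the light phase (scaling like $\Delta_\ell^{2k/(3\ell)}$ for a balanced split, modified by the rectangular exponents when $\ell\nmid k/3$) is absorbed into the output term works out to exactly $\Delta_\ell^{\gamma_{k,\ell}}$ with $\gamma_{k,\ell}=k(k^2-2k-1)/(\ell(k^2-k-\ell-1))$. The main obstacle is the light phase: a naive reduction that treats the $\ell$-cliques as abstract vertices and applies $\cliquelist{k/\ell,1}$ directly yields only $\tilde O(\Delta_\ell^{2\ell/(k-\ell)}\Delta_k^{1-2\ell^2/(k(k-\ell))})$, which is polynomially weaker than the target; recovering the correct exponents requires intersecting the common neighborhoods in $G$ itself rather than going through the abstraction, and carefully choosing the split $(a,b,c)$ so that the Kruskal--Katona bounds on $\Delta_a,\Delta_b,\Delta_c$ are tight enough against the output size.
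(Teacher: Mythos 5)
There is a genuine gap, and it sits exactly where you place the main weight of the argument: the light phase. Reducing the ``all $\ell$-subsets light'' cliques to triangle listing in an auxiliary tripartite graph whose parts are the $a$-, $b$-, $c$-cliques (of sizes $O(\Delta_\ell^{a/\ell})$, etc.) and running a BPVZ-style algorithm there unavoidably incurs a term of order $\Delta_\ell^{k/(3\ell)}\,t^{2/3}$ when $\omega=2$ and the split is balanced. Comparing exponents with the target $\Delta_\ell^{\frac{2}{\ell(k-\ell)}}t^{1-\frac{2}{k(k-\ell)}}$: the two coincide only at the extreme $t=\Theta(\Delta_\ell^{k/\ell})$, and whenever $k(k-\ell)>6$ (i.e.\ in essentially every case beyond triangle listing itself) the slope $\tfrac23$ is strictly smaller than $1-\tfrac{2}{k(k-\ell)}$, so the triangle-reduction term is \emph{polynomially larger} than the target throughout the regime $\Delta_\ell^{\gamma_{k,\ell}}\le t\le \Delta_\ell^{k/\ell-\epsilon}$ that the theorem is about. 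This is precisely the obstruction the paper itself points out with the $\tO(m^2+mt^{2/3})$ example for $6$-cliques. The lightness promise does not rescue you: BPVZ's $N\cdot t^{2/3}$ cost does not improve given only a per-vertex triangle bound, and no choice of split $(a,b,c)$ or ``intersecting neighborhoods in $G$'' changes the $t^{2/3}$ exponent, so your final ``obstacle'' paragraph gestures at the right problem but the proposed fix does not resolve it. The paper's algorithm is structurally different: it peels nodes contained in at most $x$ $\ell$-cliques and lists their cliques by recursive $\cliquelist{k-1,\ell-1}$ calls, whose cost is measured by the $(\ell-1)$-clique count ($\le x$) of each neighborhood rather than its node count, and then runs the dense/sparse $\cliquelist{k,1}$ machinery (sampling $(k-2)$-cliques plus rectangular matrix multiplication, itself recursing through $\cliquelist{k-1,1},\dots$) on the residual graph with $O(\ell\Delta_\ell/x)$ nodes; it is this multi-level recursion, not a one-shot reduction to triangle listing, that yields the exponent $1-\frac{2}{k(k-\ell)}$ on $t$.

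Your heavy phase is also not under control as stated: each call of $\cliquelist{k-\ell,1}$ on $N(C)$ costs at least roughly $|N(C)|^{k-\ell-1}$ (the analogue of the $n^{\omega+k-\ell-3}$ term), and $|N(C)|$ can be $\Theta(n)$ while the number of heavy $\ell$-cliques can be $\Theta(\Delta_k/\tau)$, so the total is \emph{not} a function of $\Delta_k$ and $\tau$ alone, contrary to your claim; the paper avoids this by always pivoting on \emph{light} objects, whose recursion cost is capped by the lightness threshold. Finally, the assertion that the balancing ``works out to exactly'' $\gamma_{k,\ell}=\frac{k(k^2-2k-1)}{\ell(k^2-k-\ell-1)}$ is unsubstantiated: in the paper this threshold is inherited from the condition $t\ge(\Delta_\ell/x)^{\gamma_k}$ with $\gamma_k=k-1-\frac{2}{k^2-k-2}$ coming from the $\cliquelist{k,1}$ analysis (plus a H\"older-dominance condition), and there is no reason your matrix-multiplication-overhead condition, tied to a $\Delta_\ell^{2k/(3\ell)}$ term from the flawed light phase, would reproduce that value.
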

Theorem~\ref{thm:intro-k-large-t-listing-mn} is a special case of this theorem.

\subsection{Our Techniques}
In this section, we highlight our main techniques used in the  algorithms and lower bounds. 
\paragraph{Detection algorithms.} 
The previous algorithms for $k$-clique detection in $n$-node graphs \cite{nesetril1985complexity, eisenbrand2004complexity} can be viewed as reductions to triangle detection, as mentioned earlier. Here is how they work  when $k$ is not necessarily divisible by $3$. For some integers $a, b, c \in [1, k]$ where $a+b+c = k$, the algorithm creates a tripartite graph on node parts $A, B, C$ with $n^a, n^b, n^c$ nodes respectively, which represent tuples of $a, b, c$ nodes respectively. It also suffices to keep only the tuples of nodes that form a clique in the original graph. For every node $(u_1, \ldots, u_a) \in A$ and every node $(v_1, \ldots, v_b) \in B$, the algorithm adds an edge between them if and only if $u_1, \ldots, u_a, v_1, \ldots, v_b$ form an $(a+b)$-clique in the original graph. It similarly adds edges between $B, C$ and between $A, C$. It is not difficult to see that there is a triangle in the new graph if and only if there is a $k$-clique in the original graph, so we can simply detect triangles by multiplying an $|A| \times |B|$ matrix with a $|B| \times |C|$ matrix. 

We generalize this approach to $k$-clique detection in terms of the number of $\ell$-cliques for $\ell < k$.

Suppose we are given a list of all $\ell$-cliques in the graph, and we want to find a $k$-clique. Let $a, b, c \in [1, k]$ be as before where $a+b+c = k$.
Let $A$, $B$, and $C$, respectively, be the sets of $a$-, $b$- and $c$-cliques in the graph. We would like to bound their sizes in terms of $\Delta_\ell$. Let us focus on bounding $|A|$; bounding $|B|,|C|$ is done similarly.

For $a \ge \ell$, a (probably folklore) bound shows that $\Delta_a \le O(\Delta_\ell^{a / \ell})$ (we also provide a proof for completeness in Section~\ref{sec:prelim}). 

For $a < \ell$, we set a parameter $\Lambda$ and consider two types of $a$-cliques: ``low-degree" ones that are contained in $<\Lambda$ $\ell$-cliques, and ``high-degree'' ones that are contained in $\geq \Lambda$ $\ell$-cliques. There are at most $O(\Delta_\ell/\Lambda)$ high-degree $a$-cliques.

Consider a low-degree $a$-clique $K$ and its neighborhood consisting of the nodes adjacent to all nodes of $K$. We can {\em recurse} on the neighborhood: find a $(k-a)$-clique, given the list of $(\ell - a)$-cliques formed by excluding $K$ from all $\ell$-cliques that contain $K$. 
We can bound the recursion runtime using the fact that $K$ has low degree. Since we have handled all low-degree $a$-cliques, we can set $A$ to be only the $O(\Delta_\ell/\Lambda)$ high-degree $a$-cliques. Similarly, we can get bounds on $|B|$ and $|C|$.

Finally, following previous $k$-clique detection algorithms \cite{nesetril1985complexity, eisenbrand2004complexity}, we perform a rectangular matrix multiplication between an $|A| \times |B|$ matrix and a $|B| \times |C|$ matrix. By analyzing the recursive steps and setting parameters appropriately, we obtain our detection runtimes. As we show in Examples \ref{ex:clique-det-4-2} and \ref{ex:clique-det-5-2}, our recursion and its analysis are more careful than in prior work, allowing us to obtain improved runtimes for $4$ and $5$-clique detection.

We give some explicit examples of this algorithm in Section~\ref{sec:detection_examples}. We also analyze the asymptotic efficiency of this algorithm in Section~\ref{sec:k-h_detect_bound} and Section~\ref{sec:Cl_l_detectionbound}.

\paragraph{Lower bounds for listing.} 
We obtain our lower bound in Theorem~\ref{thm:lb_intro} for listing from the Exact-$k$-Clique hypothesis. Our lower bound technique can be seen as a generalization of the reduction from Exact Triangle to triangle listing problems in \cite{williams2020monochromatic}. 

We note that there is also a different generalization of the technique of \cite{williams2020monochromatic} that shows a conditional lower bound for the $k$-Set-Intersection problem~\cite{BringmannCM22}. We briefly describe the problem. 
{At a very high level, the lower bound of~\cite{BringmannCM22} applies to the following hypergraph problem: the nodes are partitioned into $k+1$ parts: $V_1, \ldots, V_{k}$ (these correspond to the sets) and $U$ (this corresponds to the universe). There are hyperedges among the nodes in $V_1, \ldots, V_{k}$ (corresponding to $k$-set-intersection queries) and there are edges between $U$ and $V_i$ for $i\in [k]$ (corresponding to elements belonging to each set). Given this hypergraph, the problem asks for each hyperedge, whether its nodes share a common neighbor in $U$ (i.e., whether the sets intersect). As the lower bound of ~\cite{BringmannCM22} is for a problem in a hypergraph with hyperedges of cardinality $>2$, it does not directly apply to our applications. Hypergraph problems are generally harder than their graph counterparts (see e.g. \cite{LincolnWW18}), and there is no easy way to convert a hardness proof for hypergraphs into one for graphs without increasing the instance size significantly. }

Now, we describe the high-level ideas of our reduction.
Without loss of generality, we can assume the input instance of Exact-$k$-Clique is a $k$-partite graph on nodes $V_1 \sqcup \cdots \sqcup V_k$, where each $V_i$ contains $n$ nodes. 
At a high level, we first hash the edge weights so that they behave random enough. For simplicity, we assume all edge weights are independently uniformly at random from $[-n^k, n^k]$ in this overview (we deal with the randomness properly in our proof). Then we split $[-n^k, n^k]$ equally into $s$ contiguous intervals, each of size $O(n^k / s)$ for some parameter $s$. We then enumerate combinations of intervals $(L_{i,j})_{1 \le i < j \le k}$, and consider the subgraph where we only keep edges between $V_i$ and $V_j$ whose weight is in $L_{i, j}$. Note that a subgraph cannot contain a $k$-clique of weight $0$ if $0 \not \in \sum_{1 \le i < j \le k} L_{i, j}$ (we denote the sum of two intervals as the sumset of them). Therefore, we only need to consider combinations of intervals where $0 \in \sum_{1 \le i < j \le k} L_{i, j}$. If we  choose the first $\binom{k}{2}-1$ intervals $(L_{i,j})_{1 \le i < j \le k, (i, j) \ne (k-1, k)}$, the final interval must intersect $-\sum_{1 \le i < j \le k, (i, j) \ne (k-1, k)} L_{i, j}$, which has size $O(\frac{n^k}{s})$. Therefore, there are only $O(1)$ choices for the final interval, and the total number of combinations of intervals we need to consider is $O(s^{\binom{k}{2}-1})$. 

For each combination of intervals, we form the subgraph only containing edges with weights in the intervals, and we list all the $k$-cliques in this subgraph. The expected number of $\ell$-cliques in the subgraph is $O(n^\ell / s^{\binom{\ell}{2}})$ and the expected number of $k$-cliques is $O(n^k / s^{\binom{k}{2}})$. For simplicity, we assume these upper bounds always hold in this overview (instead of only holding in expectation). Also, we can list all the $\ell$-cliques in the subgraphs efficiently, i.e., in nearly linear time in their number, which is faster than $n^k$ when $s$ is small enough. 

Then suppose we have an $O\left(\left( \Delta_\ell^{\frac{2}{\ell(k-\ell)}} t^{1-\frac{2}{k(k-\ell)}}\right)^{1-\epsilon}\right)$ time algorithm for listing all $k$-cliques in a graph with $t$ $k$-cliques and with a given list of $\Delta_\ell$ $\ell$-cliques. We can list all $k$-cliques in all the subgraphs in time
$$\tO\left(s^{\binom{k}{2}-1} \left( \left(n^\ell / s^{\binom{\ell}{2}}\right)^{\frac{2}{\ell(k-\ell)}} \left(n^k / s^{\binom{k}{2}}\right)^{1-\frac{2}{k(k-\ell)}}\right)^{1-\epsilon}\right)=\tO\left(n^{k-k\epsilon} \left(s^{\binom{k}{2}-1}\right)^{\epsilon}\right),$$
which is $\tilde{O}(n^{k-\epsilon'})$ time for  $\epsilon'>0$ for sufficiently small $s$, and  violates the Exact-$k$-Clique hypothesis. 

\paragraph{Listing algorithms for graphs with a large number $t$ of $k$-cliques.} 
Here we discuss how we obtain our optimal algorithm  for $\cliquelist{k, \ell}$ in Theorem~\ref{thm:intro-k-l-large-t-listing}, for all $\ell < k$ and large enough $t$. We give the full algorithm in Section~\ref{sec:upper-bound}. The framework works for all values of $t$, but the runtime is conditionally optimal only for large $t$. We will later explain how to improve upon the framework for small $t$.

As a first step, we obtain output-sensitive algorithms for $k$-clique listing in terms of $n$ ($\ell=1$). We then use these algorithms in a black-box way for $\ell\geq 2$. 

Bj\"{o}rklund, Pagh,  Vassilevska W. and Zwick~\cite{bjorklund2014listing} gave an algorithm for triangle listing using a  \textit{dense-sparse} paradigm. We generalize this algorithm to  $k \geq 4$. Let $t$ be the number of $k$-cliques in the graph which we want to list.

\begin{itemize}
    \item \textbf{Dense algorithm:} When the input graph has many edges, we use sampling and rectangular matrix multiplication to find all the edges that occur in at most $\lambda$ $k$-cliques, for some parameter $\lambda$. We then list all $k$-cliques incident to such edges, and can then delete these edges to obtain a graph with at most $O(t/\lambda)$ edges. We then call the algorithm for sparse graphs.
    \item \textbf{Sparse algorithm:} When the input graph has few edges, we list all $k$-cliques incident to nodes with degree at most $x$ by {\em listing} $(k-1)$-cliques in their neighborhoods, for some parameter $x$. We are then left with a graph with at most $O(m/x)$ nodes, at which point we call the dense algorithm.
\end{itemize}

The key change from the framework of \cite{bjorklund2014listing} is in the sparse algorithm.
There, \cite{bjorklund2014listing} uses brute-force to list triangles through low-degree nodes. We on the other hand, recursively use $\cliquelist{k-1, 1}$ algorithms to list the $(k-1)$-cliques in the neighborhoods of low-degree nodes. This makes our algorithm efficient, but also complicates the analysis significantly.

For $\ell \geq 2$, we exploit recursion even more:
we
recursively use algorithms for both $k$-clique listing in terms of nodes, and $(k-1)$-clique listing in terms of $(\ell-1)$-cliques. At a high level, we first find all nodes that are contained in at most $y$ $\ell$-cliques, for some parameter $y$. Then, in the neighborhoods of such nodes, we can find all $(k-1)$-cliques based on the list of all $(\ell-1)$-cliques in the neighborhood. We can then delete all the low-degree nodes. The resulting graph now only has $O(\Delta_\ell/y)$ nodes. Now, we can call the $k$-clique listing algorithm in terms of $n$.

Because of the extra recursion, the analysis gets more complicated, but we are able to keep the algorithms relatively simple. Thus we get the best of both worlds: simplicity and optimality (at least for large $t$).

The reason why our $\cliquelist{k, 1}$ algorithm is only optimal for large $t$ is that our dense algorithm has an inherent cost of $\Omega(n^{k-1})$ due to the rectangular matrix multiplication that we use. This bottleneck extends to $\cliquelist{k, \ell}$ for all $\ell$ as well since all of these algorithms call $\cliquelist{k, 1}$.

\paragraph{Generalizing the listing algorithm to all values of $t$.}
In Section~\ref{sec:general-list}, we explain how to improve upon our listing framework above when $t$ is smaller. While our general runtime analysis for arbitrary $k,t$ and $\ell$ quickly gets complicated, here we will focus on a small example, to give intuition.

Let us  consider the example of $6$-clique listing in an $n$-node graph $G$ assuming $\omega = 2$. The algorithm in Theorem~\ref{thm:intro-k-l-large-t-listing}  has runtime $\tO(n^{\frac{2}{5}}t^{\frac{14}{15}})$ only when $t \geq n^{4+\frac{13}{14}}$, and otherwise runs in $\tO(n^5)$ time\footnote{Clearly, when $t$ is smaller, the runtime  can only be smaller or equal, so for any $t < n^{4+\frac{13}{14}}$, the runtime of this algorithm is $\tO(n^{\frac{2}{5}}(n^{4+\frac{13}{14}})^{\frac{14}{15}}) = \tO(n^5)$ when $\omega = 2$. } which is worse than the $6$-clique detection runtime $\tO(n^4)$. 

We improve the runtime for $t$ smaller than the threshold of $ n^{4+\frac{13}{14}}$ by instead following the techniques of \cite{nesetril1985complexity, eisenbrand2004complexity}. We create a new graph $G'$ whose nodes correspond to the pairs of nodes of the original graph $G$, i.e. the new graph has $n^2$ nodes. We then add an edge between two nodes $(a, b)$ and $(c, d)$ if  $(a, b, c, d)$ forms a $4$-clique in the original graph.
Now, we run the triangle listing algorithm (in~\cite{bjorklund2014listing} or Theorem~\ref{thm:intro-k-l-large-t-listing}) in the new graph. This has runtime $\tO(n^2t^{2/3})$ when $t \ge (n^2)^{1.5} = n^3$. This also allows us to obtain an algorithm for all $t\leq n^3$, running in time $\tO(n^4)$, the  $6$-clique detection runtime, which is tight under Hypothesis~\ref{hyp:k_clique}.

The corresponding runtime is depicted in blue in Figure~\ref{fig:6_1}.

More generally, for larger $k$, we  create a new graph where the nodes represent $\ell'$-cliques in the original graph. Then, we list $\lceil k/\ell' \rceil$-cliques in the new graph. The best $\ell'$ varies for different $t$, and this gives us the trade-offs as seen in Figure~\ref{fig:general_examples}.

Roughly speaking, the algorithm can be viewed as using different dimensions of rectangular matrix multiplication depending on the value of $t$. For example, in the case of $k = 6$, the algorithm for large $t \geq n^{4 + \frac{13}{14}}$ uses $\tilde{O}(\lambda)$ matrix multiplications of size roughly $n \times n^4/\lambda$ by $n^4/\lambda \times n$ for some parameter $\lambda \geq 1$, and this requires at least $\Omega(n^{5})$ time. For $n^3 \le t \leq n^{4 + \frac{13}{14}}$, the algorithm uses $\tilde{O}(\rho)$ matrix multiplications of size $n^2 \times n^2/\rho$ by $n^2/\rho \times n^2$ for some parameter $\rho \geq 1$, which requires at least $\Omega(n^4)$ time. 

\subsection{Organization} 

In Section~\ref{sec:prelim}, we give necessary definitions and standard algorithms. In Section~\ref{sec:detection}, we show our framework for detecting cliques. In Section~\ref{sec:lower_bounds}, we show our lower bound for listing cliques, proving Theorem~\ref{thm:lb_intro}. In Section~\ref{sec:upper-bound}, we show our optimal algorithm for clique listing in graphs with many $k$-cliques, and we extend this algorithm to graphs with fewer $k$-cliques in Section~\ref{sec:general-list}. Finally, we show a more efficient algorithm for $6$-clique listing in Section~\ref{sec:6clique}.

\section{Preliminaries}
\label{sec:prelim}

\paragraph{Notation.} 
Throughout this paper, we denote the number of nodes in a graph by $n$, the number of edges by $m$, and the number of $\ell$-cliques by $\Delta_\ell$. For an $\ell'$-clique $K$ for some $1 \le \ell' \le \ell$, we use $\Delta_\ell(K)$ to denote the number of $\ell$-cliques containing $K$. For the special case of $\ell = 2$, we use $\deg(v) := \Delta_2(v)$. For integer $k$, we use $K_k$ to denote a $k$-clique. 

For a nonnegative integer $n$, we use $[n]$ to denote $\{1, 2, \ldots, n\}$. 

\paragraph{Matrix multiplication.} We use $\omega < 2.372$ to denote the matrix multiplication exponent~\cite{duan2023, VXXZ24}. For any constants $a, b, c \ge 0$, we use $\omega(a, b, c)$ to denote the exponent of multiplying an $n^a \times n^b$ matrix by an $n^b \times n^c$ matrix. The current best bounds for rectangular matrix multiplication are given by \cite{VXXZ24}. 

We denote by $\MM(A, B, C)$ the runtime of multiplying an $A \times B$ by a $B \times C$ matrix. If $A \leq B \leq C$, we can loosely bound $\MM(A, B, C)$ in terms of $\omega$ as follows:
$$\MM(A, B, C) \leq O\left(A^\omega \cdot \frac{BC}{A^2}\right) = O(A^{\omega - 2}BC).$$
This bound is obtained by splitting the matrix multiplication into $\frac{B}{A} \cdot \frac{C}{A}$ instances of square matrix multiplication of size $A$, and it is in general weaker than the bound in \cite{LU18}. 

\paragraph{H\"{o}lder's Inequality.} To analyze the runtime of our algorithms, we often utilize a reformulation of H\"{o}lder's inequality.

\begin{lemma}[H\"{o}lder's Inequality]
    Given $p, q \in (1, \infty)$ such that $1/p + 1/q = 1$, the following inequality holds for any $x_1, \ldots, x_k, y_1, \ldots, y_k \geq 0$:
$$\sum_{k=1}^n x_ky_k \leq \left( \sum_{k=1}^n x_k^p\right)^{1/p} \left(\sum_{k=1}^n y_k^q\right)^{1/q}.$$
\end{lemma}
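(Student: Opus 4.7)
The plan is to prove Hölder's inequality via the classical two-step approach: first establish Young's inequality for pairs of nonnegative reals, then normalize the two sequences so that the inequality reduces to a termwise application of Young's.

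First I would prove Young's inequality: for any $a, b \ge 0$ and conjugate exponents $p, q \in (1, \infty)$ with $1/p + 1/q = 1$, we have $ab \le a^p/p + b^q/q$. The edge case $a = 0$ or $b = 0$ is immediate; otherwise, using that $\exp$ is convex (equivalently, $\log$ is concave),
\[
ab = \exp\!\left(\tfrac{1}{p}\log a^p + \tfrac{1}{q}\log b^q\right) \le \tfrac{1}{p}\exp(\log a^p) + \tfrac{1}{q}\exp(\log b^q) = \tfrac{a^p}{p} + \tfrac{b^q}{q}.
\]

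Second, I would normalize and apply Young termwise. Set $A = \left(\sum_k x_k^p\right)^{1/p}$ and $B = \left(\sum_k y_k^q\right)^{1/q}$. If $A = 0$ then every $x_k = 0$ and the left-hand side is $0$, so the inequality is trivial; the case $B = 0$ is symmetric. Otherwise, applying Young's inequality to $x_k / A$ and $y_k / B$ for each $k$ gives
\[
\frac{x_k y_k}{AB} \le \frac{1}{p}\cdot \frac{x_k^p}{A^p} + \frac{1}{q}\cdot \frac{y_k^q}{B^q}.
\]
Summing from $k = 1$ to $n$ and using the definitions of $A$ and $B$,
\[
\frac{1}{AB}\sum_{k=1}^n x_k y_k \le \frac{1}{p}\cdot \frac{\sum_k x_k^p}{A^p} + \frac{1}{q}\cdot \frac{\sum_k y_k^q}{B^q} = \frac{1}{p} + \frac{1}{q} = 1,
\]
and multiplying through by $AB$ yields the claim.

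There is no real obstacle here, since Hölder's inequality is a classical result; the only subtlety is being careful with the degenerate case where one of the normalizing constants $A, B$ vanishes, which is handled separately above. The proof of Young's inequality via convexity of $\exp$ is arguably the cleanest route, though one could equivalently derive it from the weighted AM-GM inequality or from the tangent-line characterization of convex functions.
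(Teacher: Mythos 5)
Your proof is correct: the Young's-inequality-plus-normalization argument is the standard derivation of H\"{o}lder's inequality, and you handle the degenerate case $A=0$ or $B=0$ properly. Note that the paper itself offers no proof of this lemma — it is invoked as a classical fact in the preliminaries — so there is no paper argument to compare against; your write-up is a complete and valid justification of the stated inequality.
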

We restate H\"{o}lder's Inequality as follows. This is the version that we use in our runtime analyses. 
\begin{corollary}\label{cor:holders_useful}
    Given $\alpha, \beta \in (0, 1)$ such that $\alpha + \beta = 1$, the following inequality holds for any $x_1, \ldots, x_k, y_1, \ldots, y_k \geq 0$:
    $$\sum_{k=1}^n x_k^\alpha y_k^\beta \leq \left( \sum_{k=1}^n x_k\right)^{\alpha} \left(\sum_{k=1}^n y_k\right)^{\beta}$$
\end{corollary}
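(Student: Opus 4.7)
The plan is to apply the stated form of Hölder's inequality directly with a suitable choice of conjugate exponents. Set $p = 1/\alpha$ and $q = 1/\beta$. Since $\alpha, \beta \in (0,1)$ and $\alpha + \beta = 1$, we have $p, q \in (1, \infty)$ and $1/p + 1/q = \alpha + \beta = 1$, so these exponents are valid for Hölder's inequality.

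Next, I would substitute $a_k := x_k^\alpha$ and $b_k := y_k^\beta$ into the Hölder bound $\sum a_k b_k \le \bigl(\sum a_k^p\bigr)^{1/p}\bigl(\sum b_k^q\bigr)^{1/q}$. The left-hand side becomes exactly $\sum_k x_k^\alpha y_k^\beta$, and on the right we get
\[
\Bigl(\sum_k x_k^{\alpha p}\Bigr)^{1/p}\Bigl(\sum_k y_k^{\beta q}\Bigr)^{1/q} \;=\; \Bigl(\sum_k x_k\Bigr)^{\alpha}\Bigl(\sum_k y_k\Bigr)^{\beta},
\]
using $\alpha p = 1$, $\beta q = 1$, $1/p = \alpha$, and $1/q = \beta$. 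That is precisely the claimed inequality.

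Since the $x_k, y_k$ are assumed nonnegative, there is no issue with the fractional powers $x_k^\alpha$, $y_k^\beta$ being well-defined (with the usual convention $0^\alpha = 0$ for $\alpha > 0$, so any zero terms on the left are handled trivially). There is essentially no obstacle here; the only point worth noting is the bookkeeping that $1/p = \alpha$ and $1/q = \beta$, which makes the exponents on the right come out to $\alpha$ and $\beta$ exactly as desired. This is a one-line consequence of Hölder and is stated as a corollary purely for convenient reference in later runtime analyses.
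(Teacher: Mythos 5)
Your derivation is correct and is exactly the intended route: the paper presents this corollary as an immediate restatement of the preceding H\"older lemma, obtained by the same substitution $p=1/\alpha$, $q=1/\beta$, $a_k=x_k^\alpha$, $b_k=y_k^\beta$ that you spell out. Nothing is missing; the paper simply omits these bookkeeping details.
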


\subsection{Problem Definitions} 
Now, we define the main clique problems that we consider in this paper. 

\begin{definition}[$\cliquedet{k, \ell}$]
Given a graph $G=(V, E)$ and  the list $L$ of all $\ell$-cliques in $G$, decide whether $G$ contains a $k$-clique. 
\end{definition}

\begin{definition}[$\cliquelist{k, \ell}$]
Given a graph $G=(V, E)$ and  the list $L$ of all $\ell$-cliques in $G$, list all $k$-cliques in $G$. 
\end{definition}

In $\cliquelist{k, \ell}$, we use $t$ to denote the total number of $k$-cliques in the graph. However, as we will show in Section~\ref{sec:prelim:basic}, we can equivalently (up to $\tO(1)$ factor) use $t$ to denote the number of $k$-cliques we wish to list. 

\subsection{Basic Clique Listing Algorithms}
\label{sec:prelim:basic}
Next, we give some standard algorithms and reductions. 

\begin{lemma}
    Suppose $\cliquedet{k, \ell}$ can be solved in time $D(\Delta_\ell)$. Then, given the list of all $\ell$-cliques in a graph, one can find a $k$-clique in $\tilde{O}(D(\Delta_\ell))$ time.
\end{lemma}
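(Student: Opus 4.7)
The plan is to reduce finding a $k$-clique to $O(\log n)$ calls of the $\cliquedet{k,\ell}$ oracle via a binary search, which suffices because each oracle call runs in $D(\Delta_\ell)$ time and the value of $\Delta_\ell$ never grows when we restrict attention to induced subgraphs of $G$. The $k$-clique will be constructed one vertex at a time.

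First I would call the oracle on $G$ itself to verify that a $k$-clique exists; if not, return. Otherwise, suppose at the start of iteration $i$ we have already identified vertices $v_1,\dots,v_{i-1}$ forming a clique, with the outer invariant that $G[\{v_1,\dots,v_{i-1}\}\cup N_i]$ contains a $k$-clique, where $N_i:=\bigcap_{j<i}N(v_j)$ (and $N_1:=V$). The crucial observation is that for any $T\subseteq N_i$, the subgraph $G[\{v_1,\dots,v_{i-1}\}\cup T]$ contains a $k$-clique if and only if $T$ itself contains a $(k-i+1)$-clique, because $\{v_1,\dots,v_{i-1}\}$ is a clique whose vertices are all adjacent to every vertex of $T\subseteq N_i$.

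To find $v_i$, I run a binary search that maintains two disjoint subsets $P,Q\subseteq N_i$ with the strengthened invariant that every $k$-clique of $G[\{v_1,\dots,v_{i-1}\}\cup P\cup Q]$ uses at least one vertex from $P$, and that such a $k$-clique exists. Initially $P=N_i,\ Q=\emptyset$. At each step, split $P=P_1\sqcup P_2$ and query the oracle on $G[\{v_1,\dots,v_{i-1}\}\cup P_2\cup Q]$: if the answer is yes, the invariant forces the detected $k$-clique to use $P_2$, so I set $P\leftarrow P_2$; otherwise every $k$-clique in $G[\{v_1,\dots,v_{i-1}\}\cup P\cup Q]$ must use $P_1$, so I set $P\leftarrow P_1$ and $Q\leftarrow Q\cup P_2$. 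Either way, $|P|$ halves after a single oracle call, so after $O(\log n)$ calls $|P|=1$ and its unique element is taken to be $v_i$.

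Each oracle call is on an induced subgraph of $G$, whose $\ell$-clique list is obtained by filtering the original list in $O(\Delta_\ell)$ time, and whose $\ell$-clique count is at most $\Delta_\ell$, giving runtime $D(\Delta_\ell)$ per call. Summing over the $k=O(1)$ iterations, each performing $O(\log n)$ oracle calls, yields total time $\tilde O(D(\Delta_\ell))$.

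The main subtlety, and the step where I expect the proof to require most care, is verifying that the vertex returned by the binary search really extends the current prefix to a $k$-clique, rather than merely participating in some $k$-clique of the queried subgraph. This will follow from the fact that every vertex of $Q\subseteq N_i$ is adjacent to all of $\{v_1,\dots,v_{i-1},v_i\}$: any $k$-clique in $G[\{v_1,\dots,v_{i-1}\}\cup\{v_i\}\cup Q]$ that uses $v_i$ can be augmented by the missing vertices of $\{v_1,\dots,v_{i-1}\}$ to form a clique of size $\geq k$, from which one extracts a $k$-clique using all of $\{v_1,\dots,v_i\}$; this reestablishes the outer invariant for iteration $i+1$.
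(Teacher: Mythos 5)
Your reduction is correct, but it is a genuinely different search-to-decision reduction from the paper's. You build the $k$-clique vertex by vertex: for each new vertex you binary-search over the common neighborhood $N_i$, maintaining the pair $(P,Q)$ with the invariant that some $k$-clique exists in the queried subgraph and every such clique must touch $P$; the augmentation argument at the end (any $k$-clique through $v_i$ inside $\{v_1,\dots,v_{i-1}\}\cup\{v_i\}\cup Q$ extends, using that $Q\subseteq N_i$, to one containing all of $v_1,\dots,v_i$ with its remaining vertices in $N_{i+1}$) correctly restores the outer invariant, and the accounting ($O(k\log n)$ oracle calls, each on an induced subgraph whose $\ell$-clique list is a filtered sublist of size at most $\Delta_\ell$) gives the claimed $\tilde O(D(\Delta_\ell))$ bound, under the same implicit monotonicity of $D$ that the paper uses. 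The paper instead shrinks the whole graph: it partitions $V$ into $k+1$ parts, notes by pigeonhole that deleting some part preserves a $k$-clique, uses $k+1$ oracle calls to find such a part, and recurses on the remaining $k/(k+1)$ fraction of vertices until at most $k$ vertices remain, again $O(\log \Delta_\ell)$ levels of $O(1)$ calls. The paper's argument is slightly simpler to verify since it needs only the pigeonhole observation and never reasons about partial cliques or neighborhoods; your argument is a bit more delicate (the $(P,Q)$ invariant and the final augmentation step carry the load) but constructs the witness clique explicitly one vertex at a time rather than merely shrinking the instance, and uses marginally fewer oracle calls per level. Both are valid, and neither is more general in any way that matters here.
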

\begin{proof}
    Let the input graph be $G = (V, E)$, with list $L$ of all $\ell$-cliques. Without loss of generality, we may assume that $|V| \leq O(\Delta_\ell)$ by deleting all nodes that are not in any $\ell$-clique (since such a node cannot be in a $k$-clique).
    
    If $|V| \leq k$, brute force and check if the graph has a $k$-clique. Otherwise, run $\cliquedet{k, \ell}$ on the graph. If it has a $k$-clique, arbitrarily partition $V$ into $k+1$ sets, $V_1, \dots, V_{k+1}$. Now, for each $i \in [k+1],$ consider the subgraph on node set $V_{-i}=\cup_{j\in[k+1]\setminus\{i\}} V_j$, and run $\cliquedet{k, \ell}$. Note that each such subgraph contains at most $\Delta_\ell$ $\ell$-cliques. 
    For some $i$, it must be the case that the graph on $V_{-i}$ contains a $k$-clique as we partitioned into $k+1$ parts and a $k$-clique has only $k$ nodes. Recurse on exactly one such subgraph $V_{-i}$ on which the detection algorithm returned ``YES''.
    
    Since the depth of this recursion is $O(\log_{\frac{k+1}{k}} |V|) = O(\log \Delta_\ell)$, and we call $\cliquedet{k, \ell}$ on $O(1)$ instances at each step, we have a runtime of $\tilde{O}(D(\Delta_\ell))$ as desired.
\end{proof}

\begin{lemma}
\label{lem:simple_list_ub}
$\cliquelist{k, \ell}$ can be solved in time $\tO(\Delta_\ell^{k/\ell})$.
\end{lemma}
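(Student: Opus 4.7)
The plan is to build up lists $L_\ell = L, L_{\ell+1}, \ldots, L_k$ of cliques incrementally, where $L_i$ consists of all $i$-cliques. At step $i$, for each $K \in L_i$, iterate $v$ over the common neighborhood $N(K) = \bigcap_{u \in K} N(u)$, restricted to $v > \max(K)$ under a fixed canonical vertex ordering, and emit $K \cup \{v\}$ into $L_{i+1}$. The canonical ordering guarantees that every $(i+1)$-clique is produced exactly once, indexed by its lex-smallest $i$-subset together with its maximum element. To compute $N(K)$ efficiently, I would organize the enumeration as a DFS over the tree of canonical-order cliques and maintain the common neighborhoods incrementally: when extending $K_{\mathrm{parent}}$ by $v$, compute $N(K) = N(K_{\mathrm{parent}}) \cap N(v)$ by scanning $N(K_{\mathrm{parent}})$ and checking adjacency to $v$ via hashed adjacency lookups in $\tO(1)$ time.

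To analyze the runtime, I would charge the work of each step against the new cliques produced. Assuming the incremental scheme yields $\tO(1)$ amortized cost per emitted $(i+1)$-clique, the total runtime becomes
\[
\tO\!\left(\sum_{i=\ell}^{k} \Delta_i\right).
\]
Invoking the folklore bound $\Delta_j \le O(\Delta_\ell^{j/\ell})$ for $j \ge \ell$ (established in the preliminaries of Section~\ref{sec:prelim}), the sum is dominated by its last term $\Delta_\ell^{k/\ell}$, yielding the claimed $\tO(\Delta_\ell^{k/\ell})$ runtime.

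The main obstacle will be establishing the $\tO(1)$ amortized cost per produced clique. A naive intersection $N(K_{\mathrm{parent}}) \cap N(v)$ scans all of $N(K_{\mathrm{parent}})$, so per-parent cost summed over children reaches $|N(K_{\mathrm{parent}})|^2$, which can exceed the number of $(i+1)$-cliques actually emitted. To handle this, I would split $i$-cliques by the size of their common neighborhood: parents with small $|N(K_{\mathrm{parent}})|$ admit the straightforward incremental scan within budget, while parents with large $|N(K_{\mathrm{parent}})|$ are few by a pigeonhole argument on the $(i+1)$-clique count and can be processed separately (for instance, by computing all their extensions in one batched pass). The folklore bound on $\Delta_j$ is the central combinatorial ingredient that allows the per-level work to telescope cleanly into $\tO(\Delta_\ell^{k/\ell})$.
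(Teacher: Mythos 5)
Your analysis hinges on the assumption that the incremental scheme can be made to cost $\tO(1)$ amortized per emitted clique, giving a total of $\tO\bigl(\sum_{i=\ell}^{k}\Delta_i\bigr)$ — and this target is not merely the "main obstacle," it is (conditionally) unattainable. Already for $k=3,\ell=2$ it would mean listing \emph{all} triangles in $\tO(m+\Delta_3)$ time; on instances with $\Delta_3=\tilde{\Theta}(m)$ this is $\tO(m)$, contradicting the paper's own lower bound (Theorem~\ref{thm:lower_bound} with $k=3$, $\ell=2$, which gives $(mt^{1/3})^{1-o(1)}=m^{4/3-o(1)}$ under the Exact-Triangle hypothesis) and the 3SUM-based hardness it cites. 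So no refinement of the bookkeeping can establish your assumption, and the fallback you sketch — splitting parents by $|N(K_{\mathrm{parent}})|$ and handling the few heavy parents "in one batched pass" — is exactly where the proof is missing: the obvious instantiations cost on the order of $\sum_K |N(K)|^2$, which is only bounded by quantities like $n\cdot\Delta_{i+1}$ and can exceed $\Delta_\ell^{k/\ell}$. A second concrete failure is the root level when $\ell\ge 3$: the given $\ell$-cliques have no DFS parent, so $N(K)$ must be computed from the graph, and that cost is not controlled by the number of extensions. For instance, take $K_{d,d}$ plus a single edge $\{u,v\}$ inside one side: there are $\Delta_3=d$ triangles $\{u,v,b\}$, every vertex of every triangle has degree $\Theta(d)$, and there are no $4$-cliques at all, yet any adjacency-scanning computation of the $N(T)$'s costs $\Omega(d^2)=\Omega(\Delta_3^2)\gg\Delta_3^{4/3}$, violating the $\cliquelist{4,3}$ budget even though the output is empty. (Also note a circularity: the bound $\Delta_j\le O(\Delta_\ell^{j/\ell})$ you invoke is derived in this paper from the proof of this very lemma, so you would need an independent argument for it, e.g.\ Kruskal--Katona.)

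The paper's proof avoids graph-neighborhood computations entirely and goes in the opposite direction: it inducts \emph{downward} on $\ell$, proving the stronger statement that given any list $L$ of $\ell$-cliques one can list all $k$-cliques \emph{covered} by $L$ in $\tO(|L|^{k/\ell})$ time. It classifies $(\ell-1)$-cliques by how many cliques of $L$ contain them — a quantity read off the list, not the graph. A light $(\ell-1)$-clique, contained in $y\le x$ list cliques, has all its candidate extension vertices determined by those $y$ list entries, so brute force over them costs $\tO(y\,x^{k-\ell})$, totaling $\tO(|L|x^{k-\ell})$; the heavy $(\ell-1)$-cliques number $O(|L|/x)$ and become the list for the recursive call at level $\ell-1$. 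Setting $x=|L|^{1/\ell}$ balances both terms at $|L|^{k/\ell}$. The essential idea your bottom-up scheme is missing is that candidate extensions must be generated from the list itself (via $(\ell-1)$-subcliques), never by intersecting adjacency lists, since the graph may contain a great deal of high-degree structure that forms no $\ell$-cliques and for which the algorithm cannot afford to pay.
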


\begin{proof}

We will prove by induction on the following stronger statement: 
given a list $L$ of $\ell$-cliques in a graph (not necessarily all $\ell$-cliques), one can list all $k$-cliques covered by these $\ell$-cliques in the graph in  time  $\tilde{O}\left(|L|^{k/\ell}\right)$, where a $k$-clique $K$ is covered by a list $L$ of $\ell$-cliques if every $\ell$-clique subgraph of $K$ lies in $L$.

When $\ell = 1$, it suffices to use brute-force to list all $k$-cliques. Now suppose $\ell > 1$. 

First, we find  all $(\ell-1)$-cliques that are contained in at most $x$ (and at least $1$) $\ell$-cliques in the list in $O(|L|)$ time. If an $(\ell-1)$-clique $K$ is contained in $y \leq x$ $\ell$-cliques in the list, then we  can list all $k$-cliques containing $K$ in $\tilde{O}(y^{k-\ell+1}) \le \tilde{O}(y \cdot x^{k-\ell})$ time by brute-force. Over all such $(\ell-1)$-cliques, the total running time is thus $\tilde{O}(|L| x^{k-\ell})$.

The number of $(\ell-1)$-cliques that are contained in at least one of the $\ell$-cliques in $L$ and  are not considered above is $O(|L| / x)$. Let $L'$ be the list of these $O(|L|/x)$ $(\ell-1)$-cliques. 
If a $k$-clique $K$ is not found above, then all of its $(\ell-1)$-clique subgraphs are in the list $L'$, i.e., $K$ is covered by $L'$. 
By induction, we can find the list of all $k$-cliques that are covered by $(\ell-1)$-cliques in $L'$ in $\tilde{O}((|L| / x)^{k/(\ell-1)})$ time. This combined with the $k$-cliques listed in the previous case gives all the $k$-cliques covered by $L$. 

Setting $x = |L|^{1/\ell}$ gives the desired $\tilde{O}(|L|^{k/\ell})$ time, and thus completes the induction. 
\end{proof}

The proof of Lemma~\ref{lem:simple_list_ub} also implies that the number of $k$-cliques in a graph with $\Delta_\ell$ $\ell$-cliques is $O(\Delta_\ell^{k/\ell})$.

\begin{lemma}
\label{lem:list_exponent_monotone}
Fix $1 \le \ell < k$. Suppose there is a $T(\Delta_\ell, x)$ time algorithm for $\cliquelist{k, \ell}$ where the total number of $k$-cliques is $\Theta(\Delta_\ell^x)$. Then for any $x' < x$, $\cliquelist{k, \ell}$ on graphs where the total number of $k$-cliques is $\Theta(\Delta_\ell^{x'})$ can be solved in $O(T(\Delta_\ell, x))$ time. 
\end{lemma}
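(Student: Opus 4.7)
The plan is to reduce the sparse case ($x' < x$) to the dense case ($x$) by artificially inflating the instance so that its $k$-clique count lands in the regime where the hypothesized algorithm is efficient. Concretely, given a graph $G$ with $\Delta_\ell$ $\ell$-cliques and $\Theta(\Delta_\ell^{x'})$ $k$-cliques, I would form $G'$ as the disjoint union of $G$ with a fresh clique $K_N$ on $N := \lceil c\cdot \Delta_\ell^{x/k}\rceil$ vertices for a suitable constant $c$, and extend the input $\ell$-clique list by appending, in $O(\Delta_\ell)$ time by brute force, all $\binom{N}{\ell}$ $\ell$-subsets of $V(K_N)$.

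The key sanity checks concern the counts in $G'$. First, Lemma~\ref{lem:simple_list_ub} (and the remark following it) forces $x \le k/\ell$, since the number of $k$-cliques in any graph is $O(\Delta_\ell^{k/\ell})$. Hence $\binom{N}{\ell} = \Theta(N^\ell) = \Theta(\Delta_\ell^{x\ell/k}) = O(\Delta_\ell)$, so the new $\ell$-clique count satisfies $\Delta_\ell(G') = \Delta_\ell + \binom{N}{\ell} = \Theta(\Delta_\ell)$. On the other hand, $\binom{N}{k} = \Theta(\Delta_\ell^{x})$, which strictly dominates $\Theta(\Delta_\ell^{x'})$ since $x' < x$, so the total number of $k$-cliques in $G'$ is $\Theta(\Delta_\ell^x) = \Theta(\Delta_\ell(G')^x)$. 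Thus $G'$ lies exactly in the regime to which the assumed algorithm applies.

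Now invoke the assumed $T(\cdot, x)$-time algorithm on $G'$, and then filter the output by discarding every $k$-clique that is entirely contained in $V(K_N)$; what remains is precisely the list of $k$-cliques of $G$. Since $\Delta_\ell(G') = \Theta(\Delta_\ell)$, the cost of this step is $T(\Delta_\ell(G'), x) = O(T(\Delta_\ell, x))$ under the mild assumption that $T$ is well-behaved in its first argument (which holds for any polynomial runtime with fixed exponent, as is the case for all of the $T$'s used in this paper). The filtering runs in time proportional to the output, $O(\Delta_\ell^x) = O(T(\Delta_\ell, x))$, and likewise for the $O(\Delta_\ell)$ cost of augmenting the $\ell$-clique list.

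The only subtlety is the degenerate case where $\Delta_\ell$ is so small that $N < k$ and $K_N$ carries no $k$-cliques, so the inflation trick has nothing to inflate with. But in that regime $\Delta_\ell = O(1)$ and the entire problem is solvable in constant time by brute force, so this case is handled separately. Beyond that, the argument rests only on the tight extremal bound $\Delta_k = O(\Delta_\ell^{k/\ell})$, which ensures that the added clique can be made large enough to dominate the $k$-clique count without overflowing the $\ell$-clique budget; I do not foresee any real obstacle here.
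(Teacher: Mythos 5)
Your proposal is correct and follows essentially the same padding argument as the paper: the paper also uses $x\le k/\ell$ from Lemma~\ref{lem:simple_list_ub} and adds a gadget with $\Theta(\Delta_\ell^{x/k})$ nodes per "part" (a complete $k$-partite graph rather than your clique $K_N$, an immaterial difference) so that the $k$-clique count becomes $\Theta(\Delta_\ell^x)$ while the $\ell$-clique count stays $\Theta(\Delta_\ell)$, then runs the assumed algorithm and discards the cliques of the gadget. Your extra remarks (building the augmented $\ell$-clique list, output-size bound on filtering, the degenerate $N<k$ case) are fine and only make explicit what the paper leaves implicit.
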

\begin{proof}
First of all, by Lemma~\ref{lem:simple_list_ub}, $x \le \frac{k}{\ell}$. 
We then add a complete $k$-partite graph to the graph where the number of nodes in each part is $\Delta_\ell^{x/k}$. This way, the number of $k$-cliques in the graph is increased by $\Delta_\ell^{x}$, and the number of $\ell$-cliques is increased by $\Delta_\ell^{\ell x / k} \le \Delta_\ell$. Thus, the number of $k$-cliques in the graph is $(\Delta'_\ell)^x$, where $\Delta'_\ell = \Theta(\Delta_\ell)$ is the new number of $\ell$-cliques in the graph. Therefore, we can run the $T(\Delta_\ell, x)$ time algorithm on the new graph in $\Theta(T(\Delta_\ell, x))$ time. Once we list all the $k$-cliques in the new graph, we can return those that belong to the original graph. 
\end{proof}

Let $f(\Delta_\ell, t)$ be the runtime of $\cliquelist{k, \ell}$ when the graph has (an unknown number of) $t$ cliques in total, and let $g(\Delta_\ell, t)$ be the runtime of listing $\min\{\Delta_k, t\}$ distinct $k$-cliques, given the list of all $\ell$-cliques in the graph and a specified $t$ as input. We assume $f(\tO(\Delta_\ell), \tO(t)) = \tO(f(\Delta_\ell, t))$ and $g(\tO(\Delta_\ell), \tO(t)) = \tO(g(\Delta_\ell, t))$. This is true for all of our algorithms as well as any algorithm that has at most a polynomial dependence on $\Delta_\ell$ and $t$. 

The following lemma shows that $f(\Delta_\ell, t) =  \tilde{\Theta}(g(\Delta_\ell, t))$. Therefore, we use both of these two notions interchangeably for the definition of $\cliquelist{k,\ell}$. In particular, given an instance of $\cliquelist{k, \ell}$ with an unknown number of $k$-cliques, the proof of Lemma~\ref{lem:equivalence_specify} allows us to assume that we know an $2$-approximation of $\Delta_k$, with only $\tO(1)$ loss in the running time.

\begin{lemma}
\label{lem:equivalence_specify}
    $f(\Delta_\ell, t) =  \tilde{\Theta}(g(\Delta_\ell, t))$.
\end{lemma}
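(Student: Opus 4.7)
The plan is to prove the lemma in both directions, $f(\Delta_\ell, t) \le \tO(g(\Delta_\ell, t))$ and $g(\Delta_\ell, t) \le \tO(f(\Delta_\ell, t))$.

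For $f \le \tO(g)$, I would use a doubling trick to convert an algorithm realizing $g$ into one realizing $f$. Starting with $t' = 1$, I iteratively invoke the $g$-algorithm with specified parameter $t'$ and double $t'$ after each call. The halting criterion is that the returned list has strictly fewer than $t'$ entries: since $g$ outputs $\min\{\Delta_k, t'\}$ cliques, this means $\Delta_k < t'$ and we have already listed every $k$-clique. The process terminates after $O(\log \Delta_k)$ iterations, and combining the assumption $g(\tO(\Delta_\ell), \tO(t)) = \tO(g(\Delta_\ell, t))$ with monotonicity of $g$ in $t$, the total runtime $\sum_{i=0}^{O(\log \Delta_k)} g(\Delta_\ell, 2^i)$ collapses to $\tO(g(\Delta_\ell, \Delta_k)) = \tO(g(\Delta_\ell, t))$, using that in the definition of $f$ the parameter $t$ is exactly $\Delta_k$.

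For $g \le \tO(f)$, given $t$ as input, I want to list $\min\{\Delta_k, t\}$ cliques in time $\tO(f(\Delta_\ell, t))$. If $\Delta_k \le t$, simply running the $f$-algorithm outputs all $\Delta_k = \min\{\Delta_k, t\}$ cliques in time $f(\Delta_\ell, \Delta_k) \le \tO(f(\Delta_\ell, t))$ by the polynomial assumption and monotonicity. If $\Delta_k > t$, I run the $f$-algorithm and terminate once $t$ cliques have been produced; because all the $\cliquelist{k, \ell}$ algorithms we design report cliques as they are discovered, the time to produce the first $t$ cliques is at most $\tO(f(\Delta_\ell, t))$, matching the behavior of the algorithm on a graph that has only $t$ total $k$-cliques.

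The main obstacle is making the $\Delta_k > t$ case of the second direction fully rigorous for an arbitrary $f$, since the incremental-output property is not part of the lemma statement. A robust alternative is random edge-sampling: retain each edge independently with probability $p$ chosen so that the expected number of surviving $k$-cliques is $\tilde\Theta(t)$, run $f$ on the sampled subgraph, and output the cliques found there. Since $\Delta_k$ is unknown, this is combined with doubling over geometric guesses of $\Delta_k$; concentration ensures the sampled graph has $\tilde\Theta(t)$ $k$-cliques and at most $\Delta_\ell$ $\ell$-cliques with high probability, so the runtimes telescope to $\tO(f(\Delta_\ell, t))$ by the polynomial-growth assumption on $f$. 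This fallback avoids any assumption on the internal structure of the $f$-algorithm at the cost of a slightly more involved analysis.
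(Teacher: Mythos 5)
Your first direction ($f=\tO(g)$) is fine and is essentially the paper's argument: the paper runs the $O(\log n)$ geometric guesses of $t$ in parallel rather than sequentially, which also avoids your unstated appeal to monotonicity of $g$ in $t$. The gap is in the direction $g=\tO(f)$. Your primary plan---run the $f$-algorithm and stop after $t$ cliques have been emitted---is not valid for a black-box $f$, and in fact fails even for ``incremental'' algorithms: on a graph with $\Delta_k\gg t$ cliques, nothing bounds the time before the first $t$ outputs by $f(\Delta_\ell,t)$ (the algorithm may first perform matrix products whose cost grows with $\Delta_k$). You also branch on whether $\Delta_k\le t$, but $\Delta_k$ is unknown; the paper handles this by running the $f$-algorithm with a time budget $\tO(f(\Delta_\ell,2^kt))$ and using the padding argument of Lemma~\ref{lem:list_exponent_monotone} to certify that a run exceeding the budget implies $\Delta_k>2^kt$. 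Your proposal never produces such a certificate, so even in the fallback you have no sound stopping rule for the ``doubling over guesses of $\Delta_k$''.

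The fallback itself has a genuine flaw: the number of $k$-cliques surviving independent edge sampling does not concentrate. Take a graph in which every $k$-clique contains one fixed edge $e=(u,v)$ (e.g.\ $u,v$ adjacent and both joined to a complete $(k-2)$-partite graph $T$, so the $k$-cliques are exactly $\{u,v\}$ together with the $\Theta(N^{k-2})$ $(k-2)$-cliques of $T$). With $p$ chosen so that $\Delta_k p^{\binom{k}{2}}=\tilde\Theta(t)$ and $\Delta_k/t$ polynomially large, $p=o(1)$, and with probability $1-p=1-o(1)$ the edge $e$ is deleted and the sampled graph contains \emph{no} $k$-clique at all; the expectation $\tilde\Theta(t)$ is carried by the rare event that $e$ survives, in which case far more than $t$ cliques survive (which also breaks your runtime bound unless you again add a budget-and-certify step). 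So ``concentration ensures the sampled graph has $\tilde\Theta(t)$ $k$-cliques'' is false in the worst case, and the second direction is not established. The paper avoids randomness here altogether: it blows $G$ up to a $k$-partite graph (so each original clique appears $k!$ times), splits each part in half, uses pigeonhole to find one of the $2^k$ induced subgraphs containing at least $(k!)t$ cliques, runs the $f$-algorithm there with a time budget, and recurses on that subgraph to depth $O(\log n)$, using Lemma~\ref{lem:list_exponent_monotone} at every level to justify recursing when the budget is exceeded. Some deterministic self-reduction of this kind (or a substantially more careful subsampling argument) is needed to close your gap.
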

\begin{proof}
    We first show $f(\Delta_\ell, t) =  \tO(g(\Delta_\ell, t))$. Let $\mathcal{A}$ be an algorithm for listing a specified number of $k$-cliques. 
    Given an $n$-node graph $G$ and the list of $\Delta_\ell$ cliques, we run $O(\log n)$ instances of $\mathcal{A}$ in parallel. More specifically, we specify these instances to list $2^0, 2^1, \ldots, 2^{\lceil \log(n^k+1)\rceil}$ $k$-cliques respectively. We wait until one of the instances finishes listing all $k$-cliques in the graph. Suppose $t$ is the actual number of $k$-cliques in the graph, and we specify $\mathcal{A}$ to list $2^{\lceil \log t\rceil}$  $k$-cliques, then it will finish within $g(\Delta_\ell, O(t))=\tO(g(\Delta_\ell, t))$ time. 
    Since we run $O(\log n)$ instances in parallel, the overall running time is $\tO(g(\Delta_\ell, t))$. 

    Next, we show $g(\Delta_\ell, t) =  \tO(f(\Delta_\ell, t))$. Let $\mathcal{B}$ be an algorithm for $\cliquelist{k, \ell}$. Given a graph $G = (V, E)$, a list of all $\ell$-cliques and a parameter $t$, we need to list $\min\{t, \Delta_k\}$ $k$-cliques in $\tO(f(\Delta_\ell, t))$ time. First, we run $\mathcal{B}$ for $\tO(f(\Delta_\ell, 2^k t))=\tO(f(\Delta_\ell, t))$ time. By Lemma~\ref{lem:list_exponent_monotone}, if $\Delta_k \le 2^k t$, $\mathcal{B}$ will finish in $\tO(f(\Delta_\ell, 2^k t)$) time, and we are done. Now, we assume the number of $k$-cliques in $G$ is at least $2^k t$.
    
    We create a $k$-partite graph $G'=(V',E')$ as follows. Let $V' = V_1 \sqcup \dots \sqcup V_k$ where each $V_i$ is a copy of $V$. Let $v_i$ be the copy of node $v \in V$ in $V_i$. Add edges $(u_i, v_j)$ between nodes $u_i \in V_i$ and $v_j \in V_j$ if and only if $i\neq j$ and $(u, v) \in E$. Clearly, the number of $k$-cliques in $G'$ is at least $2^k (k!) t$, and we need to list $(k!) t$ distinct $k$-cliques in $G'$ in order to produce $t$ distinct $k$-cliques in $G$. Also, the number of $\ell$-cliques in $G'$ is $O(\Delta_\ell)$. Then we partition each $V_i$ arbitrarily into two sets $V_{i, 0}$ and $V_{i,1}$ of size $n/2$. We run $\mathcal{B}$ on each of the $2^k$ induced subgraphs on the sets $V_{1,b_1}, V_{2, b_2}, \dots, V_{k, b_k}$, where $b_i \in \{0, 1\}$ for $\tO(f(O(\Delta_\ell), 2^k (k!) t)) = \tO(f(\Delta_\ell, t))$ time. By the pigenhole principle, one of the subgraphs contain at least $(k!) t$ $k$-cliques. If $\mathcal{B}$ finishes on that subgraph, we are done. Otherwise, $\mathcal{B}$ does not finish on that subgraph, and by Lemma~\ref{lem:list_exponent_monotone}, that subgraph must have more than $2^k (k!) t$ distinct $k$-cliques, so we can recurse on that induced subgraph. Overall, the running time is $\tO(f(\Delta_\ell, t))$ because the recursion depth is $O(\log n)$.
\end{proof}

\cite{bjorklund2014listing} gave similar reductions from listing a specified number of  $t$ triangles to listing all $\Delta_3$ triangles in $n$-node or $m$-edge graphs. Their reduction is more efficient than ours when $t$ is much smaller than $\Delta_3$. However, their reduction requires an algorithm for {\em counting} the number of triangles. We instead provide a black box reduction that does not rely on counting, that works for arbitrary $k,\ell$, and is
 more self-contained and efficient enough for our purpose.

\section{Detecting Cliques}\label{sec:detection}

In this section, we first describe our algorithm for $\cliquedet{k, \ell}$, and then analyze its running time in some interesting cases. 

Throughout this section, we use $g(k, \ell)$ to denote our algorithm's running time exponent on the number of $\ell$-cliques of $\cliquedet{k, \ell}$, i.e., our algorithm for $\cliquedet{k, \ell}$ runs in $\tO(\Delta_\ell^{g(k, \ell)})$ time. 

\subsection{General Detection Framework}
Now we describe a generic algorithm for $\cliquedet{k, \ell}$ for $k \ge 3$ (for $k=2$, we trivially list all edges in the graph, so $g(2, 1) = 2$) in Algorithm~\ref{alg:generic_detection}.

\begin{algorithm}
\caption{Generic $
\cliquedet{k, \ell}$ algorithm.}\label{alg:generic_detection}
\begin{algorithmic}
\item \textbf{Input:} Graph $G = (V, E)$ and the list $L$ of all $\ell$-cliques. 
\item \textbf{Output:} Output {\sc yes} if $G$ contains a $k$-cliques, and {\sc no} otherwise.
\item \textbf{The Algorithm:} 
\begin{itemize}
    \item Let integers $k \ge a \geq b \geq c \ge 1$ be such that $k = a + b + c$ (the algorithm chooses $a, b, c$ optimally). Then goal is then to bound the number of $d$-cliques for $d \in \{a, b, c\}$. 
    \begin{itemize}
        \item If $d \ge \ell$, we can use Lemma~\ref{lem:simple_list_ub} to upper bound the number of $d$-cliques with $S_d = \tilde{\Theta}(\Delta_\ell^{d/\ell})$, and add these $d$-cliques to a list $L_d$ in the same time. 
        \item If $d < \ell$, for every $d$-clique $K$ with $\Delta_
        \ell (K) \leq \Delta_\ell^{x_d}$ (for some parameter $x_d \in [0, 1]$ to be chosen), we check if $K$ is in a $k$-clique by recursively running $\cliquedet{k-d, \ell-d}$ in its neighbourhood. Then, let $L_d$ denote the set of remaining $d$-cliques. Then, $S_{d} := |L_d|  = \Theta(\Delta_\ell^{1-x_d})$. The running time of this step is 
        \begin{align*}
            \tO\left(\sum_{\substack{K: d\text{-clique}\\\Delta_\ell(K) \le \Delta_\ell^{x_d}}} \Delta_\ell(K)^{g(k-d, \ell - d)}\right) & \le \tO\left(\sum_{\substack{K: d\text{-clique}\\\Delta_\ell(K) \le \Delta_\ell^{x_d}}} \Delta_\ell(K) \cdot \Delta_\ell^{x_d (g(k-d, \ell - d) - 1)}\right)\\
            & \le \tO\left(\Delta_\ell^{1+x_d (g(k-d, \ell - d) - 1)}\right). 
        \end{align*}
    \end{itemize}
    \item Finally, we conduct a usual matrix multiplication of dimensions $S_a, S_b, S_c$ in time $\MM(S_a, S_b, S_c)$ as follows. If we find a $k$-clique, output {\sc yes}, otherwise we output {\sc no.}
    \begin{itemize}
        \item Create a matrix $X$ whose rows are indexed by $a$-cliques in $L_a$ and columns are indexed by  $b$-cliques in $L_b$. Set $A[K_a, K_b] = 1$ if  the nodes of $K_a$ and $K_b$ form an $(a+ b)$-clique, and $0$ otherwise.
        \item Create a matrix $Y$ whose rows are indexed by $b$-cliques in $L_b$ and columns are indexed by $c$-cliques in $L_c$, and set the entries similarly.
        \item Compute $Z = XY$. For each pair of  remaining $a$-clique $K_a$ and $c$-clique $K_c$ that form an $(a + c)$-clique, check if $Z[K_a, K_c] > 0$. If such an entry exists, output {\sc yes}. Otherwise, output {\sc no}.
    \end{itemize}
    
\end{itemize}
\end{algorithmic}
\end{algorithm}

The correctness of this algorithm is immediate. 
We also remark that the algorithm can  be used to count the number of $k$-cliques, by replacing all the recursive calls with the counting version of the algorithm, using the matrix multiplication to count the number of $k$-cliques in the remaining graph, and properly summing up and scaling the numbers. Clearly, the counting version of the algorithm will have the same running time. 

\subsection{Examples}
\label{sec:detection_examples}
Let us give some explicit examples to illustrate the algorithm. 

\paragraph{$\cliquedet{k,1}$.} The simplest example is $\cliquedet{k,1}$ for $k \ge 3$. Let $\lfloor k/3 \rfloor \leq c \leq b \leq a \leq \lceil k/3\rceil$ be integers such that $a+b+c = k$, which is one of the possible choices of $a, b, c$ for the algorithm.
Note that $c = \lfloor k/3 \rfloor, b =  \lceil (k-1)/3\rceil, a = \lceil k/3 \rceil$. Since $a, b, c \ge \ell = 1$, the algorithm would choose to use Lemma~\ref{lem:simple_list_ub} to bound the number of cliques of sizes $a, b, c$ as $n^a, n^b, n^c$ respectively. Thus, the running time of the algorithm is $\tO(n^{\omega(a, b, c)}) = \tO(n^{\beta(k)})$, matching the previous running time \cite{eisenbrand2004complexity}.

\paragraph{$\cliquedet{k,\ell}$ for $\ell \leq \lfloor k / 3\rfloor.$}
Similar as above, let $c = \lfloor k/3 \rfloor, b =  \lceil (k-1)/3\rceil, a = \lceil k/3 \rceil$ and the algorithm would choose to use Lemma~\ref{lem:simple_list_ub} to bound the number of cliques of sizes $a, b, c$. Thus, the running time of the algorithm is  $\tO(\Delta_\ell^{\omega(a/\ell, b/\ell, c/\ell)}) \leq \tO(\Delta_\ell^{\omega(\lceil k/3 \rceil,   \lceil (k-1)/3\rceil, \lfloor k/3 \rfloor)/\ell})$. 
This running time is optimal barring improvements for $\cliquedet{k, 1}$:

\begin{table}[ht]
    \centering
    \begin{tabular}{c|c|c|c|c|c|c|c|c|c|c}
         \backslashbox{$\ell$}{$k$} &  3 & 4 & 5 & 6 & 7 & 8 & 9 & 10 & 11 & 12\\
         \hline 
         1 & 2.372 & 3.251 & 4.086 & 4.744 & 5.590 & 6.397 & 7.115 & 7.952 & 8.745 & 9.487\\
         2 & 1.407 & 1.657 & 2.057 & 2.372 & 2.795 & 3.199 & 3.558 & 3.976 & 4.373 & 4.744\\
         3 & - & 1.248 & 1.422 & 1.668 & 1.918 & 2.149 & 2.372 & 2.651 & 2.915 & 3.163\\
         4 & - & - & 1.174 & 1.298 & 1.487 & 1.657 & 1.840 & 2.028 & 2.205 & 2.372\\
         5 & - & - & - & 1.130 & 1.232 & 1.377 & 1.503 & 1.660 & 1.811 & 1.953
    \end{tabular}
    \caption{Our $\cliquedet{k, \ell}$ exponent for various values of $k, \ell$ with the best current bound on $\omega$ and rectangular matrix multiplication \cite{VXXZ24}. See also \cite{van2019dynamic} for a way to bound $\omega(a, b, c)$ for arbitrary $a, b, c > 0$ from values of $\omega(1, x, 1)$. 
    The $(k, \ell)$th entry corresponds to the exponent $\alpha$ such that the runtime to detect a $k$-clique is $\tilde{O}(\Delta_\ell^\alpha)$ , where $\Delta_\ell$ is the number of $\ell$-cliques. 
    }
    \label{tab:det_exponent}
\end{table}

\begin{proposition}
Fix any positive integers $k \ge 3$ and $\ell \le \lfloor k/3\rfloor$, and let $\beta(k) = \omega(\lceil k/3 \rceil,   \lceil (k-1)/3\rceil, \lfloor k/3 \rfloor)$.
If $\cliquedet{k, 1}$ requires $n^{\beta(k) - o(1)}$ time, then
$\cliquedet{k, \ell}$ requires $\Delta_\ell^{\beta(k)/\ell-o(1)}$ time. 
\end{proposition}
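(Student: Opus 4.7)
The plan is a direct reduction showing the contrapositive: any algorithm for $\cliquedet{k, \ell}$ running in time $O(\Delta_\ell^{\beta(k)/\ell - \eps})$ for some constant $\eps > 0$ would yield an algorithm for $\cliquedet{k, 1}$ running in time $O(n^{\beta(k) - \eps'})$ for some $\eps' > 0$, contradicting the assumption on $\cliquedet{k,1}$.

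First, given an $n$-vertex graph $G$ as an instance of $\cliquedet{k, 1}$, I would brute-force enumerate all $\binom{n}{\ell} = O(n^\ell)$ subsets of size $\ell$ and keep those inducing a clique, producing the list $L$ of all $\Delta_\ell \le n^\ell$ many $\ell$-cliques in $G$ in $O(n^\ell)$ time. Next, I would invoke the hypothetical $\cliquedet{k, \ell}$ algorithm on the pair $(G, L)$ to decide whether $G$ contains a $k$-clique; this costs $O(\Delta_\ell^{\beta(k)/\ell - \eps}) \le O(n^{\beta(k) - \ell \eps})$ by the bound $\Delta_\ell \le n^\ell$. Summing, the total running time is
$$O\!\left(n^\ell + n^{\beta(k) - \ell \eps}\right).$$

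The only thing to verify is that the preprocessing term $n^\ell$ is strictly dominated by $n^{\beta(k)}$, so that for $\eps$ small enough the overall runtime is $O(n^{\beta(k) - \eps'})$ for some $\eps' > 0$. This is exactly where the hypothesis $\ell \le \lfloor k/3 \rfloor$ enters: by the trivial output-size lower bound on rectangular matrix multiplication (the product of an $n^a \times n^b$ by an $n^b \times n^c$ matrix has $n^{a+c}$ entries), one has $\beta(k) = \omega(\lceil k/3 \rceil, \lceil (k-1)/3\rceil, \lfloor k/3\rfloor) \ge \lceil k/3\rceil + \lfloor k/3\rfloor \ge 2\ell$. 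Hence $\ell \le \beta(k)/2 < \beta(k)$, so the preprocessing cost is absorbed and we reach the desired contradiction with Hypothesis~\ref{hyp:k_clique}.

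I do not expect any serious obstacle: the statement is essentially a parameter-substitution reduction, and the only subtlety is bookkeeping to ensure the $\ell$-clique enumeration cost does not dominate, which the hypothesis $\ell \le \lfloor k/3\rfloor$ handles cleanly via the output-size bound on $\beta(k)$.
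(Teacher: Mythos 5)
Your proposal is correct and matches the paper's proof: both reduce $\cliquedet{k,1}$ to $\cliquedet{k,\ell}$ by listing all $\ell$-cliques in $O(n^\ell)$ time, bounding $\Delta_\ell \le n^\ell$, and invoking the assumed fast algorithm to get $O(n^{\beta(k)-\ell\eps})$ time, contradicting Hypothesis~\ref{hyp:k_clique}. Your extra check that the $n^\ell$ preprocessing term is dominated, via $\beta(k) \ge \lceil k/3\rceil + \lfloor k/3\rfloor \ge 2\ell$, is a correct detail the paper leaves implicit.
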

\begin{proof}
Suppose for the sake of contradiction that $\cliquedet{k, \ell}$ has an  $O(\Delta_\ell^{\beta(k)/\ell-\eps}$) time algorithm $\mathcal{A}$ for some $\eps > 0$. Then given a $\cliquedet{k, 1}$ instance, we can first use Lemma~\ref{lem:simple_list_ub} to list all $\ell$-cliques in $O(n^\ell)$ time, and the number of $\ell$-cliques is bounded by $O(n^\ell)$. Then we can use $\mathcal{A}$ to solve the $\cliquedet{k, 1}$ instance in $O((n^\ell)^{\beta(k)/\ell-\eps})=n^{\beta(k)-\eps \ell}$ time, a contradiction.
\end{proof}

\begin{example}[$\cliquedet{3, 2}$]
\label{ex:clique-det-3-2}
\em
In this case, the algorithm can only choose $a=b=c=1$, and it would naturally choose $x_a=x_b=x_c$. The time it takes to bound the number of $1$-cliques (nodes) is $\tO(\Delta_2^{1+x_a (g(2, 1) - 1)}) = \tO(m^{1+x_a})$. Then we have $S_a, S_b, S_c \le \Theta(m^{1-x_a})$. Thus, the running time for the matrix multiplication of dimensions $S_a, S_b, S_c$ is $\tO(m^{(1-x_a)\omega})$. Overall, the running time is $\tO(m^{\frac{2\omega}{\omega+1}})$ by setting $x_a = \frac{\omega-1}{\omega+1}$. This is essentially Alon, Yuster and Zwick \cite{alon1997finding}'s triangle detection algorithm for sparse graphs. 
\end{example}

\begin{example}[$\cliquedet{4, 2}$]
\label{ex:clique-det-4-2}
\em
In this case, the algorithm can only choose $a=2, b=c=1$, and it would naturally choose $x_b=x_c$. The algorithm uses Lemma~\ref{lem:simple_list_ub} to (trivially) bound the number of edges as $m$. 
The time it takes to bound the number of nodes is $\tO(\Delta_2^{1+x_b (g(3, 1) - 1)}) = \tO(m^{1+x_b(\omega-1)})$. Then we have $S_a \le \Theta(m), S_b, S_c \le \Theta(m^{1-x_b})$. Thus, the running time for the matrix multiplication of dimensions $S_a, S_b, S_c$ is $\tO(m^{\omega(1, 1-x_b, 1-x_b)})$. The algorithm chooses $x_b$ so that $1+x_b(\omega-1) = \omega(1, 1-x_b, 1-x_b)$. If we simply bound $\omega(1, 1-x_b, 1-x_b)$ by $x_b + \omega(1-x_b)$, we can get $g(4, 2) \le \frac{\omega+1}{2}$ by setting $x_b = \frac{1}{2}$. For the current best bound of square and rectangular matrix multiplication~\cite{VXXZ24}, we can set $x_b = 0.478$ to get an upper bound $g(4, 2) \le 1.657$. As seen in Table~\ref{table:improved_det_4_5}, this is an improvement over the previous best algorithm of Eisenbrand and Grandoni~\cite{eisenbrand2004complexity}. The key difference between our algorithm and \cite{eisenbrand2004complexity}'s algorithm is that, after they perform a similar first stage, they recursively call a $\cliquedet{4, 1}$ algorithm on graphs with $S_b$ nodes, losing the information that the graph has $S_a = m$ edges to begin with. We instead utilize this information with rectangular matrix multiplication to get a better running time. 

\end{example}

\begin{example}[$\cliquedet{5, 2}$]
\label{ex:clique-det-5-2}
\em
In this case, let the algorithm  choose $a=b=2, c=1$ (the choice $a=3, b=c=1$ gives a worse bound). The algorithm uses Lemma~\ref{lem:simple_list_ub} to (trivially) bound the number of edges as $m$. 
The time it takes to bound the number of nodes is $\tO(\Delta_2^{1+x_c (g(4, 1) - 1)}) = \tO(m^{1+x_c(\omega(1, 2, 1)-1)})$. Then we have $S_a,S_b \le \Theta(m), S_c \le \Theta(m^{1-x_c})$. Thus, the running time for the matrix multiplication of dimensions $S_a, S_b, S_c$ is $\tO(m^{\omega(1, 1, 1-x_c)})$. The algorithm chooses $x_c$ so that $1+x_c(\omega(1, 2, 1)-1) = \omega(1, 1, 1-x_c)$. If we simply bound $\omega(1, 2, 1)$ by $\omega + 1$ and $\omega(1, 1, 1-x_c)$ by $2x_c + (1-x_c)\omega$, we can get $g(5, 2) \le \frac{\omega+2}{2}$ by setting $x_c = \frac{1}{2}$. For the current best bound of rectangular matrix multiplication~\cite{VXXZ24}, we can set $x_c = 0.469$ to get an upper bound $g(5, 2) \le 2.057$. As seen in Table~\ref{table:improved_det_4_5}, this is an improvement over the previous best known algorithm of Eisenbrand and Grandoni \cite{eisenbrand2004complexity}.
\end{example}

\begin{example}[More Small Examples]
\label{ex:more-small-examples}
\em
See Tables~\ref{tab:detection_runtime} and \ref{tab:det_exponent} for more examples of the running times of our algorithm. These running times were obtained by finding the optimal values of $a, b, c$ using dynamic programming. 

From previous examples, one might wonder whether the algorithm always sets $a, b, c$ as close to $k/3$ as possible. The following example shows that it is not the case (for $\omega = 2$). 

In $\cliquedet{8, 4}$, if the algorithm chooses $a=4, b = c = 2$, then the running time is 
$$\tO\left(\Delta_4^{1+x_b(g(6, 2)-1)}+\Delta_4^{1+x_c(g(6, 2)-1)} + \Delta_4^{\omega(1, 1-x_b, 1-x_c)}\right).$$
By setting $x_b=x_c = \frac{1}{2}$, this running time is bounded by $\tO(\Delta_4^{3/2})$ when $\omega = 2$ (See Table~\ref{tab:det_exponent} for the value of $g(6, 2)$ when $\omega = 2$). 

However, if the algorithm chooses a more balanced choice $a=b=3, c = 2$, then the running time is $$\tO\left(\Delta_4^{1+x_a(g(5, 1)-1)}+\Delta_4^{1+x_b(g(5, 1)-1)}+\Delta_4^{1+x_c(g(6, 2)-1)} + \Delta_4^{\omega(1-x_a, 1-x_b, 1-x_c)}\right).$$
One optimal way to set the parameters when $\omega = 2$ is $x_a = x_b = \frac{1}{5}$ and $x_c = \frac{3}{5}$, which only gives an $\tO(\Delta_4^{8/5})$ running time when $\omega = 2$ (See Table~\ref{tab:det_exponent} for the values of $g(5, 1)$ and $g(6, 2)$ when $\omega = 2$). 
\end{example}

\subsection{Upper Bound for \texorpdfstring{$\cliquedet{k, k - h}$}{(k, k-h)-Clique-Detection}}
\label{sec:k-h_detect_bound}

In this section, we analyze the running time of our algorithm for $\cliquedet{k, k - h}$ for some constant $h=O(1)$. For convenience, let $e_h(k) = g(k, k - h)$.

We start with the following lemma.
\begin{lemma}
\label{lem:det_exponent_monotone}
For every $k > h$, $e_h(k + 1) \le e_h(k)$. 
\end{lemma}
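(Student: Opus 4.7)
The plan is to reduce a $\cliquedet{k+1, k+1-h}$ instance to many $\cliquedet{k, k-h}$ instances, one per vertex. Given a graph $G$ and the list $L$ of all $(k+1-h)$-cliques, for each vertex $v \in V(G)$ I would form the induced subgraph $G[N(v)]$ on $v$'s neighborhood. A $(k+1)$-clique in $G$ containing $v$ is in bijection with a $k$-clique in $G[N(v)]$, and a $(k+1-h)$-clique in $G$ containing $v$ is in bijection with a $(k-h)$-clique in $G[N(v)]$. Hence for each $v$, I can extract the list of $(k-h)$-cliques of $G[N(v)]$ by scanning $L$ and keeping only those $(k+1-h)$-cliques containing $v$ (and dropping $v$ from each). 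This bookkeeping takes time $O(\sum_v \Delta_{k+1-h}(v)) = O((k+1-h) \Delta_{k+1-h})$ in total, which is negligible.

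After preparing each neighborhood, I would invoke the $\cliquedet{k, k-h}$ algorithm on $G[N(v)]$, which by definition runs in $\tO(\Delta_{k+1-h}(v)^{e_h(k)})$ time since $\Delta_{k+1-h}(v)$ is exactly the number of $(k-h)$-cliques in $G[N(v)]$. Summing over $v$, the total running time is
\[
\sum_{v \in V(G)} \tO\!\left(\Delta_{k+1-h}(v)^{e_h(k)}\right).
\]
Using $\sum_v \Delta_{k+1-h}(v) = (k+1-h)\,\Delta_{k+1-h} = O(\Delta_{k+1-h})$ together with the elementary inequality $\sum_v x_v^p \le \bigl(\sum_v x_v\bigr)^p$ for $p \ge 1$ and $x_v \ge 0$, the total time is bounded by $\tO(\Delta_{k+1-h}^{e_h(k)})$. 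Since the output (YES iff some $G[N(v)]$ contains a $k$-clique) exactly reports whether $G$ has a $(k+1)$-clique, this yields $e_h(k+1) = g(k+1, k+1-h) \le e_h(k)$.

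The only subtlety is the hypothesis $e_h(k) \ge 1$ required for the power-sum inequality. This holds trivially because any correct detection algorithm must at least read its $(k-h)$-clique list, so $g(k, k-h) \ge 1$ for every $k > h$ covered by the statement (in particular $e_h(h+1) = g(h+1, 1) \ge 1$ as the base case, and the induction pushes this upward). I do not anticipate any real obstacle; the argument is essentially a neighborhood-reduction combined with a convexity bound, and the only care needed is to confirm that the $(k-h)$-clique list for each $G[N(v)]$ can be assembled from the given global list in time dominated by the subsequent detection calls, which the bookkeeping above ensures.
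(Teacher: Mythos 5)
Your reduction is sound as a statement about the \emph{problem}: running the given $\cliquedet{k,k-h}$ algorithm inside every neighborhood $G[N(v)]$, using the bijection between $(k+1-h)$-cliques through $v$ and $(k-h)$-cliques of $G[N(v)]$, and then applying $\sum_v x_v^{p}\le(\sum_v x_v)^{p}$ for $p\ge 1$, does yield \emph{some} algorithm for $\cliquedet{k+1,\,k+1-h}$ running in $\tO(\Delta_{k+1-h}^{\,e_h(k)})$ time. The gap is that this is not what the lemma asserts. In this paper $e_h(k)=g(k,k-h)$ is defined as the running-time exponent of the specific framework of Algorithm~\ref{alg:generic_detection} (with its internal choice of $a,b,c$ and thresholds $x_d$), not as the best exponent achievable by any algorithm; the paper even stresses this distinction before the matching lower bound on $e_h$. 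Exhibiting a separate, neighborhood-based algorithm with exponent $e_h(k)$ does not show that $g(k+1,k+1-h)\le g(k,k-h)$, i.e., it bounds the wrong quantity.

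This matters because the lemma is invoked precisely about the framework's own recursive subcalls: in Proposition~\ref{prop:eh_upper_bound}, Theorem~\ref{thm:det_mult_upper_bound}, and the lower-bound proposition, the quantities $e_h(k-a),e_h(k-b),e_h(k-c)$ are the exponents of Algorithm~\ref{alg:generic_detection} itself, and the analysis needs \emph{those} to be monotone. Knowing only that the problem's complexity is monotone (or that a different, reduction-based algorithm is fast) does not license the step $e_h(k-c)\le e_h(k-b)\le e_h(k-a)$, unless one also changes the algorithm so that its recursion calls your hybrid procedure, which would require redoing the surrounding analysis. The paper's proof stays inside the framework: it takes the optimal parameters $a,b,c,x_a,x_b,x_c$ for $\cliquedet{k,\ell}$ and translates them to parameters for $\cliquedet{k+1,\ell+1}$ (namely $a'=a+1$, $b'=b$, $c'=c$, reusing the thresholds, with case analysis on whether each part is handled by Lemma~\ref{lem:simple_list_ub} or by recursion, and an induction to control the recursive exponents), which is exactly what is needed. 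To repair your argument you would either have to show that your per-vertex recursion is realizable as a parameter choice of Algorithm~\ref{alg:generic_detection} with total cost exponent $e_h(k)$ (it is not in general, since the framework forces a three-way split whose other parts incur their own costs), or carry out a parameter-translation argument along the paper's lines.
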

\begin{proof}
We prove the statement by induction. We skip the base case $k = h+1$ as it works similarly as the induction step
(except for $h=1$, in which case $e_h(2) = 2$ and $e_h(3) = \frac{2\omega}{\omega+1} \le e_h(2)$, as the algorithm handles $\cliquedet{2, 1}$ specially).  Suppose the statement is already true for all smaller $k$. 

Let $\ell = k - h$ and $\ell' = k + 1 - h$. Suppose for $\cliquedet{k, \ell}$, the optimal parameters are $a, b, c, x_a, x_b, x_c, S_a, S_b, S_c$ ($x_d$ is relevant only if $d < \ell$ for $d \in \{a, b, c\}$).  Consider $\cliquedet{k+1, \ell+1}$ with parameters $a' = a+1, b' =b, c'=c$ and $x'_{a'}, x'_{b'}, x'_{c'}, S'_{a'}, S'_{b'}, S'_{c'}$ to be determined. Let $\Delta_\ell$ be the number of $\ell$-cliques in the $\cliquedet{k, \ell}$ instance and let $\Delta'_{\ell'}$ be the number of $(\ell+1)$-cliques in the $\cliquedet{k+1, \ell+1}$ instance. 

We first compare exponents related to $S_a$ and $S'_{a'}$.
\begin{itemize}
    \item If $a \ge \ell$. Then $S_a$ in $\cliquedet{k, \ell}$ is bounded by $\tO(\Delta_\ell^{a / \ell})$. In the $\cliquedet{k + 1, \ell + 1}$ algorithm, $S_{a'}'$ is bounded  by $\tO((\Delta'_{\ell'})^{(a+1)/(\ell+1)})$, a smaller exponent. 
    \item If $a < \ell$, the exponent of the running time for bounding $S_a$ in $\cliquedet{k, \ell}$ is $1+x_a (e_h(k-a) - 1)$, and $S_a$ is bounded by $\Delta_\ell^{1-x_a}$. Let $x'_{a'}$ be equal to $x_a$ in the algorithm for $\cliquedet{k + 1, \ell + 1}$. Then notice that the exponent for running time is $1+x'_{a'}(e_h(k + 1 - a') - 1) = 1+x_a(e_h(k-a)-1)$ and the bound on $S'_{a'}$ is $(\Delta'_{\ell'})^{1-x_a}$, both with same exponents as previous bounds. 
\end{itemize}
We then compare exponents related to $S_b$ and $S'_{b'}$. 
\begin{itemize}
    \item If $b > \ell$. Then $b' = b \ge \ell+1 = \ell'$. Then $S_b$ in $\cliquedet{k, \ell}$ is bounded by $\tO(\Delta_\ell^{b / \ell})$. In the $\cliquedet{k + 1, \ell + 1}$ algorithm, $S_{b'}'$ is bounded  by $\tO((\Delta'_{\ell'})^{b/(\ell+1)})$, a smaller exponent. 
    \item If $b = \ell$. In this case, $S_b = \tO(\Delta_\ell)$ and we will have $b' < \ell'$. Let $x'_{b'} = 0$ in $\cliquedet{k + 1, \ell + 1}$. Then $S'_{b'}$ is bounded by $\tO((\Delta'_{\ell'})^1)$, the same exponent as the bound of $S_b$. Also, the cost for having this bound is $\tO((\Delta'_{\ell'})^{1+x'_{b'}(e_h(k+1-b'))}) = \tO(\Delta'_{\ell'})$, so we can ignore the cost as it is near-linear time.
    \item If $b < \ell$, the exponent of the running time for bounding $S_b$ in $\cliquedet{k, \ell}$ is $1+x_b (e_h(k-b) - 1)$, and $S_b$ is bounded by $\Delta_\ell^{1-x_b}$. Let $x'_{b'}$ be equal to $x_b$ in the algorithm for $\cliquedet{k + 1, \ell + 1}$. Then notice that the exponent for running time is $1+x'_{b'}(e_h(k + 1 - b') - 1) = 1+x_b(e_h(k-b+1)-1)$. By the induction assumption, 
    $e_h(k-b+1) \le e_h(k-b)$, so $1+x_b(e_h(k-b+1)-1)$ is upper bounded by the running time exponent of the corresponding case in $\cliquedet{k, \ell}$. Note that this case does not happen in the base case $k=h+1$, as $b < \ell = 1$ can never happen, so we can safely apply the induction assumption. 
    The bound on $S'_{b'}$ is $(\Delta'_{\ell'})^{1-x_b}$,  with the same exponent as $S_b$ in $\cliquedet{k, \ell}$. 
\end{itemize}
The comparison of the exponents related to $S_c$ and $S'_{c'}$ works similarly. Thus, $e_h(k+1) \le e_h(k)$. 
\end{proof}

\begin{proposition}
\label{prop:eh_upper_bound}
$e_h(k) = 1+O\left(1/ k^{\log_{\frac{3}{2}}(\frac{\omega}{\omega-1})}\right)$. 
\end{proposition}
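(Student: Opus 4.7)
My plan is to prove this by induction on $k$, using the recursive structure of Algorithm~\ref{alg:generic_detection} together with the monotonicity established in Lemma~\ref{lem:det_exponent_monotone}. The base case is the trivial uniform bound $e_h(k) \le \omega$, which holds for every $k$: run the algorithm with any balanced split and $x_d = 0$ for each part smaller than $\ell = k-h$, and bound the final matrix multiplication by a square one. This handles all $k$ below some threshold $k_0 = O(h)$, where $f(k) := e_h(k) - 1$ is bounded by an absolute constant.

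For the inductive step, I would take $k$ large enough that $\lfloor k/3 \rfloor < k - h$ (i.e., $k > 3h/2$) and use the balanced split $a = \lceil k/3 \rceil$, $b = \lceil (k-1)/3 \rceil$, $c = \lfloor k/3 \rfloor$. Then all three parts lie strictly below $\ell$, so the algorithm uses the degree-based recursive bounding on each one. Since $k - d \ge \lfloor 2k/3 \rfloor$ for each $d \in \{a,b,c\}$, Lemma~\ref{lem:det_exponent_monotone} lets me uniformly upper bound each recursive-cost exponent by $1 + x(e_h(\lfloor 2k/3 \rfloor) - 1)$, where I set the common value $x_a = x_b = x_c = x$. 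With this uniform choice, the resulting sizes $|L_a|, |L_b|, |L_c|$ are all exactly $\Delta_\ell^{1-x}$, so the closing matrix multiplication is square with cost exponent $(1-x)\omega$, sidestepping any rectangular-multiplication bookkeeping.

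Balancing the two contributions gives the optimizer $x = (\omega-1)/(e_h(\lfloor 2k/3 \rfloor) + \omega - 1)$, and substituting back yields the recurrence
\[
    f(k) \;\le\; \frac{(\omega-1)\, f(\lfloor 2k/3 \rfloor)}{f(\lfloor 2k/3\rfloor) + \omega} \;\le\; \frac{\omega-1}{\omega}\, f(\lfloor 2k/3 \rfloor),
\]
where the second inequality uses $f \ge 0$. Iterating the recurrence $m = \Theta(\log_{3/2}(k/k_0))$ times drops the argument into the base regime where $f = O(1)$, producing
\[
    f(k) \;\le\; O(1) \cdot \left(\tfrac{\omega-1}{\omega}\right)^{\log_{3/2}(k/k_0)} \;=\; O\!\left(k^{-\log_{3/2}(\omega/(\omega-1))}\right),
\]
which is exactly the claimed bound.

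The main subtlety I expect to require care is verifying that the balanced split with rounding is (i) a legal choice for the algorithm to make and (ii) not so lossy that the rounding errors accumulate as $k$ grows; however, because Lemma~\ref{lem:det_exponent_monotone} lets me replace every $e_h(k-d)$ by the common upper bound $e_h(\lfloor 2k/3 \rfloor)$, these rounding issues collapse into constant-factor losses absorbed by the base case. Crucially, the key inequality $f/(f+\omega) \le f/\omega$ is clean and requires no ``$(1+o(1))$'' slack, so the geometric recursion unrolls directly without any additional error terms to control.
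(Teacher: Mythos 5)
Your argument is essentially the paper's own proof: the same balanced split $c=\lfloor k/3\rfloor\le b\le a=\lceil k/3\rceil$ with all parts below $\ell$, the same use of the monotonicity lemma to replace every $e_h(k-d)$ by $e_h(\lfloor 2k/3\rfloor)=e_h(k-a)$, the same common choice $x_a=x_b=x_c=\frac{\omega-1}{\omega+e_h(\lfloor 2k/3\rfloor)-1}$ balancing the recursion cost against a square matrix product, and the same recurrence $f(k)\le\frac{(\omega-1)f(\lfloor 2k/3\rfloor)}{f(\lfloor 2k/3\rfloor)+\omega}\le\frac{\omega-1}{\omega}f(\lfloor 2k/3\rfloor)$ iterated $\Theta(\log_{3/2}k)$ times. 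The one inaccuracy is your base-case justification: the uniform claim $e_h(k)\le\omega$ for every $k$ is false, since when $k<3h/2$ the balanced parts have size at least $\ell=k-h$ and your square bound only gives $\Delta_\ell^{\omega\lceil k/3\rceil/\ell}$ (indeed $g(12,2)\approx 4.74>\omega$); however, all that is needed, and all the paper asserts, is $e_h(k)=O(1)$ for the constantly many $k\le k_0=O(h)$ with $h=O(1)$, so this slip is harmless and does not affect the induction.
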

\begin{proof}
Let $\ell = k - h$. 
 Let $k_0 = 100h$. For all $k \le k_0$, $e_h(k) = O(1)$. 

For $k > k_0$, we choose $a, b, c$ in our $\cliquedet{k, k - h}$ algorithm so that $\lfloor k/3\rfloor = c \le b \le a = \lceil k/3\rceil$. Clearly, $a, b, c < \ell = k - h$. The running time of the algorithm is thus 
$$\tO\left(\Delta_\ell^{1+x_a  \cdot (e_h(k-a)-1)} 
+ \Delta_\ell^{1+x_b  \cdot (e_h(k-b)-1)} 
+ \Delta_\ell^{1+x_c  \cdot (e_h(k-c)-1)} 
+ MM\left(\Delta_\ell^{1-x_a}, \Delta_\ell^{1-x_b},\Delta_\ell^{1-x_c} \right)\right).$$
By Lemma~\ref{lem:det_exponent_monotone}, $e_h(k-c) \le e_h(k-b) \le e_h(k-a)$, so the running time is bounded by 
$$\tO\left(\Delta_\ell^{1+\max\{x_a, x_b, x_c\}  \cdot (e_h(k-a)-1)} 
+MM\left(\Delta_\ell^{1-x_a}, \Delta_\ell^{1-x_b},\Delta_\ell^{1-x_c} \right)\right).$$
Set $x_a = x_b = x_c = \frac{\omega -1}{\omega + e_h(k-a) - 1}$. The running time then becomes 
$$\tO\left(\Delta_\ell^{\frac{\omega \cdot e_h(k-a)}{\omega + e_h(k-a) - 1}}\right).$$
Thus,
$e_h(k) \le \frac{\omega \cdot e_h(k-a)}{\omega + e_h(k-a) - 1}$. Consequently, $$e_h(k) - 1 \le \frac{(\omega - 1) \cdot (e_h(k-a) - 1)}{\omega + e_h(k-a) - 1} \le \frac{\omega - 1}{\omega}  \cdot (e_h(k-a) - 1) = \frac{\omega - 1}{\omega}  \cdot (e_h(k-\lceil k/3\rceil) - 1).$$
Therefore $e_h(k) - 1 \le O\left(\left(\frac{\omega - 1}{\omega}\right)^{\log_{\frac{3}{2}} k}\right) = O\left(1/ k^{\log_{\frac{3}{2}}(\frac{\omega}{\omega-1})}\right)$.
\end{proof}

We also show that our choices of $a, b, c$ are not too far away from optimal, at least when $\omega = 2$. In the following proposition, recall $e_h(k)$ is the exponent of our algorithm, instead of the best exponent for $\cliquedet{k, k - h}$. 

\begin{proposition}
$e_h(k) = 1+\Omega\left(1/ k^{\log_{\frac{3}{2}}(2)}\right)$. 
\end{proposition}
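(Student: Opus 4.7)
The plan is to prove by induction on $k$ that $\eta_k := 1/(e_h(k)-1) \le Ck^\alpha - 1$ for $\alpha = \log_{3/2} 2$, an appropriate constant $C = C(h) > 0$, and all $k \ge k_0$, which immediately gives $e_h(k) - 1 \ge 1/(Ck^\alpha) = \Omega(1/k^\alpha)$. The key identity to exploit is $(3/2)^\alpha = 2$, equivalently $2(2/3)^\alpha = 1$.

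First I would derive the governing recursion in the regime $\omega = 2$ and $a,b,c < \ell = k - h$, which covers all relevant choices when $k \gg h$. For fixed integers $a \ge b \ge c \ge 1$ with $a+b+c=k$ and $x_a,x_b,x_c \in [0,1]$, the running time exponent in Algorithm~\ref{alg:generic_detection} is the maximum of the three listing exponents $1 + x_d\delta_{k-d}$ and the matrix-multiplication exponent, which equals $2 - x_{(2)} - x_{(3)}$ for $\omega = 2$ (where $x_{(1)} \ge x_{(2)} \ge x_{(3)}$ are the sorted $x_d$'s). By Lemma~\ref{lem:det_exponent_monotone}, $\delta_{k-a} \ge \delta_{k-b} \ge \delta_{k-c}$, so the balancing choice $x_d = \delta_k/\delta_{k-d}$ makes $x_a, x_b$ the two smallest. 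Setting all four terms equal yields
\[
\delta_k = \frac{1}{1 + \eta_{k-a} + \eta_{k-b}}, \qquad\text{hence}\qquad \eta_k = \max_{(a,b,c)} \bigl(1 + \eta_{k-a} + \eta_{k-b}\bigr).
\]

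The main technical step is the arithmetic inequality
\[
(k-a)^\alpha + (k-b)^\alpha \,\le\, k^\alpha \qquad\text{for integers } a\ge b \ge c \ge 1 \text{ with } a+b+c=k,
\]
with equality at $a=b=c=k/3$. Substituting $u=k-a$, $v=k-b$ gives $u+v=k+c$ and $u \in [2c,(k+c)/2]$, and convexity of $t\mapsto t^\alpha$ for $\alpha > 1$ means $u^\alpha+v^\alpha$ is maximal at an endpoint of $u$; checking both endpoints as $c$ ranges over $[1, k/3]$ shows the maximum is $k^\alpha$, attained at the symmetric point $u=v=2k/3$ precisely because $2(2/3)^\alpha = 1$. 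Combining this with the recursion, the induction closes: assuming $\eta_j \le Cj^\alpha - 1$ for $k_0 \le j < k$,
\[
\eta_k \le 1 + \max_{(a,b,c)}\!\bigl(C(k-a)^\alpha + C(k-b)^\alpha - 2\bigr) \le Ck^\alpha - 1.
\]

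Cases where some $d \ge \ell$ require a brief separate argument: if $a > \ell$, the listing cost alone contributes exponent $a/\ell \ge 1 + 1/\ell = 1 + \Omega(1/k)$, which already beats $1 + \Omega(1/k^\alpha)$ since $\alpha > 1$, and the boundary $a=\ell$ is similar after observing that the MM term forces $x_b \ge 1 - \delta_k$ and hence $\delta_k \ge \delta_{k-b}/(1+\delta_{k-b})$. The main obstacle is the arithmetic inequality above: it is tight exactly at $\alpha = \log_{3/2}2$, so both the exact identity $(3/2)^\alpha = 2$ and a careful boundary analysis of the feasible region for integer $(a,b,c)$ are essential to close the induction with the correct constant and the $-1$ additive slack.
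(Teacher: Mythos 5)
Your proposal is correct and follows essentially the same route as the paper's own proof: pass to $f_h(k)=1/(e_h(k)-1)$, derive the recursion $f_h(k)\le 1+f_h(k-a)+f_h(k-b)$ from balancing the listing exponents against the trivial lower bound on the matrix-multiplication cost, close the induction $f_h(k)\le Ck^{\log_{3/2}2}-1$ using the identity $2(2/3)^{\log_{3/2}2}=1$, and treat the $a\ge\ell$ choices separately (the paper folds your $a>\ell$ and $a=\ell$ subcases into one argument via the bound $f_h(k)\le \frac{\ell(f_h(k-b)+1)}{a}$). The only cosmetic difference is that you verify the key inequality $(k-a)^\alpha+(k-b)^\alpha\le k^\alpha$ directly over the full polytope of $(a,b,c)$ by convexity and endpoint checks, whereas the paper first uses monotonicity of $f_h$ to reduce to $b=\lceil\frac{k-a}{2}\rceil$ and then maximizes over the single parameter $a$; both yield the same conclusion.
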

\begin{proof}
Let $\ell = k - h$,  $\rho = \log_{\frac{3}{2}}(2)$, and $f_h(k) = \frac{1}{e_h(k) - 1}$. 
 Let $k_0 = 100h$. It is not difficult to see that for all $k \le k_0$, $f_h(k) \le M k^\rho - 1$ for some sufficiently large constant $M > 1$ because our algorithm does not achieve almost linear time, i.e., it always has $e_h(k) > 1$ and thus $f_h(k) < \infty$. 
 
 Let $k > k_0$, and let $a, b, c$ be the optimal choices for $\cliquedet{k, k - h}$. We will show by induction that $f_h(k) \le M k^\rho - 1$. 
 Consider two cases. 
 
 For the first case, assume $a < \ell$. Let $x_a, x_b, x_c$ be the optimal parameters for $\cliquedet{k, k - h}$, and if there are multiple choices, we choose one set of parameters with smallest $x_a+x_b+x_c$. 
 Then, the bound of our running time is (up to $\tO(1)$ factors) 
$$\Delta_\ell^{1+x_a  \cdot (e_h(k-a)-1)} 
+ \Delta_\ell^{1+x_b  \cdot (e_h(k-b)-1)} 
+ \Delta_\ell^{1+x_c  \cdot (e_h(k-c)-1)} 
+ MM\left(\Delta_\ell^{1-x_a}, \Delta_\ell^{1-x_b},\Delta_\ell^{1-x_c}\right).$$ 

Suppose $x_a > x_b$. By Lemma~\ref{lem:det_exponent_monotone}, $e_h(k-a) \ge e_h(k-b)$. Therefore, we can slightly increase $x_b$, and the running time of the algorithm will not be worse. This contradicts with the optimality of $x_a, x_b, x_c$ and minimality of $x_a+x_b+x_c$. Thus, we must have $x_a \le x_b$. Similarly, we have $x_b \le x_c$. 

Then we can lower bound $MM\left(\Delta_\ell^{1-x_a}, \Delta_\ell^{1-x_b},\Delta_\ell^{1-x_c}\right)$ by   $\Delta_\ell^{2-x_a-x_b}$.

The optimal way to balance $\Delta_\ell^{1+x_a  \cdot (e_h(k-a)-1)}, \Delta_\ell^{1+x_b  \cdot (e_h(k-b)-1)}, \Delta_\ell^{1+x_c  \cdot (e_h(k-c)-1)}$ and $ \Delta_\ell^{2-x_a-x_b}$ is to set 
$x_a = \frac{e_h(k-b)-1}{e_h(k-a)e_h(k-b)-1}$, $x_b = \frac{e_h(k-a)-1}{e_h(k-a)e_h(k-b)-1}$ and $x_c = \min\{1, \frac{(e_h(k-a)-1)(e_h(k-b)-1)}{(e_h(k-a)e_h(k-b)-1)(e_h(k-c)-1)}\}$, which gives \[e_h(k) \ge \frac{2e_h(k-a)e_h(k-b)-e_h(k-a)-e_h(k-b)}{e_h(k-a)e_h(k-b)-1}.\] Substituting $e_h$ by $f_h$ gives the following cleaner formula:
$$f_h(k) \le 1+f_h(k-a)+f_h(k-b).$$
As the algorithm chooses the optimal $a, b, c$, we have that 
$$f_h(k) \le \max_{\substack{1 \leq c \leq b \leq a \leq k\\ a + b + c = k}} \left\{ 1 + f_h(k-a) + f_h(k-b)\right\}.$$

By Lemma~\ref{lem:det_exponent_monotone}, $f_h(k-b)$ is nondecreasing when $b$ increases, so we can pick $b$ to be as large as possible for fixed $a$. Therefore, for fixed $a$, we choose  $c = \lfloor \frac{k-a}{2} \rfloor$ and $b = \lceil \frac{k-a}{2} \rceil$. Therefore, we can rewrite
$$f_h(k) \le \max_{ k/3  \leq a \leq k-2} \left\{1 + f_h(k-a) + f_h\left(\left\lfloor \frac{k+a}{2}\right\rfloor\right)\right\}.$$

By the induction assumption, $f_h(k') \le M (k')^\rho - 1$ for all $k'<k$.

Then,
\begin{align*}
    f_h(k)&\leq \max_{k/3 \le a \le k-2} \left\{1 + f_h(k-a) + f_h\left(\left\lfloor\frac{k+a}{2}\right\rfloor\right)\right\} \\
    & \leq \max_{0 \le p \le k/3}\left\{ 1 + M \left(\frac{2k}{3} - 2p\right)^\rho + M\left(\frac{2k}{3} + p\right)^\rho - 2 \right\}\\
    & \leq Mk^\rho \cdot \max_{0 \le p' \le 1/3} \left\{\left(\frac{2}{3}-2p'\right)^\rho + \left(\frac{2}{3}+p'\right)^\rho \right\} - 1\\
    &\le Mk^\rho - 1,
\end{align*}
which completes the induction step for this case. 

For the other case, assume $a \ge \ell$. Note that we must have $b, c < \ell$ as $2\ell > k$. Let $x_b, x_c$ be the optimal parameters. Similar as before, we can assume $x_b \le x_c$. 
 Then, the bound of our running time is (up to $\tO(1)$ factors) 
\begin{align*}
&\Delta_\ell^{1+x_b  \cdot (e_h(k-b)-1)} 
+ \Delta_\ell^{1+x_c  \cdot (e_h(k-c)-1)} 
+ MM\left(\Delta_\ell^{a/\ell}, \Delta_\ell^{1-x_b},\Delta_\ell^{1-x_c}\right)\\
\ge & \Delta_\ell^{1+x_b  \cdot (e_h(k-b)-1)} 
+ \Delta_\ell^{1+x_c  \cdot (e_h(k-c)-1)} 
+ \Delta_\ell^{a/\ell + 1 - x_b}
\end{align*}
The optimal way to balance is to set $x_b = \frac{a}{\ell e_h(k-b)}$ and $x_c = \min\{1, \frac{a(e_h(k-b)-1)}{\ell e_h(k-b)(e_h(k-c)-1)}\}$. 
This gives $e_h(k) \ge \frac{ae_h(k-b)-a}{\ell e_h(k-b)}+1$. Note that it is possible that $x_b > 1$ in this setting, but if that happens, $e_h(k) > e_h(k-b)$, which by Lemma~\ref{lem:det_exponent_monotone}, can never be optimal. In terms of $f_h$, this implies that $f_h(k) \le \frac{\ell  (f_h(k-b)+1)}{a}$. 
As the algorithm chooses the optimal $a, b, c$, we have that 
$$f_h(k) \le \max_{\substack{1 \leq c \leq b < \ell \leq a \leq k\\ a + b + c = k}}  \frac{\ell  (f_h(k-b)+1)}{a}.$$
By Lemma~\ref{lem:det_exponent_monotone}, $f_h(k-b)$ is nondecreasing when $b$ increases, so we can pick $b$ to be as large as possible for fixed $a$. Therefore, for fixed $a$, we choose  $c = \lfloor \frac{k-a}{2} \rfloor$ and $b = \lceil \frac{k-a}{2} \rceil$. Thus, we can rewrite
$$f_h(k) \le \max_{ \ell \leq a \leq k-2} \frac{\ell  (f_h\left(\left\lfloor \frac{k+a}{2}\right\rfloor\right)+1)}{a} \le \max_{ \ell \leq a \leq k-2}  \left\{f_h\left(\left\lfloor \frac{k+a}{2}\right\rfloor\right)+1\right\}.$$
By induction, it can be further upper bounded by 
$$\max_{ \ell \leq a \leq k-2}  \left\{M\left(\left\lfloor \frac{k+a}{2}\right\rfloor\right)^\rho -1 +1\right\} \le M(k-1)^\rho < Mk^\rho - 1,$$
as $M, \rho > 1$. This finishes the induction step for this case. 

Overall, we have shown that $f_h(k) \le M k^\rho - 1$ for all $k$, which implies $e_h(k) = 1+\Omega\left(1/ k^{\log_{\frac{3}{2}}(2)}\right)$. 
\end{proof}

\subsection{Upper Bound for \texorpdfstring{$\cliquedet{C\ell, \ell}$}{(Cl, l)-Clique-Detection}}\label{sec:Cl_l_detectionbound}
Define a sequence of functions $(f_i)_{i \ge 0}$ as follows:
$$f_i(C) = \frac{2^i \omega^{i+1} C}{3^{i+1}(\omega-1)^i+\left(3 (2^i - 3^i) (\omega-1)^i - 2^i (\omega-1)^i \omega + 2^i \omega^{i + 1}\right)C}. $$
The functions have the following recurrence relation, whose proof we omit as it is straightforward algebra. 
\begin{claim}
$f_0(C) = \frac{\omega C}{3}$ and $f_i(C) = \frac{\omega}{1+\frac{\omega - 1}{f_{i-1}\left(\frac{2C}{3-C}\right)}}$ for $i > 0$.
\end{claim}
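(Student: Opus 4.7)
The claim is a purely algebraic identity between the explicit closed form and the stated recurrence; no semantic induction is needed, since both sides are defined without self-reference. The plan is to (i) verify the base case $i=0$ by direct substitution, and (ii) for $i > 0$, plug the closed form for $f_{i-1}$ into the right-hand side of the recurrence, perform the argument substitution $C \mapsto 2C/(3-C)$, and simplify the resulting rational function in $C$ to match the closed form for $f_i$.

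For the base case, substituting $i=0$ into the closed form collapses the coefficient of $C$ in the denominator: $3(2^0-3^0)(\omega-1)^0 - (\omega-1)^0\omega + \omega = 0 - \omega + \omega = 0$, leaving $f_0(C) = \omega C / 3$.

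For the recurrence, I would abbreviate the closed form as $f_j(C) = N_j\, C / (D_j + E_j\, C)$ with
\[
    N_j = 2^j \omega^{j+1},\qquad D_j = 3^{j+1}(\omega-1)^j,\qquad E_j = 3(2^j - 3^j)(\omega-1)^j - 2^j(\omega-1)^j\omega + 2^j\omega^{j+1}.
\]
Setting $C' = 2C/(3-C)$, expanding $\omega/\bigl(1+(\omega-1)/f_{i-1}(C')\bigr)$, and clearing the factor $3-C$ from numerator and denominator yields
\[
    \frac{2\omega N_{i-1}\,C}{3(\omega-1) D_{i-1} + \bigl(2 N_{i-1} - (\omega-1)D_{i-1} + 2(\omega-1)E_{i-1}\bigr)\,C}.
\]

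It remains to verify three coefficient identities: $2\omega N_{i-1} = N_i$ and $3(\omega-1)D_{i-1} = D_i$ are immediate from the definitions. The main (purely clerical) obstacle is the last identity
\[
    2N_{i-1} - (\omega-1)D_{i-1} + 2(\omega-1)E_{i-1} = E_i.
\]
Expanding $E_{i-1}$ and using the combinatorial identity $6(2^{i-1}-3^{i-1}) - 3^i = 3(2^i-3^i)$ to align the $(\omega-1)^i$-coefficient with the one appearing in $E_i$, the remaining terms group cleanly by their common factors $(\omega-1)^{i-1}\omega$ and $\omega^{i+1}$ and match $E_i$ term by term, completing the verification.
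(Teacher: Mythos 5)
Your verification is correct: the base case computation, the substitution $C' = 2C/(3-C)$ with clearing of the factor $3-C$, and the three coefficient identities (including the key one $2N_{i-1}-(\omega-1)D_{i-1}+2(\omega-1)E_{i-1}=E_i$, which indeed reduces via $6(2^{i-1}-3^{i-1})-3^i=3(2^i-3^i)$ and $2^i\omega^i+2^i(\omega-1)\omega^i=2^i\omega^{i+1}$) all check out. The paper explicitly omits this proof as ``straightforward algebra,'' and your write-up is precisely that omitted verification, so there is nothing to compare beyond noting the two coincide.
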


Then we can express the running time of $\cliquedet{C\ell, \ell}$ for sufficiently large $\ell$ in terms of the functions $f_i$:
\begin{theorem}\label{thm:det_mult_upper_bound}
Let $C > 1$ be any constant such that $\frac{1}{C} \in \left(1-\left(\frac{2}{3}\right)^i, 1-\left(\frac{2}{3}\right)^{i+1} \right]$ for some constant integer $i \ge 0$. Then for any $\ell \ge 1$ and $C\ell \le  k \le (C+o_\ell(1)) \ell$, $g(k, \ell) \le f_i(C) + o_\ell(1)$. 
\end{theorem}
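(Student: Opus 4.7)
I will prove Theorem~\ref{thm:det_mult_upper_bound} by induction on $i$ using Algorithm~\ref{alg:generic_detection} with the balanced split $a = \lceil k/3 \rceil$, $b = \lceil (k-1)/3 \rceil$, $c = \lfloor k/3 \rfloor$, and, in the recursive branch, a uniform threshold $x_a = x_b = x_c =: x$. The motivating observation is that the map $C \mapsto 2C/(3-C)$ records exactly how the ratio $k/\ell$ transforms when we pass from the outer problem to a recursive call on $(k-d, \ell-d)$ with $d \approx k/3$, and that this map sends $1/C \in (1-(2/3)^i, 1-(2/3)^{i+1}]$ to $1/C' \in (1-(2/3)^{i-1}, 1-(2/3)^i]$. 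Thus the recurrence for $f_i$ in the claim preceding the theorem is an exact match for the recursion executed by the algorithm, making induction on $i$ the natural approach.

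\textbf{Base and inductive step.} For the base case $i=0$, we have $C \ge 3$, hence $k \ge 3\ell$, and all of $a,b,c$ are at least $\ell$ for large $\ell$. By Lemma~\ref{lem:simple_list_ub} each $|L_d|$ is bounded and enumerated in $\tilde{O}(\Delta_\ell^{d/\ell})$ time with no recursive step, and the final matrix multiplication costs $\tilde{O}(\Delta_\ell^{\omega(a/\ell, b/\ell, c/\ell)}) = \tilde{O}(\Delta_\ell^{\omega C/3 + o_\ell(1)})$, matching $f_0(C) = \omega C/3$. For the inductive step $i \ge 1$, we have $C < 3$, so for large $\ell$ all of $a,b,c$ are strictly less than $\ell$ and the algorithm enters its recursive branch. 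Each recursive call is a $\cliquedet{k-d, \ell-d}$ instance whose ratio tends to $C' = 2C/(3-C)$ and whose size $\ell - d = \Theta(\ell)$ tends to infinity; by the interval-mapping above, this instance satisfies the hypothesis of the theorem with index $i-1$, so inductively $g(k-d, \ell-d) \le F := f_{i-1}(C') + o_\ell(1)$. The total cost of handling low-degree $d$-cliques is then $\tilde{O}(\Delta_\ell^{1+x(F-1)})$, while the final (square) matrix product over the remaining $\Delta_\ell^{1-x}$ cliques per list costs $\tilde{O}(\Delta_\ell^{(1-x)\omega})$. Balancing with $x = (\omega-1)/(F+\omega-1)$ gives exponent $\omega F/(F+\omega-1) = \omega/(1+(\omega-1)/F)$, which equals $f_i(C) + o_\ell(1)$ by the recurrence stated in the claim.

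\textbf{Main obstacle.} The technical care lies in bookkeeping the $o_\ell(1)$ slack coming from three sources: integer rounding of $k/3$ when choosing $a,b,c$; the range $k/\ell \in [C, C + o_\ell(1)]$ permitted by the hypothesis; and the inductive $o_{\ell-d}(1)$ term carried out of the recursive call, which is $o_\ell(1)$ because $\ell - d = \Theta(\ell)$. At every step one must argue that perturbing the argument of $f_{i-1}$ or $f_i$ by $o_\ell(1)$ perturbs its value by $o_\ell(1)$; this is a continuity check using the explicit rational form of $f_i$, whose denominator stays bounded away from zero on the relevant interval (which is compact and bounded away from $C=1$). One also has to verify that this induction remains valid at the interval endpoints, where the recursive ratio $C' = 2C/(3-C)$ lands at the boundary between two intervals for index $i-1$; here the uniform continuity of $f_{i-1}$ across the closed interval absorbs the slack. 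Once these estimates are handled uniformly, the induction closes and the bound $g(k,\ell) \le f_i(C) + o_\ell(1)$ follows.
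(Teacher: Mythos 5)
Your overall route is the same as the paper's: induction on $i$, the balanced split $a=\lceil k/3\rceil$, $b=\lceil (k-1)/3\rceil$, $c=\lfloor k/3\rfloor$, a common threshold $x$ balanced at $x=(\omega-1)/(F+\omega-1)$, and the observation that $C\mapsto 2C/(3-C)$ maps the index-$i$ interval to the index-$(i-1)$ interval. The base case and the final balancing computation are fine. The gap is in the sentence ``by the interval-mapping above, this instance satisfies the hypothesis of the theorem with index $i-1$,'' applied to \emph{all three} recursive calls $(k-d,\ell-d)$, $d\in\{a,b,c\}$. This is only guaranteed for $d=a$: since $a\ge k/3$ and the map $d\mapsto (k-d)/(\ell-d)$ is increasing (as $k>\ell$), one gets $(k-a)/(\ell-a)\ge \frac{2C}{3-C}=C'$, which is exactly what the paper verifies. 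For $d=b,c$ one has $d\le k/3$, and when $k=C\ell$ the ratio $(k-c)/(\ell-c)$ can be \emph{strictly below} $C'$ (the constraint $C'(\ell-c)\le k-c$ fails by an additive $\Theta(1)$), so the hypothesis of the theorem at index $i-1$ with constant $C'$ is not met, and you cannot substitute a slightly smaller constant $C''<C'$ either, because then the upper-bound condition $k-c\le (C''+o_\ell(1))(\ell-c)$ fails by a constant. Worse, when $1/C$ sits at the right endpoint $1-(2/3)^{i+1}$ of its interval, $1/C'$ equals $1-(2/3)^i$ exactly, so the sub-instance's reciprocal ratio lands in the index-$i$ interval; the statement you would need for it is the index-$i$ statement you are currently proving, which is circular under induction on $i$. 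Your suggested repair (uniform continuity of $f_{i-1}$) does not help here, because the issue is not the value of $f_{i-1}$ near the endpoint but whether the induction hypothesis is applicable to that sub-instance at all.

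The paper closes exactly this hole with its monotonicity lemma (Lemma~\ref{lem:det_exponent_monotone}): since all three sub-instances share the same difference $k-\ell$, one has $g(k-c,\ell-c)\le g(k-b,\ell-b)\le g(k-a,\ell-a)$, so only the $d=a$ call needs the inductive bound $g(k-a,\ell-a)\le f_{i-1}(C')+o_\ell(1)$, and the costs for $d=b,c$ are dominated. Inserting that one invocation (or, alternatively, restructuring your induction so that sub-instances with smaller $\ell$ can be cited at any index, which requires rephrasing the $o_\ell(1)$ statement with explicit error functions) repairs your argument; as written, the inductive step does not go through.
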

\begin{proof}
We prove by induction on $i$. 

When $i = 0$, $k \ge C\ell \ge 3\ell$. Therefore, we can apply the  $\cliquedet{k,\ell}$ example in Section~\ref{sec:detection_examples} for $\ell \leq \lfloor k / 3\rfloor$  to get $g(k, \ell) \le \omega(\lceil k/3 \rceil,   \lceil (k-1)/3\rceil, \lfloor k/3 \rfloor)/\ell$. This leads to 
\begin{align*}
    g(k, \ell) &\le \omega(k/3+1, k/3+1,k/3+1) / \ell \\
    & = \frac{(k/3+1)\omega}{\ell}\\
    & \le \frac{((C+o_\ell(1)) \ell / 3 + 1) \omega}{\ell}\\
    & \le \frac{\omega C}{3} + o_\ell(1) = f_0(C) + o_\ell(1).
\end{align*}

When $i > 0$, assume the claim is correct for $i-1$. Similar to the proof of Proposition~\ref{prop:eh_upper_bound}, we choose $a, b, c$ in our $\cliquedet{k, \ell}$ algorithm so that $\lfloor k/3\rfloor = c \le b \le a = \lceil k/3\rceil$. By the same analysis, the running time exponent can then be bounded by 
$\frac{\omega \cdot g(k-a, \ell - a)}{\omega + g(k-a, \ell - a) - 1}$. Let $C' = \frac{2C}{3-C}$. It is not difficult to verify that $\frac{1}{C'} \in \left(1-\left(\frac{2}{3}\right)^{i-1}, 1-\left(\frac{2}{3}\right)^{i} \right]$. 

Also, 
\begin{align*}
    \frac{k-a}{\ell - a} &\ge \frac{k - k/3}{\ell - k/3} \ge \frac{C \ell - (C\ell) / 3}{\ell - (C \ell) / 3} = \frac{2C}{3 - C} = C',
\end{align*}
and 
\begin{align*}
    \frac{k-a}{\ell - a} &\le \frac{k - (k/3 + 1)}{\ell - (k/3 + 1)} \le \frac{(C + o_\ell(1))\ell - ((C + o_\ell(1))\ell) / 3}{\ell - ((C + o_\ell(1)) \ell) / 3} = \frac{2C + o_\ell(1)}{3 - C - o_\ell(1)} = C' + o_\ell(1).
\end{align*}
Thus, $C'(\ell - a) \le k-a \le (C'+o_\ell(1))(\ell-a)$, so $g(k-a, \ell - a) \le f_{i-1}(C') + o_\ell(1)$ by induction. Therefore, the running time exponent of $\cliquedet{k, \ell}$ can be bounded by 
\begin{align*}
    \frac{\omega \cdot g(k-a, \ell - a)}{\omega + g(k-a, \ell - a) - 1} &= \frac{\omega}{1 + \frac{\omega - 1}{g(k-a, \ell - a)}}
    \le \frac{\omega}{1 + \frac{\omega - 1}{f_{i-1}(C') + o_\ell(1)}}
     \le \frac{\omega}{1 + \frac{\omega - 1}{f_{i-1}(\frac{2C}{3-C})}} +  o_\ell(1) = f_i(C) + o_\ell(1).
\end{align*}
\end{proof}

\begin{figure}[ht]
    \centering
    \includegraphics[width=0.6\textwidth]{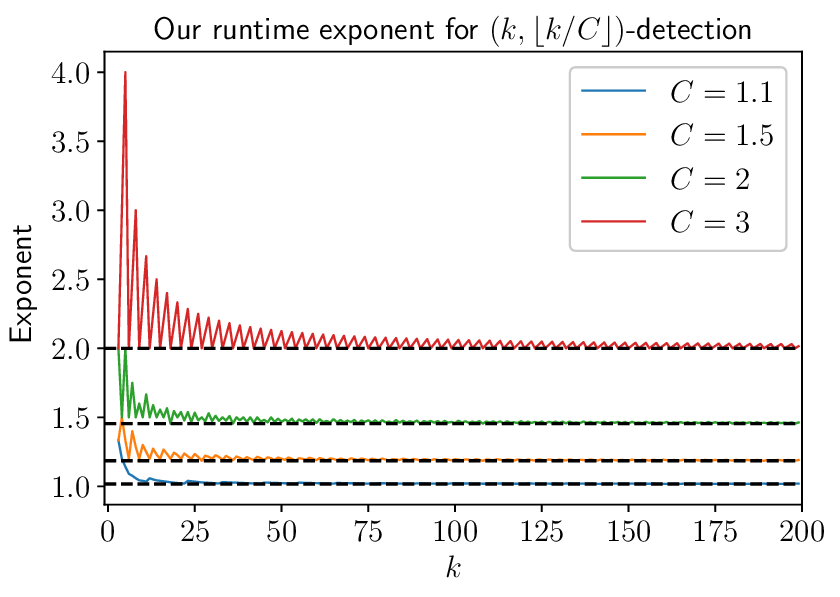}
    \caption{
    Comparison of our detection time for $(k, \lfloor k/C \rfloor)$ (assuming $\omega= 2$) between the actual running time exponent of Algorithm~\ref{alg:generic_detection} and the upper bound from Theorem~\ref{thm:det_mult_upper_bound} (without the $o_\ell(1)$ term; here $\ell=\lfloor k/C \rfloor$). The actual exponents are computed by dynamic programming. The colored lines denote the actual running time exponent, and the dashed lines denote the values of $f_i(C)$ for appropriately chosen $i$. 
    Note that the upper bound from Theorem~\ref{thm:det_mult_upper_bound} can be lower than the actual exponent because we omitted the $o_\ell(1)$ factors.}
    \label{fig:detection_upper_bound}
\end{figure}

In Figure~\ref{fig:detection_upper_bound}, we compare the bound obtained from Theorem~\ref{thm:det_mult_upper_bound} with the actual running time of Algorithm~\ref{alg:generic_detection} computed by dynamic programming for $3 \leq k \leq 200$.
In particular, for various values of $C$, we plot the exponent of $\cliquedet{k, \lfloor k/C\rfloor}$ against the upper bound obtained from Theorem~\ref{thm:det_mult_upper_bound} (without the $o_\ell(1)$ factor). Figure~\ref{fig:detection_upper_bound} shows that the estimates given by Theorem~\ref{thm:det_mult_upper_bound} are actually quite close to the actual exponents, and the values  indeed converge to our bound.

\section{Lower Bounds for Listing Cliques}\label{sec:lower_bounds}

In this section, we will show our conditional lower bound for $\cliquelist{k,\ell}$ under the Exact-$k$-Clique hypothesis.

\begin{theorem}[Theorem~\ref{thm:lb_intro}]\label{thm:lower_bound}
For any $k \ge 3, 1 \le \ell < k$, and $\gamma \in [0, k/\ell]$, $\cliquelist{k, \ell}$ for instances with $t = \tilde{\Theta}(\Delta_\ell^\gamma)$
requires $$\left(\Delta_\ell^{\frac{2}{\ell(k-\ell)}} t^{1 - \frac{2}{k(k-\ell)}}\right)^{1-o(1)}$$
    time, where $\Delta_\ell$ is the number of $\ell$-cliques and $t$ is the number of $k$-cliques required to list, assuming Hypothesis~\ref{hyp:exact_k_clique}.
\end{theorem}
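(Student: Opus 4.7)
The plan is to follow the technique sketched in the introduction, generalizing the Exact-Triangle to Triangle-Listing reduction of Vassilevska Williams and Xu. Starting from an instance of Exact-$k$-Clique, I would first assume without loss of generality that the input is $k$-partite on parts $V_1, \dots, V_k$ of size $n$ each, then use a standard random hashing trick to replace the adversarial edge weights by weights that are (approximately) independent and uniform in a range of size $O(n^k)$; this can be done with only polylogarithmic overhead and failure probability inverse polynomial in $n$.

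Next, I would partition the weight range into $s$ contiguous intervals of equal length, for a parameter $s$ to be chosen. For every choice of intervals $(L_{i,j})_{1 \le i<j \le k}$ on the edges between $V_i$ and $V_j$, I would form the subgraph that keeps only the edges whose weights land in the assigned interval. A $k$-clique of weight zero can survive in the subgraph only if $0 \in \sum_{i<j} L_{i,j}$; fixing the first $\binom{k}{2}-1$ intervals, the final interval is forced up to $O(1)$ choices, so only $O\!\bigl(s^{\binom{k}{2}-1}\bigr)$ combinations need to be processed. For each surviving combination, I would (i) list all $\ell$-cliques of the subgraph in time nearly linear in their number using Lemma~\ref{lem:simple_list_ub}, and (ii) invoke the hypothesized faster $\cliquelist{k,\ell}$ algorithm, and then check every returned $k$-clique for weight zero in constant time per clique.

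To turn this into a lower bound for instances with $t = \tilde\Theta(\Delta_\ell^\gamma)$, I would choose $s$ so that the expected counts in each subgraph satisfy this ratio: since the expected number of $\ell$-cliques is $\tilde\Theta(n^\ell / s^{\binom{\ell}{2}})$ and the expected number of $k$-cliques is $\tilde\Theta(n^k / s^{\binom{k}{2}})$, setting $s = \tilde\Theta\!\bigl(n^{(k-\gamma \ell)/(\binom{k}{2}-\gamma \binom{\ell}{2})}\bigr)$ enforces the right exponent, and one verifies that $\gamma \in [0,k/\ell]$ is exactly the range where the resulting $s$ lies in $[1,n]$. Plugging the hypothetical $\bigl(\Delta_\ell^{2/(\ell(k-\ell))} t^{1-2/(k(k-\ell))}\bigr)^{1-\eps}$ runtime into these per-subgraph counts, the $n$-exponents telescope: the $n$-powers multiply to $n^k$ and the $s$-powers collapse to $s^{1-\binom{k}{2}}$, so multiplying by $s^{\binom{k}{2}-1}$ subgraphs yields a total runtime of $\tilde O\!\bigl(n^{k-k\eps}\,(s^{\binom{k}{2}-1})^{\eps}\bigr)$, which beats $n^{k-o(1)}$ and contradicts Hypothesis~\ref{hyp:exact_k_clique} for any fixed $\eps>0$, since $s \le n$.

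The main technical hurdle I anticipate is not the algebra, which is essentially forced by the structure of the bound, but the probabilistic bookkeeping: the counts $\Delta_\ell$ and $t$ in each subgraph only match the target values in expectation, so I will need to argue concentration (or a Markov/truncation argument together with rerandomization across $O(\log n)$ independent hashings) to guarantee that, with high probability, both (a) the $\ell$-cliques can be listed in time matching the assumed bound on $\Delta_\ell$, and (b) no weight-zero $k$-clique is missed. A second subtlety is handling the boundary case $\ell = 1$, where $\binom{\ell}{2}=0$ and the formula for $s$ degenerates to $s = \tilde\Theta(n^{(k-\gamma)/\binom{k}{2}})$; this case must be checked separately but the reduction and algebra still go through, since the $\ell$-clique listing step is trivial. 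Finally, I would need to verify that the preprocessing (hashing, enumerating interval tuples, and listing $\ell$-cliques) is a lower-order term compared to the main listing cost, which follows because $s^{\binom{k}{2}-1}$ is chosen polynomially smaller than $n^k$.
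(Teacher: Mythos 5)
Your plan follows the same skeleton as the paper's proof (hash the weights, split the range into $s$ intervals, enumerate the $O(s^{\binom{k}{2}-1})$ viable interval tuples, run the assumed $\cliquelist{k,\ell}$ algorithm on each subgraph, and choose $s$ so that the subgraphs have $t\approx \Delta_\ell^\gamma$), but two of your steps, as written, do not go through. First, the closing contradiction does not follow ``since $s\le n$'': because $\binom{k}{2}-1>k$ for $k\ge 4$, a value of $s$ anywhere near $n$ makes $n^{k-k\eps}\bigl(s^{\binom{k}{2}-1}\bigr)^{\eps}$ exceed $n^{k}$, and no contradiction with Hypothesis~\ref{hyp:exact_k_clique} is obtained. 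What is actually needed is $s^{\binom{k}{2}-1}\le n^{k-\delta}$; the paper gets this by verifying that the chosen exponent $\frac{k-\gamma\ell}{\binom{k}{2}-\gamma\binom{\ell}{2}}$ is decreasing in $\gamma$, so that $s=O\bigl(n^{2/(k-1)}\bigr)$ throughout $\gamma\in[0,k/\ell]$ (your claim that this range of $\gamma$ corresponds exactly to $s\in[1,n]$ is also inaccurate). The same type of bound, $s^{\binom{c}{2}}=O(n^{c})$, is needed even earlier, in the expected-count computation: cliques sharing $c$ nodes with the zero-weight $k$-clique survive only with probability $1/s^{\binom{\ell}{2}-\binom{c}{2}}$, and showing this requires the paper's specific hashing $w'(v_i,v_j)=x\,w(v_i,v_j)+y_{v_i,j}+y_{v_j,i}$ together with a limited-independence lemma for the resulting weight differences --- making all edge weights ``(approximately) independent and uniform'' is impossible if zero-weight cliques are to be preserved, so this is a genuine lemma, not routine bookkeeping.

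Second, your preprocessing step ``list all $\ell$-cliques of each subgraph in time nearly linear in their number using Lemma~\ref{lem:simple_list_ub}'' is not available: that lemma is not output-sensitive in this sense (output-linear $\ell$-clique listing is itself a nontrivial problem), and re-listing from scratch at cost $n^{\ell}$ per subgraph costs $s^{\binom{k}{2}-1}n^{\ell}$, which exceeds $n^{k}$ for most of the parameter range, so your justification that the preprocessing is lower order ``because $s^{\binom{k}{2}-1}$ is polynomially smaller than $n^{k}$'' does not cover it. The paper instead enumerates the at most $n^{\ell}$ $\ell$-cliques of $G$ once and charges each to the $s^{\binom{k}{2}-1-\binom{\ell}{2}}$ subgraphs containing it; even then, bounding the total $n^{\ell}s^{\binom{k}{2}-1-\binom{\ell}{2}}$ by $n^{k-\delta'}$ requires the stronger estimate $s=O\bigl(n^{\frac{k-\ell}{\binom{k}{2}-\binom{\ell}{2}-1}-\delta}\bigr)$, which is derived from the standing assumption that the claimed lower bound is nontrivial, i.e.\ $\frac{2}{\ell(k-\ell)}+\gamma\bigl(1-\frac{2}{k(k-\ell)}\bigr)>1$. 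With these two repairs, plus the per-subgraph Markov/truncation bookkeeping you already flagged (including padding subgraphs that have too few $\ell$-cliques so the assumed runtime applies), your argument becomes the paper's proof.
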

    
\begin{proof}
First, we can assume 
$\frac{2}{\ell(k-\ell)} + \gamma(1 - \frac{2}{k(k-\ell)}) > 1$, as otherwise the lower bound is trivial. 

    Let $G=(V = V_1 \sqcup \cdots \sqcup V_k, E, w)$ be a $k$-partite Exact-$k$-Clique instance on $k \cdot n$ nodes. Without loss of generality, we assume the edge weights $w$ of $G$ are from $\FF_p$ for some sufficiently large prime $p = n^{O(k)}=n^{O(1)}$. Then we sample $x \sim \FF_p$ uniformly at random. For every $i \in [k]$, and every node $v \in V_i$, we sample $k-1$ random variables $(y_{v, j})_{j \in [k] \setminus\{i\}} \sim \FF_p$ where $\sum_{j \in [k] \setminus\{i\}} y_{v, j} = 0$ uniformly at random. Note that $(y_{v, j})_{j \in [k] \setminus\{i\}}$ are $(k-2)$-wise independent. 
    
    For every $1 \le i < j \le k$, and $(v_i, v_j) \in V_i \times V_j$, let $$w'(v_i, v_j) = x \cdot w(v_i, v_j) + y_{v_i, j} + y_{v_j, i}.$$ It is not difficult to verify that, whenever $x \ne 0$, the sets of exact-$k$-cliques in the graph with weight $w$ and with weight $w'$ are the same. 
    
    Then we partition $\FF_p$ into $s$ contiguous intervals, each of length $O(p/s)$, for some $s$ to be chosen later where $\Omega(1) \le s \le O(n^{\frac{2}{k-1}})$. Consider all combinations of intervals $(L_{i, j})_{1 \le i < j \le k}$, where $0 \in \sum_{1 \le i < j \le k} L_{i, j}$. If we fix an arbitrary choice of the first $\binom{k}{2} - 1$ intervals, their sumset is an interval of length $O(p/s)$. Thus, there is only $O(1)$ choices for the last interval in order for their sumset to contain $0$. Hence, there are only $O(s^{\binom{k}{2} - 1})$ such combinations. For each such combination, we construct an instance of $\cliquelist{k, \ell}$  as follows: create an unweighted graph $H$ such that an edge $(v_i, v_j) \in V_i \times V_j$ for $i < j$ in $G$ is added to $H$ if and only if $w'(v_i, v_j) \in L_{i, j}$. The high level idea then is to list a certain number of $k$-cliques in $H$ and verify whether any of them is an exact-$k$-clique in $G$. Clearly, this algorithm never finds an exact-$k$-clique if $G$ does not have one, so it suffices to show that when $G$ does have an exact-$k$-clique, the algorithm finds it with decent probability. 
    
    Let $(u_1, \ldots, u_k)$ be an arbitrary exact-$k$-clique in $G$. Clearly, there exists one combination of intervals such that all edges in this exact-$k$-clique are in the corresponding subgraph $H_0$. For any $i, j$ and edge $(v_i, v_j) \in V_i \times V_j$, the edge is in $H_0$ only if $w'(u_i, u_j) - w'(v_i, v_j) \in [-O(p/s), O(p/s)]$, which happens with probability $O(1/s)$ as long as $\{u_i, u_j\} \ne \{v_i, v_j\}$. The following lemma shows that the random variables $w'(u_i, u_j) - w'(v_i, v_j)$ are fairly independent. 
    
    \begin{lemma}
    \label{lem:independent}
    If $(v_1, \ldots, v_k)$ is not an exact-$k$-clique w.r.t. $w$, and $(u_1, \ldots, u_k)$ shares exactly $c$ nodes indexed by $S$ with $(v_1, \ldots, v_k)$, then the random variables $$\left\{w'(u_i,u_j) - w'(v_i, v_j)\right\}_{\substack{1 \le i < j \le k \\ i \not \in S \text{ or } j \not \in S}}$$ are independent.     
    \end{lemma}
    \begin{proof}
    By symmetry, we can assume $|S| = [c]$, and we need to show that 
        $$\left\{w'(u_i,u_j) - w'(v_i, v_j)\right\}_{\substack{c+1 \le j \le k \\ 1 \le i < j}} = \left\{ x\cdot (w(u_i,u_j) - w(v_i, v_j)) + y_{u_i, j} + y_{u_j, i} - y_{v_i, j} - y_{v_j, i}\right\}_{\substack{c+1 \le j \le k \\ 1 \le i < j}}$$
        are independent. 
        
        Define $\alpha_{i, j} = w'(u_i,u_j) - w'(v_i, v_j)$. Let $\beta$ be the sum of all the $\alpha_{i, j}$:
        \begin{align*}
            \beta &=\sum_{\substack{c+1 \le j \le k \\ 1 \le i < j}} \alpha_{i, j} 
            = \sum_{1 \le i < j \le k} \alpha_{i, j}\\
            &= x \cdot \left(\sum_{1 \le i < j \le k} \left( w(u_i, u_j) - w(v_i, v_j)\right)\right) + \sum_{i=1}^k \sum_{j \in [k] \setminus \{i\}} y_{u_i, j} - \sum_{i=1}^k \sum_{j \in [k] \setminus \{i\}} y_{v_i, j}\\
            &= x \cdot \left(\sum_{1 \le i < j \le k} \left( w(u_i, u_j) - w(v_i, v_j)\right)\right).
        \end{align*}
        Since $(u_1, \ldots, u_k)$ is an exact-$k$-clique whereas $(v_1, \ldots, v_k)$ is not, we have $\sum_{1 \le i < j \le k} \left( w(u_i, u_j) - w(v_i, v_j)\right) \ne 0$. Therefore, $\beta$ is uniformly random. 

        Showing $\left\{\alpha_{i, j}\right\}_{\substack{c+1 \le j \le k \\ 1 \le i < j}}$ are independent is equivalent to showing that the variables are independent when one of the variables is replaced with the sums of the variables. Namely, it suffices to show $\left\{\alpha_{i, j}\right\}_{\substack{c+1 \le j \le k \\ 1 \le i < j \\ (i, j) \ne (k-1, k)}} \cup \{\beta\}$ are independent. 
        
        Consider the following ordering of the variables:
        $$\beta, \alpha_{c+1, 1}, \ldots, \alpha_{c+1, c}, \alpha_{c+2, 1}, \ldots, \alpha_{c+2, c+1}, \ldots,
        \alpha_{k-1, 1}, \ldots, \alpha_{k-1, k-2}, 
        \alpha_{k, 1}, \ldots, \alpha_{k, k- 2}.$$
        Conditioned on the previous variables, all $\alpha_{j, i}$ variables in this list has an additive term $y_{u_j, i}$ that is independent of all previous variables. Thus, this list of variables is independent. 
    \end{proof}
    \begin{corollary}
    \label{cor:independent}
    For any $\ell$-clique on nodes $(v_i)_{i \in T}$ in $G$ that shares exactly $c$ nodes indexed by $S$ with $(u_1, \ldots, u_k)$, the random variables 
    $$\left\{w'(u_i,u_j) - w'(v_i, v_j)\right\}_{\substack{i, j \in T\\
    i < j\\ i \not \in S \text{ or } j \not \in S}}$$
    are independent. 
    \end{corollary}
    \begin{proof}
        By symmetry, we can assume $[T] =[\ell]$ and $[S] = [c]$. We can complete this $\ell$-clique to a nonzero $k$-clique $(v_1, \ldots, v_k)$ (we can  assume any $\ell$-clique is in some nonzero $k$-clique by adding hypothetical nodes to the graph in this analysis). 
        
        By Lemma~\ref{lem:independent}, $\left\{w'(u_i,u_j) - w'(v_i, v_j)\right\}_{\substack{c+1 \le j \le k \\ 1 \le i < j}}$ are independent, so 
        $\left\{w'(u_i,u_j) - w'(v_i, v_j)\right\}_{\substack{c+1 \le j \le \ell \\ 1 \le i < j}} $ are also independent. 
    \end{proof}
    
    Now we can compute the expected number of $\ell$-cliques in $H_0$. 
    The number of $\ell$-cliques in $G$ that share exactly $c$ nodes with $(u_1, \ldots, u_k)$ is $O(n^{\ell-c})$. By Corollary~\ref{cor:independent}, each of them is in $H_0$ with probability $O\left(1/s^{\binom{\ell}{2} - \binom{c}{2}}\right)$. Therefore, the expected number of $\ell$-cliques in $H_0$ is 
    $$O\left(\sum_{c=0}^\ell n^{\ell-c} / s^{\binom{\ell}{2} - \binom{c}{2}}\right) = O\left(n^\ell / s^{\binom{\ell}{2}}\right),$$
    since by our choice of $s = O(n^{2/(k-1)})$, we have that $s^{\binom{c}{2}} = O(n^c)$.

    Similarly, the expected number of $k$-cliques in $H_0$ that do not correspond to exact-$k$-cliques in $G$ is $O\left(n^k / s^{\binom{k}{2}}\right)$. 
    
    Therefore, by Markov's inequality and union bound, with probability at least $1-1/\Omega(\log n)$, the number of $\ell$-cliques in $H_0$ is at most $n^\ell \log n/ s^{\binom{\ell}{2}}$ and the number of $k$-cliques in $H_0$ that do not correspond to exact-$k$-cliques in $G$ is at most $n^k \log n / s^{\binom{k}{2}}$. 
    
    Let $s = n^{\frac{k-\gamma \ell}{\binom{k}{2} - \gamma \binom{\ell}{2}}}$, so that $\frac{n^k}{s^{\binom{k}{2}}} = \left(\frac{n^\ell}{s^{\binom{\ell}{2}}}\right)^\gamma$. We can verify that indeed $\Omega(1) \le s \le O(n^{\frac{2}{k-1}})$. In fact, since $\frac{2}{\ell(k-\ell)} + \gamma(1 - \frac{2}{k(k-\ell)}) > 1$, we can obtain a stronger upper bound $s = O\left(n^{\frac{k-\ell}{\binom{k}{2}-\binom{\ell}{2}-1}-\delta}\right)$ for $\delta>0$.

    Suppose for the sake of contradiction that there is a $\cliquelist{k, \ell}$ algorithm $\mathcal{A}$ for instances with specified $t = \tilde{\Theta}(\Delta_\ell^\gamma)$  with running time $$T(\Delta_\ell, t) = O\left(\left(\Delta_\ell^{\frac{2}{\ell(k-\ell)}} t^{1 - \frac{2}{k(k-\ell)}}\right)^{1-\eps}\right)$$
    for some $\eps > 0$. Then consider the following algorithm for Exact-$k$-Clique:
    \begin{enumerate}
        \item 
        First, hash the weights of the graph and enumerate $O(s^{\binom{k}{2} - 1})$ graphs $H$ as described earlier.
        \item Enumerate all $\ell$-cliques in $G$, and pre-compute which graphs $H$ contain each $\ell$-clique. Since each $\ell$-clique exists in $s^{\binom{k}{2}-1 -\binom{\ell}{2}}$ graphs $H$, and this list of graphs can be listed efficiently, this step costs 
        $$\tO\left(n^\ell \cdot s^{\binom{k}{2}-1 -\binom{\ell}{2}}\right) \le \tO\left(n^\ell \cdot \left(n^{\frac{k-\ell}{\binom{k}{2}-\binom{\ell}{2}-1}-\delta}\right)^{\binom{k}{2}-1 -\binom{\ell}{2}}\right)\le \tO(n^{k-\delta'})$$
        for some $\delta' > 0$.
        
        \item \label{item:lower-bound-skip} From the previous step, we have a list of $\ell$-cliques for each graph $H$. If some $H$ contains more than $n^\ell \log n/ s^{\binom{\ell}{2}}$ $\ell$-cliques, we skip it. If it contains fewer than $0.99 n^\ell / s^{\binom{\ell}{2}}$ $\ell$-cliques, we add a complete $\ell$-partite graphs with $n'$ nodes on each part, for some $n'$, so that the total number of $\ell$-cliques in the new graph reaches $0.99(n+n')^\ell / s^{\binom{\ell}{2}}$. Clearly, $n' = O(n)$. 
        \item \label{item:lower-bound-listing} For graphs $H$ which we did not skip in the previous step, we run $\mathcal{A}$ on it with $t = n^k \log n / s^{\binom{k}{2}} + 1$. For any $k$-clique listed by $\mathcal{A}$, we test whether it is an exact-$k$-clique in $G$. This step takes $s^{\binom{k}{2}-1} \cdot T(\tO(n^\ell / s^{\binom{\ell}{2}}), \tO(n^k / s^{\binom{k}{2}}))$ time.
        \item If any exact-$k$-clique is found in the previous step, we return YES for the Exact-$k$-Clique instance; otherwise, we return NO. 
    \end{enumerate}
    
    Clearly, if $G$ contains no exact-$k$-clique, our algorithm is always correct. If $G$ contains any exact-$k$-clique, let $H_0$ be the constructed graph containing it. As discussed previously, with probability $1-1/\Omega(\log n)$, the number of $\ell$-cliques in $H_0$ is at most $n^\ell \log n/ s^{\binom{\ell}{2}}$ and the number of $k$-cliques in $H_0$ that do not correspond to exact-$k$-cliques in $G$ is at most $n^k \log n / s^{\binom{k}{2}}$. In this case, we will not skip $H_0$ in Step~\ref{item:lower-bound-skip}, and listing $t = n^k \log n / s^{\binom{k}{2}} + 1$ $k$-cliques in Step~\ref{item:lower-bound-listing} guarantees an exact-$k$-clique. Thus, we will find an exact-$k$-clique with probability $1-1/\Omega(\log n)$, which can be boosted to $1-1/\poly(n)$ by repeating the algorithm $O(\log n)$ times. 
    
    Overall, this algorithm only needs time (besides the previous $\tO(n^{k-\delta'})$ time)
    \begin{align*}
        &\tilde{O}\left(s^{\binom{k}{2}-1}\cdot \left(\left(n^\ell / s^{\binom{\ell}{2}}\right)^{\frac{2}{\ell(k-\ell)}} \left(n^k / s^{\binom{k}{2}}\right)^{1-\frac{2}{k(k-\ell)}} \right)^{1-\eps}\right) =  \tilde{O}\left(n^k \cdot \left(\frac{s^{\binom{k}{2}-1}}{n^k}\right)^\eps\right).
    \end{align*}
    As $s = O(n^{\frac{2}{k-1}})$, the above running time can be further upper bounded by 
    \begin{align*}
        \tilde{O}\left(n^k \cdot \left(\frac{(n^{\frac{2}{k-1}})^{\binom{k}{2}-1}}{n^k}\right)^\eps\right) = \tO\left(n^{k-\frac{2\eps}{k-1}}\right),
    \end{align*}
    contradicting the Exact-$k$-Clique hypothesis. 
\end{proof}

\section{Optimal Listing Algorithms for Graphs with Many \texorpdfstring{$k$}{k}-Cliques}
\label{sec:upper-bound}
In this section, we give a $\cliquelist{k, 1}$ algorithm that is  optimal for graphs with many $k$-cliques under Hypothesis~\ref{hyp:exact_k_clique}. This algorithm can be seen as a generalization of the densifying and sparsifying paradigm of  \cite{bjorklund2014listing}. 

We then show how we can extend this algorithm to obtain the conditionally optimal algorithms for all $\cliquelist{k, \ell}$ for graphs with many $k$-cliques.

\subsection{Algorithm}

First, we describe the algorithm for $\cliquelist{k, 1}$ in Algorithm~\ref{alg:large_t_sparse_dense}. 

 \begin{breakablealgorithm}
        \caption{$\cliquelist{k, 1}$ Algorithm for large $t \geq n^{\gamma_k}$, where $\gamma_k$ is defined in Theorem~\ref{thm:k_1_optimal}}\label{alg:large_t_sparse_dense}

        \begin{algorithmic}
            \item \dense{}$(G:= (V, E), n, t)$:
            \begin{itemize}
            \item \textbf{Input:} Graph $G = (V, E)$ with $|V| \leq n$ and at most $t$ $k$-cliques.
            \item \textbf{Output:} List of $k$-cliques in $G$.
            \item \textbf{The Algorithm:}
        \begin{enumerate}
            \item If $n < k$, it returns no $k$-cliques.
            \item Choose a parameter $\lambda$. Let an edge be $\lambda$-light if it is in fewer than $\lambda$ $k$-cliques.
            \item Use the algorithm in Lemma~\ref{lem:simple_list_ub} to obtain a list $L$ of all $(k-2)$-cliques (there are at most $n^{k-2}$ such cliques).
            \item Initialize an empty list $T$.
            \item \label{step:matmul_sampling}Repeat the following $O(\lambda \log n)$ times:
            \begin{itemize}
                \item Sample a subset $L'$ of $L$ of size $|L|/\lambda$.
                \item Construct adjacency matrices $A$ and $\overline{A}$ where the rows are indexed by $V$ and columns are indexed by $L'$.
                \item Let $A[v, C] = 1$ if node $v$ is distinct from and adjacent to every node in the $(k-1)$-clique $C$, and set $A[v, C] = 0$ otherwise.
                \item Let $\overline{A}[v, C] = A[v, C] \cdot C$, i.e. column $C$ contains entries 0 or $C$. 
                \item Compute $B = A \cdot A^T$ and $\overline{B} = A \cdot \overline{A}^T$. This takes $O(\MM(n, |L'|, n))$ time.
                \item For every edge $(u, v) \in E$ that is $\lambda$-light, if $B[u, v] = 1$, add $(u, v, \overline{B}[u, v])$ to $T$.
            \end{itemize}
            \item Output $T$.
            \item Delete all $\lambda$-light edges from $E$ to obtain $E'$ (all $\lambda$-light edges are found in Step \ref{step:matmul_sampling} w.h.p.).
            \item Call {\tt Sparse}$(G' := (V, E'), \binom{k}{2}t/\lambda, t)$.
        \end{enumerate}
        \end{itemize}
    \item\sparse{}$(G:= (V, E), m, t)$:
       \begin{itemize}
       \item \textbf{Input:} Graph $G = (V, E)$ with $|E| \leq m$ and at most $t$ $k$-cliques.
       \item \textbf{Output:} List of $k$-cliques in $G$.
       \item \textbf{The Algorithm:}
        \begin{enumerate}
            \item If $m < \binom{k}{2}$, it returns no $k$-cliques.    
            \item Choose a parameter $x$.
            \item Find all nodes such that $\deg(v) \leq x$, and call the $\cliquelist{k-1,1}$ algorithm in the neighbourhoods of all such nodes with $n' = \deg(v)$.
            \item Delete all nodes in $V$ of degree less than $x$ to obtain set $V'$.
            \item Call {\tt Dense}$(G' := (V', E \cap (V' \times V')), 2m/x, t)$
        \end{enumerate}
    \end{itemize}
    \end{algorithmic}
    \end{breakablealgorithm}

In the $\dense$ algorithm, we use matrix multiplication to enumerate all $k$-cliques containing light edges, i.e. edges that are part of very few $k$-cliques. These edges are then removed to result in a  \emph{sparse} graph with only edges that are part of many $k$-cliques.

In the $\sparse$ algorithm, we enumerate all $k$-cliques containing low-degree nodes by recursively listing all $(k-1)$-cliques in their neighborhoods, and delete all such nodes. Deleting these nodes results in a \emph{dense} graph with only high degree nodes. While one could brute-force the $(k-1)$-cliques in the neighborhoods, our key insight is that we can instead recursively use a $\cliquelist{k-1, 1}$ algorithm to be more efficient. 

We first show the correctness of Algorithm~\ref{alg:large_t_sparse_dense} and defer its runtime analysis to Section~\ref{sec:k_1_opt}.

\paragraph{Correctness.} It is clear that the $\sparse$ algorithm finds all $k$-cliques in the neighborhoods of low-degree nodes. At the end of the algorithm, since the graph has $\Delta_\ell$ $\ell$-cliques and only nodes with degree at least $x$, there are at most $2m/x$ nodes left in the graph.

Now, we argue that the $\dense$ algorithm lists all $k$-cliques containing $\lambda$-light edges.

We argue that Step~\ref{step:matmul_sampling} finds all $\lambda$-light edges with high probability. For every $(u, v) \in E$, let $L_{u, v}$ denote the set of all $(k-2)$-cliques that form $k$-cliques with nodes $u$ and $v$. Since we sample $L'$ of size $|L|/\lambda$, the probability that $L_{u, v} \cap L' = K_{k-2}$ for any fixed $K_{k-2} \in L_{u, v}$ is \[\frac{1}{\lambda} \cdot \left(1 - \frac{1}{\lambda}\right)^{|L_{u, v}| - 1} \geq \frac{1}{\lambda} \cdot \left (1 - \frac{1}{\lambda}\right)^{\lambda-1} \geq \frac{1}{e \lambda}.\]
Therefore, by choosing $O(\lambda \log  n)$ random sets of size $|L|/\lambda$, with high probability, we find all $k$-cliques containing $\lambda$-light edges.

\paragraph{$\cliquelist{k, \ell}$ when $\ell \geq 2$.} To generalize this algorithm to $\cliquelist{k, \ell}$ for $\ell \geq 2$, we recursively use $\cliquelist{k-1, \ell-1}$ to reduce the problem to $\cliquelist{k, 1}.$ At a high level, the algorithm considers all nodes $v$ in fewer than $x$ $\ell$-cliques and recursively calls $\cliquelist{k-1,\ell-1}$ to list all $k$-cliques containing $v$. See Algorithm~\ref{alg:optimal_kl}. The correctness of Algorithm~\ref{alg:optimal_kl} can be shown as follows. 

\begin{algorithm}[ht]
    \caption{$\cliquelist{k, \ell}$ Algorithm for large $t \geq n^{\gamma_{k,\ell}}$, where $\gamma_{k, \ell}$ is defined in Theorem~\ref{thm:k-l-large-t-listing}}\label{alg:optimal_kl}
    \begin{algorithmic}
    \item \textbf{Input:} A graph $G$ and a list $L$ of all $\ell$-cliques.
    \item \textbf{Output:} All $k$-cliques in the graph.
    \item \textbf{The Algorithm:}
    \begin{enumerate}
        \item Call a node $v$ light if $\Delta_\ell(v) \leq x$, for some parameter $x$. 
        \item\label{step:kl_opt_lightnodes} For all light nodes, call $\cliquelist{k-1, \ell-1}$ in the neighbourhoods to find all $k$-cliques incident to $x$. 
        \item Delete all light nodes and incident edges from $G$. 
        \item\label{step:kl_densecall} Call the $\cliquelist{k, 1}$ algorithm $\dense(G' := (V', E'), \ell \Delta_\ell/x, t)$ (from Algorithm~\ref{alg:large_t_sparse_dense}).
    \end{enumerate}
    \end{algorithmic}
\end{algorithm}

\paragraph{Correctness.} It is clear that the algorithm lists all $k$-cliques incident to low-degree nodes. Since all remaining nodes are in at least $x$ $\ell$-cliques, and each $\ell$-cliques contains at most $\ell$ nodes, we can bound the remaining number of nodes by $\ell \Delta_\ell/x$.

To illustrate these algorithms, we first show simplified analyses of Algorithms~\ref{alg:large_t_sparse_dense} and \ref{alg:optimal_kl} for the case of $k = 4$ and $k=5$ assuming that $\omega = 2$ in Section~\ref{sec:4_5_l_listing}. We give more detailed analyses in terms of $\omega$ in Sections~\ref{sec:k_1_opt} and \ref{sec:k_l_opt}. 

\subsection{Analysis for \texorpdfstring{$k = 4$}{k = 4} and \texorpdfstring{$k = 5$}{k = 5} assuming \texorpdfstring{$\omega = 2$}{omega = 2}}\label{sec:4_5_l_listing}
In this section, we illustrate how to analyze the runtime for listing algorithm by considering the cases where $k = 4$ or $k=5$.
\begin{proposition}
    Suppose $\omega = 2$. Then, given a graph $G$ with $t$ 4-cliques,
    \begin{itemize}
        \item $\cliquelist{4,1}$ can be solved in $\tilde{O}(n^3 + n^{2/3}t^{5/6})$ if $G$ has $n$ nodes.
        \item $\cliquelist{4,2}$ can be solved in $\tilde{O}(m^{3/2} + mt^{2/5} + m^{1/2}t^{3/4})$ if $G$ has $m$ edges.
        \item $\cliquelist{4, 3}$ can be solved in $\tilde{O}(\Delta^{6/5} + \Delta t^{1/5} + \Delta^{2/3}t^{1/2})$ if $G$ has $\Delta = \Delta_3$ triangles.
     \end{itemize}  
\end{proposition}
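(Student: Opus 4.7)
All three bounds follow from the dense-sparse framework of Section~\ref{sec:upper-bound}: Algorithm~\ref{alg:large_t_sparse_dense} for $\cliquelist{4,1}$ and Algorithm~\ref{alg:optimal_kl} (which invokes $\cliquelist{4,1}$ as a subroutine) for $\cliquelist{4,2}$ and $\cliquelist{4,3}$, with the triangle-listing algorithm of \cite{bjorklund2014listing} at the base case. In each application I will choose the sampling rate $\lambda$ and the degree or triangle threshold $x$ to balance three kinds of cost: the $\MM(n, m/\lambda, n)$ cost of the Dense phase, the total triangle-listing cost over the light neighborhoods, and the cost of the subsequent recursive call on the residual graph. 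H\"{o}lder's inequality (Corollary~\ref{cor:holders_useful}) will be the main tool to bound sums like $\sum_v d_v t_v^{2/3}$ in terms of $m$, $x$, and $t$.

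For $\cliquelist{4,1}$ I apply Algorithm~\ref{alg:large_t_sparse_dense} with $k = 4$. Dense performs $\lambda$ rectangular products of shape $n \times (m/\lambda) \times n$, costing $\tilde O(\lambda n^2 + nm)$ at $\omega = 2$, i.e., $\tilde O(\lambda n^2 + n^3)$ on the first call. After Dense at most $O(t/\lambda)$ edges remain; Sparse lists triangles in each low-degree neighborhood at cost $\tilde O(d_v^2 + d_v t_v^{2/3})$, and a H\"{o}lder bound using $\sum_v d_v \le 2m'$ and $\sum_v d_v t_v \le 4xt$ yields total Sparse cost $\tilde O(x m' + (m')^{1/3} x^{2/3} t^{2/3})$. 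The residual has $O(m'/x)$ vertices and the algorithm re-enters Dense. Setting $\lambda = t^{5/6}/n^{4/3}$ and enforcing $x\lambda = \Theta(t/n)$ so that the vertex count shrinks by a constant factor per round makes the total work telescope to the first iteration, giving $\tilde O(n^3 + n^{2/3} t^{5/6})$.

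For $\cliquelist{4, \ell}$ with $\ell \in \{2, 3\}$ I apply Algorithm~\ref{alg:optimal_kl}: light nodes (in at most $x$ $\ell$-cliques) are handled by recursively running $\cliquelist{3, \ell - 1}$ in their neighborhoods, and the remaining graph is then passed to $\cliquelist{4, 1}$. For $\ell = 2$, summing $\tilde O(d_v^2 + d_v t_v^{2/3})$ via H\"{o}lder and invoking $\cliquelist{4, 1}$ on the $O(m/x)$-vertex residual yields a total of the form $\tilde O(xm + m^{1/3} x^{2/3} t^{2/3} + (m/x)^3 + (m/x)^{2/3} t^{5/6})$. In each of the three $t$-regimes I then choose $x$ to equalize the dominating pair of terms: $x = \sqrt m$ produces $\tilde O(m^{3/2})$ for $t \le m^{5/4}$; an intermediate $x \sim t^{2/5}$ (combined with a finer use of the $\cliquelist{4,1}$ call that exploits the small edge count of the residual) produces $\tilde O(m t^{2/5})$; and for $t \ge m^{10/7}$, an $x$ balancing $xm$ against the output-sensitive $\cliquelist{4,1}$ term gives $\tilde O(m^{1/2} t^{3/4})$. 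The $\ell = 3$ case is analogous with $\Delta$ replacing $m$ and $\cliquelist{3, 2}$ at cost $\tilde O(m_v^{4/3} + m_v t_v^{1/3})$ at the base, with the analogous three regimes producing $\tilde O(\Delta^{6/5})$, $\tilde O(\Delta t^{1/5})$, and $\tilde O(\Delta^{2/3} t^{1/2})$.

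The main technical obstacle is the middle regime for $\ell \in \{2, 3\}$: applying the $\cliquelist{4,1}$ bound $\tilde O(n^3 + n^{2/3} t^{5/6})$ as a black box to the $O(m/x)$-vertex residual does not by itself produce the intermediate rates $\tilde O(m t^{2/5})$ or $\tilde O(\Delta t^{1/5})$, because on the residual graph the $nm$ summand in the Dense cost of $\cliquelist{4,1}$ is smaller than $n^3$. I therefore need to re-derive the $\cliquelist{4,1}$ runtime on a sparse input, replacing $n^3$ by $n m_{\mathrm{res}}$ in the first Dense step and re-balancing. Combined with the tightness of H\"{o}lder on $\sum_v d_v t_v^{2/3}$ and $\sum_v m_v t_v^{1/3}$, this refined accounting is what recovers the fractional regime boundaries $m^{5/4}, m^{10/7}, \Delta, \Delta^{10/9}$ predicted by the claim.
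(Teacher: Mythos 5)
Your treatment of the first two bullets follows the paper's own route: the dense--sparse recursion of Algorithm~\ref{alg:large_t_sparse_dense} with H\"older to bound $\sum_v \deg(v)\Delta_4(v)^{2/3}$, and, for $\cliquelist{4,2}$, exactly the refinement the paper implements as $D'(n,m,t)$ -- charging the Dense step $nm_{\mathrm{res}}$ instead of $n^3$ on the residual so that the terms $mt/x^2$ and $m^2/x$ appear. Your specific middle-regime choice ``$x\sim t^{2/5}$'' is off (with it the H\"older term is $m^{1/3}t^{14/15}>mt^{2/5}$ in the regime $m^{5/4}\le t\le m^{10/7}$); the correct choice is $x=m/t^{2/5}$, as in the paper, but this is a slip inside the right framework rather than a flaw in the approach.

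The third bullet has a genuine gap. You propose Algorithm~\ref{alg:optimal_kl} with $\ell=3$: delete nodes lying in at most $x$ triangles after running $\cliquelist{3,2}$ in their neighborhoods, then hand the residual, which has $O(\Delta/x)$ \emph{nodes}, to $\cliquelist{4,1}$. This cannot reach $\tilde O(\Delta^{6/5}+\Delta t^{1/5}+\Delta^{2/3}t^{1/2})$, no matter how you refine the accounting: the light-node work is (by H\"older) about $x^{1/3}\Delta^{2/3}t^{1/3}$, forcing $x\le t^{1/2}$ to stay under $\Delta^{2/3}t^{1/2}$, while the Dense phase of the residual $\cliquelist{4,1}$ call costs at least $n'm'$ with $n'=\Theta(\Delta/x)$ and $m'$ as large as $\Theta(\min(\Delta,n'^2))$, i.e.\ at least $\min(\Delta^2/x,\Delta^3/x^3)$. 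Optimizing $x$ against these two constraints gives roughly $\min(\Delta t^{1/4},\Delta^{9/10}t^{3/10})$, which is polynomially above the claimed bound throughout $\Delta\lesssim t\lesssim\Delta^{7/6}$ (e.g.\ at $t=\Delta^{10/9}$ you get at best about $\Delta^{1.23}$ versus the claimed $\Delta^{11/9}\approx\Delta^{1.222}$). The point is that deleting light \emph{nodes} only shrinks the node count, which is the wrong parameter here. The paper explicitly remarks that Algorithm~\ref{alg:optimal_kl} is not good enough for $\cliquelist{4,3}$ and instead uses the dedicated Algorithm~\ref{alg:4_3}: call an \emph{edge} light if it lies in at most $x$ triangles, list its 4-cliques by brute-forcing pairs among its $\le x$ triangle-neighbors (total $O(\Delta x)$), delete all light edges so that the residual has $O(\Delta/x)$ \emph{edges}, and invoke the $\cliquelist{4,2}$ bound there; choosing $x=\max\{\Delta^{1/5},t^{1/5},t^{1/2}/\Delta^{1/3}\}$ then yields $\tilde O(\Delta^{6/5}+\Delta t^{1/5}+\Delta^{2/3}t^{1/2})$. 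You would need this edge-based densification (or an equivalent idea) to establish the third bullet.
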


\begin{proof}
    Consider the $\dense$ algorithm. In this case, $L$ is a list of all (up to $n^2$) edges. Therefore, the runtime of this step can be bounded by 
    \[D'(n, m, t) \leq n^2 + \lambda \log n \cdot \MM(n, m/\lambda, n) + S(6t/\lambda, t) = \tilde{O}(n^2 + \lambda n^2 + nm) + S(6t/\lambda).\]
    We can also upper bound $m$ by $n^2$ to obtain the following bound without a dependence on $n$:
\[
    D(n, t) \leq D(n, n^2, t) \leq \tilde{O}(n^3 + \lambda n^2) + S(6t/\lambda, t).
\]
assuming $\omega = 2$. 
    
Consider the $\sparse$ algorithm. In this case, we call $(3,1)$-listing, which takes time $\tO(n^2 + nt^{2/3})$. Therefore (ignoring $\tO(1)$ factors), 
\begin{align*}
    S(m, t) &\leq \sum_{v: \deg(v) \leq x} \left(\deg(v)^2 + \deg(v) \Delta_4(v)^{2/3}\right)+ D'(2m/x, m, t)\\
    &\leq  \sum_{v: \deg(v) \leq x} \left(\deg(v) \cdot x + \deg(v)^{1/3} \Delta_4(v)^{2/3} x^{2/3}\right)+ D'(2m/x, m, t)\\
    &\leq mx + m^{1/3}t^{2/3}x^{2/3} + D'(2m/x, m, t),
\end{align*}
where we applied H\"{o}lder's inequality as seen in Corollary~\ref{cor:holders_useful}. 

\paragraph{$\cliquelist{4, 1}$ analysis.} To obtain a runtime for $\cliquelist{4,1}$, we unravel the recursion in $D(n, t)$. Ignoring $\tilde{O}(1)$ factors in the following inequalities, we have
\begin{align*}
    D(n, t) &\leq n^3 + \lambda n^2 + S(6t/\lambda, t)\\
    &\leq n^3 + \lambda n^2 + \frac{tx}{\lambda} + \frac{tx^{2/3}}{\lambda^{1/3}} + D\left(\frac{12t}{\lambda x}, t\right)
\end{align*}
Choosing $\lambda = \max\{1, \frac{24t}{nx}\}$, we have $\frac{12t}{\lambda x} \leq n/2$, and the above runtime will be dominated by the first four terms up to $\tilde{O}(1)$ factors. Substituting this value of $\lambda$, we obtain a runtime of 
    \[D(n, t) \leq n^3 + \frac{tn}{x} + nx^2 + n^{1/3}t^{2/3}x.
    \]
Choosing \[x = 
    \begin{cases}
        n & t \leq n^{5/2}\\
        n^{8/3}/t^{2/3} & n^{5/2} \leq t \leq n^{14/5}\\
        n^{1/3}t^{1/6} & t \geq n^{14/5}
    \end{cases},\]
we obtain $D(n, t) = n^3 + n^{2/3}t^{5/6}$, as desired.

\paragraph{$\cliquelist{4, 2}$ analysis.} To obtain a runtime for $\cliquelist{4, 2}$, we analyze the runtime of $S(m, t)$.
Here, we use our bound $D'$ in terms of $n$, $m$ and $t$ to get a tighter analysis
(instead of just $n$ and $t$). 

\begin{align*}
    S(m, t) &\leq  mx + m^{1/3}t^{2/3}x^{2/3} + D'(2m/x,m, t)\\
    &\leq mx + m^{1/3}t^{2/3}x^{2/3} + \frac{\lambda m^2}{x^2} + \frac{m^2}{x} + S(6t/\lambda, t).
\end{align*}
Choosing $\lambda = \max\{1, \frac{12t}{m}\}$, we have that $6t/\lambda \leq m/2$. Therefore, the first four terms dominate up to $\tilde{O}(1)$ factors, so (ignoring $\tO(1)$ factors)
\begin{align*}
    S(m, t) &\leq mx + m^{1/3}t^{2/3}x^{2/3} + \frac{mt}{x^2} + \frac{m^2}{x}.
\end{align*}
By choosing 
\[
    x = \begin{cases}
        m^{1/2} &\text{if $t \leq m^{5/4}$}\\
        m/t^{2/5} &\text{if $m^{5/4} < t \leq m^{10/7}$}\\
        m^{1/4}t^{1/8} & \text{if $t > m^{10/7}$,}
    \end{cases}
\] 
we get a runtime of 
\[S(m, t) \leq m^{3/2} + mt^{2/5} + m^{1/2}t^{3/4}.\]

\paragraph{$\cliquelist{4,3}$ analysis.} Note that while we can use Algorithm~\ref{alg:optimal_kl} to bound the runtime in this case, we instead provide a more efficient algorithm shown in Algorithm~\ref{alg:4_3} for $\cliquelist{4, 3}$.

\begin{algorithm}[ht]
    \caption{$\cliquelist{4, 3}$ algorithm}\label{alg:4_3}
    \begin{algorithmic}
    \item \textbf{Input:} Graph $G = (V, E)$, and a list $L$ of all triangles in $G$.
    \item \textbf{Output:} A list of all $k$-cliques in $G$.
    \begin{enumerate}
        \item Call an edge light if it occurs in fewer than $x$ triangles, i.e. $\Delta(e) \leq x$. 
        \item\label{step:light_edges_in_triangles} For all light edges $e$, consider all pairs of nodes in its neighbourhoods to find all 4-cliques containing $e$.
        \item Delete all light edges from $G$.
        \item Call $\cliquelist{4, 2}$ algorithm $\sparse(G' := (V', E'), 3\Delta/x, t).$ 
    \end{enumerate}
    \end{algorithmic}
\end{algorithm}

The runtime of Step~\ref{step:light_edges_in_triangles} is bounded by $\sum_{e:\Delta(e) \le x} \Delta(e)^2 \leq \Delta x.$ Now, we call $\cliquelist{4, 2}$ with a graph with at most $3\Delta/x$ edges and $t$ $4$-cliques, giving a runtime of 
\[
    \left(\Delta/x\right)^{3/2} + \left(\Delta/x\right) t^{2/5} + \left(\Delta/x\right)^{1/2} t^{3/4}.
\]
Therefore, choosing $x = \max\{\Delta^{1/5}, t^{1/5}, t^{1/2}/\Delta^{1/3}\}$, we get a runtime of $\tilde{O}\left(\Delta^{6/5} + \Delta t^{1/5} + \Delta^{2/3} t^{1/2}\right).$ 

\end{proof}

\begin{proposition}
    Suppose $\omega = 2$. Then, given a graph $G$ with $t$ 5-cliques,
    \begin{itemize}
        \item $\cliquelist{5,1}$ can be solved in $\tilde{O}(n^4 + n^{1/2}t^{9/10})$ if $G$ has $n$ nodes.
        \item $\cliquelist{5,2}$ can be solved in $\tilde{O}(m^2 + m^{17/18}t^{10/18} + m^{1/3}t^{13/15})$ if $G$ has $m$ edges.
     \end{itemize}  
\end{proposition}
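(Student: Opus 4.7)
The plan is to mirror the $k=4$ proposition directly above, feeding the newly proved $\cliquelist{4,1}$ bound $\tilde{O}(n^3 + n^{2/3}t^{5/6})$ into the analysis of Algorithm~\ref{alg:large_t_sparse_dense}. First I specialize $\dense$ to $k=5$: the list $L$ now consists of all triangles, of which there are at most $\min(n^3, m^{3/2})$. Since $\omega = 2$, the sampled matrix multiplication step costs $\lambda\cdot \MM(n, |L|/\lambda, n) \le \lambda n^2 + n|L|$, giving the two usable bounds
\begin{align*}
D(n, t) &\le \tilde{O}(n^4 + \lambda n^2) + S(10t/\lambda, t),\\
D'(n, m, t) &\le \tilde{O}\bigl(m^{3/2} + \lambda n^2 + n m^{3/2}\bigr) + S(10t/\lambda, t).
\end{align*}

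Next I unfold $\sparse$. Each low-degree node $v$ (with $\deg(v) \le x$) triggers a $\cliquelist{4,1}$ call on its neighborhood, costing $\tilde{O}(\deg(v)^3 + \deg(v)^{2/3}\Delta_5(v)^{5/6})$. The cubic sum is at most $2mx^2$ using $\deg(v) \le x$ and $\sum\deg(v)\le 2m$. For the mixed sum the exponents $(2/3,5/6)$ do not add to $1$, so I first split $\deg(v)^{2/3} = \deg(v)^{1/6}\deg(v)^{1/2}$ and bound $\deg(v)^{1/2} \le x^{1/2}$, then apply Corollary~\ref{cor:holders_useful} with $(\alpha,\beta)=(1/6,5/6)$ against $\sum\deg(v)\le 2m$ and $\sum\Delta_5(v)\le 5t$, giving $x^{1/2}m^{1/6}t^{5/6}$. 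Combined with $D'$, this yields $S(m, t) \le \tilde{O}(mx^2 + x^{1/2}m^{1/6}t^{5/6} + D'(2m/x, m, t))$.

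For $\cliquelist{5,1}$ I set $\lambda = \max\{1, 40t/(nx)\}$ so that the $\dense$ re-invocation inside $\sparse$ has node budget at most $n/2$, reducing the full recursion to depth $O(\log n)$. The dominant terms to minimize are then $n^4$, $tn/x$, $nx^3$, and $n^{1/6}x^{2/3}t^{5/6}$, and the choice $x = n^{1/2}t^{1/10}$ balances $tn/x$ against the Hölder term at value $n^{1/2}t^{9/10}$. The main check is that $nx^3 = n^{5/2}t^{3/10}$ is dominated by $n^{1/2}t^{9/10}$ once $t \ge n^{10/3}$ and by $n^4$ otherwise, so the bound $\tilde{O}(n^4 + n^{1/2}t^{9/10})$ holds across all $t$.

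For $\cliquelist{5,2}$ I instead use the tighter $D'$ bound and set $\lambda = 20t/m$ so that $10t/\lambda = m/2$ and the $S$-recursion unrolls in $O(\log m)$ steps, leaving
\begin{align*}
S(m, t) \le \tilde{O}\bigl(mx^2 + x^{1/2}m^{1/6}t^{5/6} + m^{3/2} + tm/x^2 + m^{5/2}/x\bigr).
\end{align*}
The hardest step is choosing $x$ per regime to match the three-piece runtime: $x = m^{1/2}$ when $t \le m^{19/10}$ (so both $mx^2$ and $m^{5/2}/x$ equal $m^2$), $x = m^{14/9}/t^{5/9}$ in the middle regime (balancing $m^{5/2}/x$ against the Hölder term at value $m^{17/18}t^{10/18}$), and $x = m^{1/3}t^{1/15}$ once $t \ge m^{55/28}$ (balancing $mx^2$ against the Hölder term at value $m^{1/3}t^{13/15}$). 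The crossover thresholds $m^{19/10}$ and $m^{55/28}$ arise exactly where a different pair of terms becomes tight, so the remaining verification is simply that on each subinterval the two non-dominant terms $tm/x^2$ and $m^{3/2}$ stay below the claimed bound.
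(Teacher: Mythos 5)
Your proposal is correct and takes essentially the same route as the paper's proof: specialize the dense/sparse framework to $k=5$, feed the $\cliquelist{4,1}$ bound into the low-degree neighborhoods via H\"{o}lder, pick $\lambda$ so the recursion halves, and optimize $x$ per regime; your explicit $tm/x^2$ accounting in the $\cliquelist{5,2}$ case is just a slightly more careful unrolling of the paper's use of the $D'(m/x,m^{3/2},t)$ bound, and it reproduces the same three regimes and thresholds $m^{19/10}$, $m^{55/28}$. One cosmetic slip: for $t \ge m^{55/28}$ your choice $x = m^{1/3}t^{1/15}$ actually balances $tm/x^2$ (not $mx^2$) against the H\"{o}lder term, but all five terms are still at most $m^{1/3}t^{13/15}$ in that regime, so the stated bound stands.
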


\begin{proof}
Consider the $\dense$ algorithm. In this case, $L$ is a list of all (up to $n^3$) triangles. Therefore, the runtime of this step is 
\[D'(n, \Delta_3, t) \leq n^3 + \lambda \log n \cdot \MM(n, \Delta_3/\lambda, n) + S(10t/\lambda, t) = \tilde{O}(n^3+n\Delta_3 + \lambda n^2) + S(10t/\lambda, t)\]
assuming $\omega = 2$.
Upper bounding $\Delta_3 \leq O(n^3)$, we get a bound without dependence on $\Delta_3$ of \[D(n, t) \leq D'(n, n^3, t) = \tilde{O}(n^4 + \lambda n^2) +S(10t/\lambda).\]
Consider the $\sparse$ algorithm. In this case, we call $\cliquelist{4, 1}$ in the neighborhoods of all low-degree nodes, so
\begin{align*}
    S(m, t) \leq \sum_{v: \deg(v) \leq x} \left(\deg(v)^3 + \deg(v)^{2/3} \Delta_5(v)^{5/6}\right) + D(2m/x, t) &\leq mx^2 + m^{1/6}t^{5/6}x^{1/2} +  D(2m/x, t)
\end{align*}
by using H\"{o}lder's inequality as in Corollary~\ref{cor:holders_useful}.

\paragraph{$\cliquelist{5, 1}$ analysis.} To obtain a runtime for $\cliquelist{5, 1}$, we analyze the runtime of $D(n, t) $ by unravelling the recursion. Therefore, we have the following inequalities (omitting $\tilde{O}(1)$ factors):
\begin{align*}
    D(n, t) &\leq n^4 + \lambda n^2 + \frac{tx^2}{\lambda} + \left(\frac{t}{\lambda}\right)^{1/6} t^{5/6}x^{1/2} + D\left(\frac{20t}{\lambda x}, t\right)
\end{align*}
By choosing $\lambda = \max\left\{5, \frac{40 t}{nx}\right\}$, we would have $\frac{20t}{\lambda x} \leq \frac{n}{2}$, and the running time will therefore be dominated by the first 4 terms. Choosing 
\begin{align*}
    x = 
    \begin{cases}
        n & \text{if } t \leq n^{19/15}\\
        n^{23/4}/t^{5/4} & \text{if } n^{19/5} \leq t \leq n^{35/9}\\
        n^{1/2}t^{1/10} & \text{if } t \geq n^{35/9}
    \end{cases},
\end{align*}
we obtain a runtime of $\tilde{O}(n^4 + n^{1/2}t^{9/10}).$

\paragraph{$\cliquelist{5, 2}$ analysis.} We now analyze the runtime of $S(m, t).$ Note that the graph has at most $\Delta_3 = O(m^{3/2})$ triangles. Here, we use $D'(n, \Delta_3, t)$ to bound the runtime instead.
Therefore, unrolling the recursion, we have (up to $\tilde{O}(1)$ factors) 
\begin{align*}
    S(m, t) &\leq mx + m^{1/6}t^{5/6}x^{1/2} + D'(m/x, m^{3/2}, t)\\
    &\leq mx + m^{1/6}t^{5/6}x^{1/2} + (m/x) \cdot m^{3/2} + (m/x)^{1/2}t^{9/10}.
\end{align*}
Setting 
\[
    x = \begin{cases}
        m^{1/2} & \text{if } t \leq m^{19/10}\\
        m^{17/18}t^{10/18} & \text{if }m^{19/10} \leq t \leq m^{55/28}\\
        m^{1/3}t^{13/15} & \text{if } t \geq m^{55/28}
    \end{cases},
\]
we get a runtime of $\tilde{O}(m^2  + m^{17/18}t^{10/18} + m^{1/3}t^{13/15}).$ 
\end{proof}

\subsection{Analysis for \texorpdfstring{$\cliquelist{k, 1}$}{(k,1)-Clique-Listing}}\label{sec:k_1_opt} 

For $k \geq 2$, define 
\begin{align}
    x_k &= k \prod_{j = 2}^k ((5 - 2j) + (j - 2)\omega)\label{eq:alpha_num}\\
    y_k &= (3-\omega)^{k-2} + \sum_{j = 2}^{k-1} (3-\omega)^{k-1-j} x_j
    \label{eq:alpha_denom}
\end{align}

The following identities are immediate. 
\begin{claim}
    For any $k \ge 3$, $x_k = x_{k-1} \cdot \frac{k}{k-1} \cdot ((5-2k)+(k-2)\omega)$ and $y_k = (3  - \omega) \cdot y_{k-1} +  x_{k-1}$.
\end{claim}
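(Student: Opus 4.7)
The plan is to verify each identity directly from the definitions in \eqref{eq:alpha_num} and \eqref{eq:alpha_denom} by peeling off the last term of the product/sum. Both identities reduce to standard manipulations of a telescoping product and a geometric-like sum, respectively, so no new technique is required.

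For the first identity, I would start from
\[
x_k = k \prod_{j=2}^k \bigl((5-2j) + (j-2)\omega\bigr)
\]
and split off the $j=k$ factor to obtain
\[
x_k = k \cdot \bigl((5-2k)+(k-2)\omega\bigr) \cdot \prod_{j=2}^{k-1}\bigl((5-2j)+(j-2)\omega\bigr).
\]
Since $x_{k-1} = (k-1) \prod_{j=2}^{k-1}\bigl((5-2j)+(j-2)\omega\bigr)$, dividing and multiplying by $(k-1)$ gives
\[
x_k = \frac{k}{k-1} \cdot x_{k-1} \cdot \bigl((5-2k)+(k-2)\omega\bigr),
\]
as claimed.

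For the second identity, I would split off the $j=k-1$ term from the sum defining $y_k$:
\[
y_k = (3-\omega)^{k-2} + \sum_{j=2}^{k-2} (3-\omega)^{k-1-j} x_j + (3-\omega)^{0}\, x_{k-1}.
\]
Factoring $(3-\omega)$ out of the first two terms leaves the bracket $(3-\omega)^{k-3} + \sum_{j=2}^{k-2} (3-\omega)^{k-2-j} x_j$, which is exactly $y_{k-1}$. Hence $y_k = (3-\omega)\, y_{k-1} + x_{k-1}$. Minor edge cases (e.g.\ ensuring the sum is nonempty for $k \ge 3$, with the empty-sum convention when $k=3$) should be checked but cause no difficulty. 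There is no real obstacle here; the claim is pure bookkeeping, which is why the paper marks its proof as omitted.
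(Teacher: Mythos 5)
Your verification is correct: peeling off the $j=k$ factor from the product and the $j=k-1$ term from the sum, then factoring out $(3-\omega)$, is exactly the routine computation the paper has in mind when it calls these identities immediate and omits the proof. The empty-sum edge case at $k=3$ is handled properly, so nothing is missing.
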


\begin{theorem}\label{thm:k_1_optimal}
    Let $\alpha_k = x_k/y_k$. For any $k \geq 2$ and large $t \geq n^{\gamma_k}$ where 
    \[\gamma_k = 
    \begin{cases}
    0 & \text{if } k = 2\\
    k \left(1 - \frac{3 - \omega}{k - \alpha_k}\right) & \text{if } k \geq 3
    \end{cases},\]  
    there exists an algorithm that lists all $t$ $k$-cliques in time $\tO(n^{\alpha_k}t^{1-\frac{\alpha_k}{k}})$. If $\omega=2$,  we have that $x_k = k$ and $y_k = \frac{k(k-1)}{2}$, therefore giving a runtime of $\tO(n^{\frac{2}{k-1}} t^{1 - \frac{2}{k(k-1)}})$ for $t \geq n^{k - 1 - \frac{2}{k^2 - k - 2}}.$
\end{theorem}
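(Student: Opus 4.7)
The plan is to prove this by induction on $k \ge 2$, with the base case $k = 2$ given by the trivial $\tO(n^2)$ edge-listing runtime, which matches the claim since $x_2 = 2$, $y_2 = 1$ (so $\alpha_2 = 2$) and $\gamma_2 = 0$. For the inductive step I would analyze the two mutually recursive subroutines of Algorithm~\ref{alg:large_t_sparse_dense} separately and then combine.

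I would first analyze $\sparse(m, t)$. For each low-degree node $v$ (with $\deg(v) \le x$), the recursive $\cliquelist{k-1, 1}$ call on $v$'s neighborhood costs $\tO(\deg(v)^{\alpha_{k-1}} \Delta_k(v)^{1-\alpha_{k-1}/(k-1)})$ in the regime $\Delta_k(v) \ge \deg(v)^{\gamma_{k-1}}$, and $\tO(\deg(v)^{k+\omega-4})$ otherwise, via the key identity $\alpha_{k-1} + \gamma_{k-1}(1-\alpha_{k-1}/(k-1)) = k+\omega-4$ (a direct computation from the definition of $\gamma_{k-1}$). Summing over low-degree nodes, using $\sum_v \deg(v) = 2m$, $\sum_v \Delta_k(v) = kt$, together with H\"{o}lder's inequality (Corollary~\ref{cor:holders_useful}) and the constraint $\deg(v) \le x$, yields
\[
\sparse(m,t) \le \tO\!\left(mx^{k+\omega-5} + x^{\alpha_{k-1}(k-2)/(k-1)}\, m^{\alpha_{k-1}/(k-1)}\, t^{1-\alpha_{k-1}/(k-1)} + \dense(2m/x, t)\right).
\]
For $\dense(n, t)$, the $\tO(\lambda)$ iterations of matrix multiplication of shape $n \times n^{k-2}/\lambda \times n$ contribute $\tO(n^{k+\omega-3} + \lambda^{3-\omega} n^{(k-2)(\omega-2)+2})$ (covering the two regimes $\lambda \le n^{k-3}$ and $\lambda > n^{k-3}$), followed by a call to $\sparse$ with parameter $\binom{k}{2} t/\lambda$.

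Next, I would set $\lambda = \Theta(t/(nx))$ so that the recursive $\dense$ call operates on a graph with $\le n/2$ nodes, yielding an $O(\log n)$-depth recursion. After substitution, the single-level cost consists of four terms: $n^{k+\omega-3}$, $t^{3-\omega} x^{\omega-3} n^{(k-1)(\omega-2)+1}$, $n x^{k+\omega-4}$, and $x^{\alpha_{k-1}} n^{\alpha_{k-1}/(k-1)} t^{1-\alpha_{k-1}/(k-1)}$. For $t \ge n^{\gamma_k}$ I claim that the second and fourth dominate and can be balanced; doing so pins down $x$ as an explicit monomial in $n$ and $t$, and substituting back produces a runtime $\tO(n^{\alpha_k} t^{1-\alpha_k/k})$ with
\[
\alpha_k \;=\; \frac{k\bigl((5-2k)+(k-2)\omega\bigr)\,\alpha_{k-1}}{(k-1)\bigl((3-\omega)+\alpha_{k-1}\bigr)}.
\]
A short algebraic check using the recurrences $x_k = \frac{k}{k-1}((5-2k)+(k-2)\omega)\, x_{k-1}$ and $y_k = (3-\omega)\, y_{k-1} + x_{k-1}$ (both immediate from \eqref{eq:alpha_num} and \eqref{eq:alpha_denom}) then confirms $\alpha_k = x_k/y_k$. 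The threshold $\gamma_k = k\bigl(1 - (3-\omega)/(k-\alpha_k)\bigr)$ is exactly the condition that the balanced runtime dominates $n^{k+\omega-3}$, and specializing to $\omega = 2$ gives $x_k = k$, $y_k = k(k-1)/2$, $\alpha_k = 2/(k-1)$, and $\gamma_k = k-1-2/(k^2-k-2)$.

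The main obstacle will be the bookkeeping in the optimization step: verifying that it is precisely the second and fourth of the four terms which jointly attain the optimum at $t \ge n^{\gamma_k}$ (while the first and third are dominated), and carrying out the general-$\omega$ algebra cleanly. The key algebraic simplification, $(k-1)^2(\omega-2)+(k-1)+(3-\omega) = k\bigl((5-2k)+(k-2)\omega\bigr)$, is what makes the recurrence for $\alpha_k$ collapse into the closed form $x_k/y_k$ and must be identified and applied. The piecewise matrix-multiplication bound (depending on whether $\lambda \le n^{k-3}$) introduces some case analysis, but both sub-cases meet continuously at $\lambda = n^{k-3}$ (both equal $n^{k+\omega-3}$ there), so they fold naturally into the analysis once the parameter choice $\lambda = \Theta(t/(nx))$ is committed to.
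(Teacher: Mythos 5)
Your plan is essentially the paper's own proof: the same dense/sparse recursion from Algorithm~\ref{alg:large_t_sparse_dense}, the same decomposition of $\MM(n,n^{k-2}/\lambda,n)$ into square products, the same H\"older bound on $\sum_v \deg(v)^{\alpha_{k-1}}\Delta_k(v)^{1-\alpha_{k-1}/(k-1)}$, the choice $\lambda=\Theta(t/(nx))$ to halve $n$ per level, balancing the same two terms to get the recurrence $\alpha_k = \frac{k\alpha_{k-1}((5-2k)+(k-2)\omega)}{(k-1)(3+\alpha_{k-1}-\omega)}$, and the same closed-form check via the recurrences for $x_k,y_k$. Two caveats. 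First, a small one: your identity $\alpha_{k-1}+\gamma_{k-1}(1-\alpha_{k-1}/(k-1))=k+\omega-4$ only holds when $\gamma_{k-1}$ is given by the displayed formula, i.e.\ for $k-1\ge 3$; at $k-1=2$ one has $\gamma_2=0$ and the low-degree cost is $\deg(v)^2$, not $\deg(v)^{\omega-1}$. This is harmless (for $k=3$ that cost is absorbed by the H\"older term, whose $t$-exponent vanishes, and the paper simply takes $k=3$ as a base case via \cite{bjorklund2014listing}), but as stated the identity is false there. Second, and more substantively: the step you defer as ``bookkeeping'' --- verifying that for $t\ge n^{\gamma_k}$ the second and fourth terms dominate the first and third --- is where the paper does real work for $\omega>2$. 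The first domination is immediate from the definition of $\gamma_k$, but the second reduces (after plugging in the balanced $x$) to $x^{\gamma_{k-1}}\le t/n$, i.e.\ to $t\ge n^{c_k}$ for an explicit exponent $c_k$, and showing $c_k\le\gamma_k$ requires the bounds $(k-1)(\omega-2)\le \alpha_{k-1}\le k-1$, which the paper proves by two separate inductions (Claim~\ref{claim:alpha_k_lb} and its companion). Without that ingredient the threshold claim for general $\omega$ is not yet established, so you should plan to prove those $\alpha_k$ bounds (or restrict to $\omega=2$, where the inequality is a one-line check).
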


\begin{proof}
    For $k = 2$, the brute-force algorithm runs in $n^2$ time, and it is easy to check that $x_k=2$ and $y_k = 1$. Moreover, this bound holds for all values of $t$, so we can set $\gamma_k = 0.$
    
    For $k = 3$, \cite{bjorklund2014listing} give an algorithm that runs in time $O(n^\omega + n^{\frac{3(\omega - 1)}{5 - \omega}}t^{\frac{2(3-\omega)}{5 - \omega}})$, which can easily be verified to match the form of the theorem statement. Rewriting this as $O(n^\omega + n^{\alpha_3}t^{1- \frac{\alpha_3}{3}})$, it is easy to see that this term dominates exactly when $t \geq n^{3\left(1 - \frac{3-\omega}{3-\alpha_3}\right)}$, which corresponds exactly to our setting of $\gamma_3.$
    Now suppose  $k \geq 3$ and that the theorem statement is true for all $\cliquelist{r, 1}$ for all $r < k$.
    In particular, suppose the runtime of $\cliquelist{k-1, 1}$ is bounded by $$T_{k-1}(n, \Delta_{k-1}) \leq n^{\alpha_{k-1}}\Delta_{k-1}^{1 - \frac{\alpha_{k-1}}{k-1}},$$ when $\Delta_{k-1} \ge n^{\gamma_{k-1}}$ for some $\gamma_{k-1}$. In fact, since the runtime is non-decreasing in the parameter $\Delta_{k-1}$ by Lemma~\ref{lem:list_exponent_monotone}, one can bound the above runtime for any $\Delta_k$  by:
    \begin{align*}
        T_{k-1}(n, \Delta_{k-1}) \leq n^{\alpha_{k-1}}\left(n^{\gamma_{k-1}}\right)^{1 - \frac{\alpha_{k-1}}{k-1}} + n^{\alpha_{k-1}}\Delta_{k-1}^{1 - \frac{\alpha_{k-1}}{k-1}}.
    \end{align*}

    \paragraph{Runtime analysis.} Let $D(n, t)$ be the running time of $\dense(G, n, t)$, and let $S(m, t)$ be the running time of $\sparse(G, m, t)$. Note that ignoring $\tilde{O}(1)$ factors
    \begin{align*}
        D(n, t) &\le n^{k-1} + \lambda \MM(n, n^{k-2}/\lambda, n) + S\left(\binom{k}{2}t/\lambda, t\right)
    \end{align*}
    
    By the standard trick of decomposing a rectangular matrix product into smaller square matrix products, one can bound $$\MM(n, n^{k-2}/\lambda, n) \leq \left(\frac{n^{k-2}/\lambda}{n}\right) \cdot n^\omega + \left(\frac{n}{n^{k-2}/\lambda}\right)^2 \cdot \left(\frac{n^{k-2}}{\lambda}\right)^\omega = \frac{n^{k-3+\omega}}{\lambda} + \frac{n^{\omega(k-2) -2k + 6}}{\lambda^{\omega - 2}}.$$
    Therefore, we can rewrite 
    $$D(n, t) \leq n^{k-3+\omega} + \lambda^{3 - \omega} n^{\omega(k-2) -2k + 6} + S\left(\binom{k}{2}t/\lambda, t\right).$$

    For $\sparse(G, m, t)$, note that the runtime is bounded by:
    \begin{align*}
        S(m, t) &\leq \sum_{v : \deg(v) \leq x} T_{k-1}(\deg(v), \Delta_k(v)) + D(2m/x, t).\\
        &\leq \sum_{v : \deg(v) \leq x} \left(\deg(v)^{\alpha_{k-1} + \gamma_{k-1}\left(1 - \frac{\alpha_{k-1}}{k-1}\right)} + \deg(v)^{\alpha_{k-1}}\Delta_k(v)^{1 - \frac{\alpha_{k-1}}{k-1}}\right) + D(2m/x, t).
    \end{align*}
    
    One can use H\"{o}lder's inequality as in Corollary~\ref{cor:holders_useful} to bound 
    \begin{align*}
        \sum_{v: \deg(v) \leq x} \deg(v)^{\alpha_{k-1}} \Delta_k(v)^{1 - \frac{\alpha_{k-1}}{k-1}} 
        &\leq x^{\alpha_{k-1} - \frac{\alpha_{k-1}}{k-1}}\sum_{v: \deg(v) \leq x} \deg(v)^{\frac{\alpha_{k-1}}{k-1}}\Delta_k(v)^{1 - \frac{\alpha_{k-1}}{k-1}} \\
        &\leq 
        x^{\alpha_{k-1} \cdot \frac{k-2}{k-1}} \left(\sum_{v: \deg(v) \leq x} \deg(v)\right)^{\frac{\alpha_{k-1}}{k-1}} \left(\sum_{v: \deg(v) \leq x} \Delta_k(v) \right)^{1- \frac{\alpha_{k-1}}{k-1}}\\
        & \leq
        O\left(x^{\alpha_{k-1} \cdot \frac{k-2}{k-1}} m^{\frac{\alpha_{k-1}}{k-1}} 
        t^{1- \frac{\alpha_{k-1}}{k-1}}\right).
    \end{align*}
    
    Thus, we have that (once again ignoring $\tilde{O}(1)$ factors)
    $$S(m, t) \leq m \cdot x^{\alpha_{k-1} + \gamma_{k-1}\left(1 - \frac{\alpha_{k-1}}{k-1}\right) - 1} + x^{\alpha_{k-1} \cdot \frac{k-2}{k-1}} m^{\frac{\alpha_{k-1}}{k-1}} 
        t^{1- \frac{\alpha_{k-1}}{k-1}} + D(2m/x, t).$$
   
    Unravelling the runtime of $\dense(G, n, t)$, we therefore have
    \begin{align*}
        D(n, t) \leq & n^{k-3+\omega} + \lambda^{3-\omega}n^{\omega(k-2) -2k + 6}\\ & + (t/\lambda) \cdot x^{\alpha_{k-1} + \gamma_{k-1}\left(1 - \frac{\alpha_{k-1}}{k-1}\right) - 1} + x^{\alpha_{k-1} \cdot \frac{k-2}{k-1}} (t/\lambda)^{\frac{\alpha_{k-1}}{k-1}} 
        t^{1- \frac{\alpha_{k-1}}{k-1}} \\
        &+ D\left(\frac{2\cdot \binom{k}{2}t}{\lambda x}, t\right).
    \end{align*}
    If one chooses $\lambda$ and $x$ so that $\frac{2 \cdot \binom{k}{2} t}{\lambda x} \leq \frac{n}{2}$, then the runtime is dominated by the first 4 terms up to $\tilde{O}(1)$ factors. Therefore, we choose $\lambda = \max\{1, \frac{4 \cdot \binom{k}{2} t}{n x}\}$ (note that for $t \geq n^{\gamma_k}$, this value will always be equal to $\frac{4 \cdot \binom{k}{2} t}{n x}$ for our setting of $x$). Hence, ignoring $\tilde{O}(1)$ factors, this gives us for $t \geq n^{\gamma_k}$, 
    \begin{align*}
        D(n, t) &\leq n^{k-3+\omega} + \left(\frac{t}{nx}\right)^{3 - \omega}n^{\omega(k-2) -2k + 6} \\
        &+ (n\cdot x) \cdot x^{\alpha_{k-1} + \gamma_{k-1}\left(1 - \frac{\alpha_{k-1}}{k-1}\right) - 1} + x^{\alpha_{k-1} \cdot \frac{k-2}{k-1}} (n\cdot x)^{\frac{\alpha_{k-1}}{k-1}} 
        t^{1- \frac{\alpha_{k-1}}{k-1}}
    \end{align*}
    
    First, suppose that the term $\left(\frac{t}{nx}\right)^{3 - \omega}n^{\omega(k-2) -2k + 6}$
    dominates $n^{k-3+\omega}$ and the term $x^{\alpha_{k-1} \cdot \frac{k-2}{k-1}} (n \cdot x)^{\frac{\alpha_{k-1}}{k-1}} t^{1- \frac{\alpha_{k-1}}{k-1}}$ dominates $(n\cdot x) \cdot x^{\alpha_{k-1} + \gamma_{k-1}\left(1 - \frac{\alpha_{k-1}}{k-1}\right) - 1}$ (we show that this is in fact true for our choice of $\gamma_k$ later) . Then, 
    \begin{equation}\label{eq:bounding_dense}
        D(n, t) \leq \left(\frac{t}{nx}\right)^{3- \omega}n^{\omega(k-2) -2k + 6} + x^{\alpha_{k-1} \cdot \frac{k-2}{k-1}} (n \cdot x)^{\frac{\alpha_{k-1}}{k-1}}t^{1- \frac{\alpha_{k-1}}{k-1}}.
    \end{equation}
    Choosing $x$ to equate the two terms, we set 
    \begin{equation}\label{eq:x_solution_k1_runtime}
        x = \left({t^{\alpha_{k-1} - (\omega-2)(k-1)}n^{(k-1)^2 \omega - (\alpha_{k-1} + 2k^2 - 5k + 3)}}\right)^{\frac{1}{(k-1)(3+\alpha_{k-1} - \omega)}}.
    \end{equation}
    Substituting this into \eqref{eq:bounding_dense}, we have \[D(n, t) \leq n^{\frac{k\alpha_{k-1}((5-2k)+(k-2)\omega)}{(k-1) \cdot (3 + \alpha_{k-1} - \omega)}} t^{\frac{(3-\omega)((k-2)\alpha_{k-1} + (k-1))}{(k-1) \cdot (3 + \alpha_{k-1} - \omega)}}.
    \]
    Setting 
    \begin{equation}\label{eq:alpha_recursive}
    \alpha_k = \frac{k\alpha_{k-1}((5-2k)+(k-2)\omega)}{(k-1) \cdot (3 + \alpha_{k-1} - \omega)},
    \end{equation}
    it is easy to verify that the above bound is in fact of the form $n^{\alpha_k} t^{1-\frac{\alpha_k}{k}}.$ Moreover, note that
    \begin{align*}
        \alpha_k &= \frac{k\alpha_{k-1}((5-2k)+(k-2)\omega)}{(k-1) \cdot (3 + \alpha_{k-1} - \omega)}\\
        &= \frac{k \cdot \frac{x_{k-1}}{y_{k-1}} \cdot ((5-2k)+(k-2)\omega)}{(k-1) \cdot (3  - \omega +  \frac{x_{k-1}}{y_{k-1}})}\\
        &= \frac{x_{k-1} \cdot \frac{k}{k-1} \cdot ((5-2k)+(k-2)\omega)}{(3  - \omega) \cdot y_{k-1} +  x_{k-1}} = \frac{x_k}{y_k}
    \end{align*}
    as desired.
    
    \paragraph{Bound on $\gamma_k$.} Now, it suffices to show that for $t \geq n^{\gamma_k}$, for our choice of $x$,
    \begin{align}
        n^{k-3+\omega} &\leq \left(\frac{t}{nx}\right)^{3-\omega} n^{\omega(k-2) -2k + 6} = n^{\alpha_k} t^{1 - \frac{\alpha_{k}}{k}}\label{eq:first_bound}\\
        (n\cdot x) \cdot x^{\alpha_{k-1} + \gamma_{k-1}\left(1 - \frac{\alpha_{k-1}}{k-1}\right) - 1} &\leq  x^{\alpha_{k-1} \cdot \frac{k-2}{k-1}} (n\cdot x)^{\frac{\alpha_{k-1}}{k-1}} 
        t^{1- \frac{\alpha_{k-1}}{k-1}}\label{eq:second_bound}
    \end{align}
    The first inequality \eqref{eq:first_bound} is trivially satisfied since we chose $\gamma_k \geq k \left(1 - \frac{3-\omega}{k-\alpha_k}\right).$ 
    
    It suffices to show that the second inequality \eqref{eq:second_bound} is also satisfied. Rearranging, we see that it suffices to show that $x^{\gamma_{k-1}} \leq \frac{t}{n}.$
    Plugging in $x$ from \eqref{eq:x_solution_k1_runtime} and $\gamma_{k-1}$ and rearranging, we obtain that this holds as long as 
    \[ t \geq n^{\frac{(k-1) \cdot ((3 - \omega)^2 + k(\omega - 2)) + \alpha_{k-1} (k(2-\omega) + \omega - 3)}{11-(\omega - 2)\alpha_{k-1} + k(\omega - 2) - 7\omega + \omega^2}}.\]
    To show that all $t \geq n^{\gamma_k}$ satisfies the above inequality, it suffices to check that the exponent above is at most $\gamma_k$, i.e., it suffices to check that 
    \[\frac{(k-1) \cdot ((3 - \omega)^2 + k(\omega - 2)) + \alpha_{k-1} (k(2-\omega) + \omega - 3)}{(k-1-\alpha_{k-1})(\omega - 2) + (3 - \omega)^2}
    \leq \gamma_k = k \left(1 - \frac{3-\omega}{k - \alpha_k}\right).\]
    If $\omega = 2$, we can rewrite \eqref{eq:alpha_recursive} as $\alpha_k = \frac{k\alpha_{k-1}}{(k-1)(1 + \alpha_{k-1})}$ to obtain the following equivalent inequality:
    \begin{align*}
        k - (\alpha_{k-1} + 1) \leq k - \frac{(k-1)}{(k-1) + (k-2) \alpha_{k-1}} \cdot (\alpha_{k-1} + 1),
    \end{align*}
    which clearly holds since $(k-2)\alpha_{k-1} \geq 0.$
    
    When $\omega > 2$, we substitute our recursive formula for $\alpha_k$ from \eqref{eq:alpha_recursive} and rearrange to obtain that the inequality is satisfied for all $k > 0$ as long as 
    \begin{align}\label{eq:bound_alpha_for_gamma}
        (k-1)(\omega - 2) \leq \alpha_{k-1} \leq k-1 + \frac{(3-\omega)^2}{\omega - 2}.
    \end{align}
    
    \begin{claim}\label{claim:alpha_k_lb}
    For $2 \le \omega \leq 3$ and $k \geq 2$, we have $k(\omega - 2) \leq \alpha_k$.
    \end{claim}
    \begin{proof}
         We show this by induction. When $k = 2$, the equation is clearly true because $0 \leq \omega - 2 \leq 1$. 
         Therefore, $2 (\omega - 2) \leq 2,$ and the lower bound clearly holds.
    
    Now suppose $k \ge 3$ and that $\alpha_{k-1} \geq (k-1)(\omega - 2)$. Now, we want $\alpha_k \geq k (\omega - 2)$. Substituting the recursion from \eqref{eq:alpha_recursive}, we have
    \begin{align*}
        \frac{k \alpha_{k-1} ((5-2k) + (k-2)\omega)}{(k-1) \cdot (3 + \alpha_{k-1} - \omega)} \geq k (\omega - 2).
    \end{align*}
    Rearranging the equation using the fact that $3 - \omega + \alpha_{k-1}  > 0$, we have that the above equation holds if and only if
    \begin{align*}
        (3 - \omega) (\alpha_{k-1} - (k-1)(\omega - 2)) \geq 0.
    \end{align*}
    Since $\omega \leq 3$ and $\alpha_{k-1} \geq (k-1)(\omega - 2)$, we have that the equation indeed holds. 
    \end{proof}
   \begin{claim}
        For $2 \leq \omega \leq 3$ and $k \geq 2$, $\alpha_{k} \leq k.$
   \end{claim}
   \begin{proof}
       We proceed by induction. First, we note that $\alpha_2 = 2$ and $\alpha_3 = \frac{3(\omega - 1)}{5 - \omega} \leq 3$ since $\omega \leq 3$. 
        Now, suppose $\alpha_{k-1} \leq k-1$. Then, note that
        \begin{align*}
            \alpha_k &\leq \frac{k \alpha_{k-1} ((5-2k) + (k-2)\omega)}{(k-1) \cdot (3 + \alpha_{k-1} - \omega)} \leq \frac{k ((5-2k) + (k-2)\omega)}{3 + \alpha_{k-1} - \omega}. 
        \end{align*}
        Therefore, $\alpha_k \leq k$ as long as 
        \begin{align*}
            &(5-2k) + (k-2)\omega \leq 3 + \alpha_{k-1} - \omega\\
            \iff & \alpha_{k-1} \geq (k-1)(\omega - 2),
        \end{align*}
        which is indeed true by Claim~\ref{claim:alpha_k_lb}.
   \end{proof}
   Therefore, the bounds in \eqref{eq:bound_alpha_for_gamma} indeed hold, thereby completing the proof.
\end{proof}

Using the bound of Theorem~\ref{thm:k_1_optimal} and note that the runtime of $\cliquelist{k, \ell}$ is monotone with respect to $t$ (Lemma~\ref{lem:list_exponent_monotone}), we immediately get the following corollaries. 
\begin{corollary}[Theorem~\ref{thm:4_1_opt}]\label{cor:4-1-opt}
    Given a graph on $n$ nodes, one can list $t$ 4-cliques in $$\tilde{O}\left(n^{\omega + 1} + n^{\frac{4(\omega - 1)(2\omega - 3)}{\omega^2 - 5 \omega + 12}}t^{1 - \frac{(\omega - 1)(2\omega - 3)}{\omega^2 - 5 \omega + 12}}\right)$$ time. If $\omega = 2$, the runtime is $\tilde{O}(n^3 + n^{2/3}t^{5/6}).$
\end{corollary}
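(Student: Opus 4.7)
The plan is to obtain Corollary~\ref{cor:4-1-opt} as a direct instantiation of Theorem~\ref{thm:k_1_optimal} at $k=4$, combined with the monotonicity statement of Lemma~\ref{lem:list_exponent_monotone}. The first step is purely computational: plug $k=4$ into the recursive definitions \eqref{eq:alpha_num} and \eqref{eq:alpha_denom} to identify the constants $\alpha_4$ and $\gamma_4$ that govern the runtime.

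Concretely, I would compute the numerator
\[
x_4 \;=\; 4\cdot\bigl((5-4)+0\cdot\omega\bigr)\bigl((5-6)+\omega\bigr)\bigl((5-8)+2\omega\bigr) \;=\; 4(\omega-1)(2\omega-3),
\]
and the denominator by first observing $x_2=2$ and $x_3=3(\omega-1)$, so that
\[
y_4 \;=\; (3-\omega)^2 + (3-\omega)x_2 + x_3 \;=\; (3-\omega)^2 + 2(3-\omega) + 3(\omega-1) \;=\; \omega^2 - 5\omega + 12.
\]
This yields $\alpha_4 = \tfrac{4(\omega-1)(2\omega-3)}{\omega^2-5\omega+12}$, which is exactly the exponent appearing in the corollary. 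Theorem~\ref{thm:k_1_optimal} then gives the claimed bound $\tO(n^{\alpha_4}t^{1-\alpha_4/4})$ for every $t \geq n^{\gamma_4}$.

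It remains to handle $t < n^{\gamma_4}$. Here I would invoke Lemma~\ref{lem:list_exponent_monotone}, which says we may inflate the $k$-clique count (e.g.\ by overlaying a complete $k$-partite gadget) up to $n^{\gamma_4}$ without changing the asymptotic input size. Running the Theorem~\ref{thm:k_1_optimal} algorithm at the inflated value of $t$ costs $\tO\bigl(n^{\alpha_4}\cdot n^{\gamma_4(1-\alpha_4/4)}\bigr)$. Using the definition $\gamma_4 = 4\bigl(1 - \tfrac{3-\omega}{4-\alpha_4}\bigr)$, a short simplification gives
\[
\alpha_4 + \gamma_4\Bigl(1 - \tfrac{\alpha_4}{4}\Bigr) \;=\; \alpha_4 + (4-\alpha_4) - (3-\omega) \;=\; \omega + 1,
\]
matching the additive term $n^{\omega+1}$ in the corollary. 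The specialization to $\omega=2$ (where $\alpha_4 = 2/3$, hence $1-\alpha_4/4 = 5/6$, and the additive term is $n^3$) is then immediate.

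The only even mildly non-routine step will be verifying the simplification $\alpha_4 + \gamma_4(1-\alpha_4/4) = \omega+1$, which is a one-line algebraic identity once $\gamma_4$ is written in the form from Theorem~\ref{thm:k_1_optimal}. Everything else is an evaluation of the closed forms \eqref{eq:alpha_num}--\eqref{eq:alpha_denom} at $k=4$ and an appeal to previously established theorems, so no new ideas or obstacles should arise.
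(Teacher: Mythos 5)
Your proposal is correct and matches the paper's own (essentially one-line) derivation: the paper also obtains Corollary~\ref{cor:4-1-opt} by instantiating Theorem~\ref{thm:k_1_optimal} at $k=4$ and invoking the monotonicity of the runtime in $t$ via Lemma~\ref{lem:list_exponent_monotone}. Your explicit computations of $x_4$, $y_4$, $\alpha_4$, and the identity $\alpha_4+\gamma_4(1-\alpha_4/4)=\omega+1$ are all accurate, so nothing is missing.
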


\begin{corollary}[Theorem~\ref{thm:5_1_opt}]\label{cor:5_1_opt}
     Given a graph on $n$ nodes, one can list $t$ 5-cliques in $$\tilde{O}\left(n^{\omega + 2} + n^{\frac{5(\omega - 1)(2\omega - 3)(3\omega - 5)}{48-47\omega + 16\omega^2 - \omega^3}}t^{1 - \frac{(\omega - 1)(2\omega - 3)(3\omega - 5)}{48-47\omega + 16\omega^2 - \omega^3}}\right)$$
    time. If $\omega = 2$, the runtime is $\tilde{O}(n^4 +n^{1/2}t^{9/10}).$
\end{corollary}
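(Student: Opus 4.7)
The plan is to derive Corollary~\ref{cor:5_1_opt} as a direct instantiation of Theorem~\ref{thm:k_1_optimal} with $k = 5$, combined with the monotonicity from Lemma~\ref{lem:list_exponent_monotone} to cover the regime of small $t$.

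First I would compute $x_5$ using the product formula~\eqref{eq:alpha_num}. Each factor $(5-2j)+(j-2)\omega$ for $j \in \{2, 3, 4, 5\}$ evaluates to $1$, $\omega-1$, $2\omega-3$, and $3\omega-5$ respectively, giving $x_5 = 5(\omega-1)(2\omega-3)(3\omega-5)$. Next I would compute $y_5$ via~\eqref{eq:alpha_denom}:
\[
y_5 = (3-\omega)^3 + 2(3-\omega)^2 + 3(\omega-1)(3-\omega) + 4(\omega-1)(2\omega-3),
\]
using $x_2 = 2$, $x_3 = 3(\omega-1)$, $x_4 = 4(\omega-1)(2\omega-3)$. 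Expanding each term and collecting coefficients by powers of $\omega$ yields $y_5 = 48 - 47\omega + 16\omega^2 - \omega^3$, which matches the denominator in the exponents of the corollary. Hence $\alpha_5 = x_5/y_5$ is precisely $\frac{(\omega-1)(2\omega-3)(3\omega-5)}{48-47\omega+16\omega^2-\omega^3} \cdot 5$, and Theorem~\ref{thm:k_1_optimal} gives a running time of $\tilde{O}(n^{\alpha_5} t^{1-\alpha_5/5})$ whenever $t \ge n^{\gamma_5}$.

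For the range $t < n^{\gamma_5}$, I would invoke Lemma~\ref{lem:list_exponent_monotone} to artificially raise $t$ to the threshold $n^{\gamma_5}$, yielding the fixed cost
\[
n^{\alpha_5}\cdot n^{\gamma_5(1 - \alpha_5/5)} \;=\; n^{\alpha_5 + \gamma_5(5-\alpha_5)/5}.
\]
Substituting $\gamma_5 = 5\bigl(1 - \tfrac{3-\omega}{5-\alpha_5}\bigr)$ causes $\alpha_5$ to cancel cleanly, leaving the exponent $\alpha_5 + (5 - \alpha_5) - (3-\omega) = \omega + 2$. This is exactly the additive $n^{\omega+2}$ term in the corollary. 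Combining the two regimes gives the claimed bound. Finally, I would verify the $\omega=2$ specialization: $x_5 = 5$, $y_5 = 48 - 94 + 64 - 8 = 10$, so $\alpha_5 = 1/2$ and the exponent of $t$ becomes $1 - 1/10 = 9/10$, recovering $\tilde{O}(n^4 + n^{1/2} t^{9/10})$.

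The proof is almost entirely mechanical once Theorem~\ref{thm:k_1_optimal} is in hand; the only non-trivial step is the algebraic simplification showing $\alpha_5 + \gamma_5(5 - \alpha_5)/5 = \omega + 2$, which explains why the threshold cost coincides with the natural $5$-clique \emph{detection} running time $n^{\omega(2,2,1)} = n^{\omega+2}$. So the main obstacle — really a sanity check — is confirming that this identity is not a coincidence but follows directly from the definition of $\gamma_k$ in Theorem~\ref{thm:k_1_optimal}, which ensures that the algorithm never loses more than a polylogarithmic factor when transitioning across the $t = n^{\gamma_5}$ boundary.
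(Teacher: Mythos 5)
Your proposal is correct and follows exactly the paper's route: the paper derives Corollary~\ref{cor:5_1_opt} immediately from Theorem~\ref{thm:k_1_optimal} (instantiated at $k=5$) together with the monotonicity in $t$ from Lemma~\ref{lem:list_exponent_monotone}, which is precisely your argument. Your explicit computations of $x_5$, $y_5$, $\alpha_5$, and the identity $\alpha_5+\gamma_5(5-\alpha_5)/5=\omega+2$ (explaining the $n^{\omega+2}$ term for small $t$) are all accurate and simply spell out the arithmetic the paper leaves implicit.
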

    
\subsection{Analysis for \texorpdfstring{$\cliquelist{k, \ell}$}{(k, l)-Clique-Listing} for \texorpdfstring{$\ell \geq 2$}{l >= 2}}\label{sec:k_l_opt}
We have shown an algorithm for $\cliquelist{k, 1}$ that is conditionally optimal when $\Delta_k \geq n^{\gamma_k}$. Now, we use this to show that there exists a $\cliquelist{k, \ell}$ algorithm for all $\ell$ that is conditionally optimal for $\Delta_k \geq n^{\gamma_{k, \ell}}$, for some $0 \leq \gamma_{k, \ell} < \frac{k}{\ell}.$ First, we define the following variable 
\[z_{k,\ell} = x_k \sum_{i=0}^{\ell-1}  \frac{k-\ell}{k-i-1} \cdot  \frac{y_{k-i}}{x_{k-i}}\]

where $x_k$ and $y_k$ are just as defined in \eqref{eq:alpha_num} and \eqref{eq:alpha_denom}. From this definition, the following identity is immediate.
\begin{claim}\label{claim:zkl_recursive}
For $\ell \ge 2$ and $k > \ell$, 
$z_{k,\ell} =\frac{x_k}{x_{k-1}} z_{k-1,\ell-1} +  \frac{k-\ell}{k-1}y_k$. 
\end{claim}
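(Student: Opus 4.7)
The claim is a purely algebraic identity arising from splitting off one term of the defining sum and reindexing, so the proof should proceed by direct manipulation rather than by induction or any deeper structural argument.

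The plan is as follows. I would start from the defining expression
\[
z_{k,\ell} = x_k \sum_{i=0}^{\ell-1}  \frac{k-\ell}{k-i-1} \cdot  \frac{y_{k-i}}{x_{k-i}},
\]
and isolate the $i=0$ term. That term equals $x_k \cdot \tfrac{k-\ell}{k-1} \cdot \tfrac{y_k}{x_k} = \tfrac{k-\ell}{k-1} y_k$, which matches the second summand on the right-hand side of the target identity.

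Next, I would handle the remaining sum $\sum_{i=1}^{\ell-1}$ by the substitution $j = i-1$, so that $j$ ranges over $0,1,\dots,\ell-2$. The summand becomes $\tfrac{k-\ell}{k-j-2}\cdot \tfrac{y_{k-1-j}}{x_{k-1-j}}$. Comparing this with the definition
\[
z_{k-1,\ell-1} = x_{k-1} \sum_{j=0}^{\ell-2}  \frac{(k-1)-(\ell-1)}{(k-1)-j-1} \cdot  \frac{y_{(k-1)-j}}{x_{(k-1)-j}} = x_{k-1}\sum_{j=0}^{\ell-2} \frac{k-\ell}{k-j-2}\cdot \frac{y_{k-1-j}}{x_{k-1-j}},
\]
one sees that the reindexed sum is exactly $\tfrac{x_k}{x_{k-1}} \cdot \tfrac{1}{x_k} \cdot x_{k-1}$ times $z_{k-1,\ell-1}$ when pulled back under the factor $x_k$ out front; equivalently, $x_k$ times the reindexed sum equals $\tfrac{x_k}{x_{k-1}} z_{k-1,\ell-1}$. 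Putting these two pieces together yields the claimed recurrence.

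There is no real obstacle here: the identity is a bookkeeping statement about the sum's first term versus its tail, and the definition of $z_{k,\ell}$ was chosen precisely so that the tail telescopes into a shifted copy of $z_{k-1,\ell-1}$. The only thing to be careful about is that the condition $k > \ell$ ensures $k-\ell \ne 0$ (so the sum is not trivially zero in a degenerate way) and that $\ell \ge 2$ guarantees the reindexed sum for $z_{k-1,\ell-1}$ is non-empty; both conditions are given in the hypothesis.
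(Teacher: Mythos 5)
Your proof is correct: isolating the $i=0$ term gives $\frac{k-\ell}{k-1}y_k$, and reindexing the tail with $j=i-1$ identifies it with $\frac{x_k}{x_{k-1}}z_{k-1,\ell-1}$, exactly the direct algebraic verification the paper has in mind when it states the identity is immediate from the definition.
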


\begin{theorem}
\label{thm:k-l-large-t-listing}
    Fix any constant integers $k-1 \ge \ell \ge 1$.
    Let $\alpha_{k,\ell} = x_k/z_{k,\ell}$. Then, there exists
    some $\gamma_{k, \ell} = (1 - \epsilon_{k,\ell})k/\ell$ for
    $\epsilon_{k,\ell} > 0$ such that for large $t \geq n^{\gamma_k}$ 
    there exists an algorithm that lists all $t$ $k$-cliques given the $\ell$-cliques in time $\tO(\Delta_\ell^{\alpha_{k,\ell}}t^{1-\frac{\ell \alpha_{k,\ell}}{k}})$. 
    
    If $\omega = 2$, we have $x_k = k$ and $z_{k, \ell} = \frac{k\ell(k-\ell)}{2}$, giving a runtime of $\tO(\Delta_\ell^{\frac{2}{\ell(k-\ell)}}t^{1 - \frac{2}{k(k-\ell)}})$ for all $t \geq n^{\gamma_{k, \ell}}$, where
    \[\gamma_{k, \ell} = \frac{k(k^2 - 2k - 1)}{\ell(k^2 - k - \ell - 1)}.
    \]
\end{theorem}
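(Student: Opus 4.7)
The plan is to induct on $\ell$. The base case $\ell = 1$ is furnished by Theorem~\ref{thm:k_1_optimal}, since $z_{k,1} = y_k$ and hence $\alpha_{k,1} = x_k/y_k = \alpha_k$. For the inductive step, assume the claim for $\cliquelist{k-1,\ell-1}$ and analyze Algorithm~\ref{alg:optimal_kl}, whose running time splits into two costs: the recursive $\cliquelist{k-1,\ell-1}$ calls in the neighborhoods of light nodes (Step~\ref{step:kl_opt_lightnodes}), and a single invocation of the $\cliquelist{k,1}$ algorithm on the residual graph of at most $\ell\Delta_\ell/x$ nodes (Step~\ref{step:kl_densecall}).

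By Theorem~\ref{thm:k_1_optimal}, Step~\ref{step:kl_densecall} runs in $\tO((\Delta_\ell/x)^{\alpha_k} t^{1-\alpha_k/k})$ time. For Step~\ref{step:kl_opt_lightnodes}, I use that the neighborhood of a node $v$ contains exactly $\Delta_\ell(v)$ many $(\ell-1)$-cliques and $\Delta_k(v)$ many $(k-1)$-cliques, so the inductive hypothesis bounds the call in that neighborhood by $\tO(\Delta_\ell(v)^{\alpha_{k-1,\ell-1}} \Delta_k(v)^{1-(\ell-1)\alpha_{k-1,\ell-1}/(k-1)})$, plus a lower-order term handled via the padding trick of Lemma~\ref{lem:list_exponent_monotone} for neighborhoods where $\Delta_k(v)$ falls below the inductive threshold. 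Splitting $\alpha_{k-1,\ell-1} = \alpha_{k-1,\ell-1}(k-\ell)/(k-1) + \alpha_{k-1,\ell-1}(\ell-1)/(k-1)$, using $\Delta_\ell(v) \le x$ to absorb the first portion of $\Delta_\ell(v)$ into $x$, and then applying H\"older's inequality (Corollary~\ref{cor:holders_useful}) with $\sum_v \Delta_\ell(v) \le \ell \Delta_\ell$ and $\sum_v \Delta_k(v) \le kt$ yields total Step~\ref{step:kl_opt_lightnodes} cost $\tO(x^{\alpha_{k-1,\ell-1}(k-\ell)/(k-1)} \Delta_\ell^{(\ell-1)\alpha_{k-1,\ell-1}/(k-1)} t^{1-(\ell-1)\alpha_{k-1,\ell-1}/(k-1)})$.

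Balancing these two costs determines $x$ as a monomial in $\Delta_\ell$ and $t$, and a short calculation shows the balanced exponent on $\Delta_\ell$ equals $\alpha_k \cdot \alpha_{k-1,\ell-1}(k-1)/[\alpha_{k-1,\ell-1}(k-\ell) + \alpha_k(k-1)]$; the matching exponent on $t$ is automatically $1-\ell\alpha_{k,\ell}/k$. Substituting $\alpha_k = x_k/y_k$ and $\alpha_{k-1,\ell-1} = x_{k-1}/z_{k-1,\ell-1}$ and clearing denominators reduces the claim $\alpha_{k,\ell} = x_k/z_{k,\ell}$ to exactly the recurrence $z_{k,\ell} = (x_k/x_{k-1}) z_{k-1,\ell-1} + (k-\ell) y_k/(k-1)$ of Claim~\ref{claim:zkl_recursive}, which closes the induction. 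For $\omega = 2$, plugging $x_k = k$ and $y_k = k(k-1)/2$ into the closed form for $z_{k,\ell}$ telescopes to $z_{k,\ell} = k\ell(k-\ell)/2$, producing the stated explicit exponents.

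The remaining and most tedious obligation is verifying the threshold $\gamma_{k,\ell}$: the chosen $x$ must keep both subroutines inside their conditionally optimal regimes. Specifically, Step~\ref{step:kl_densecall} needs $t \ge (\Delta_\ell/x)^{\gamma_k}$, and the inductive bound in Step~\ref{step:kl_opt_lightnodes} needs $\Delta_k(v) \ge \Delta_\ell(v)^{\gamma_{k-1,\ell-1}}$, which only has to hold in aggregate thanks to the padding trick. Following the same algebraic pattern as the $\gamma_k$ derivation in Section~\ref{sec:k_1_opt}, one can exhibit an $\epsilon_{k,\ell} > 0$ such that setting $\gamma_{k,\ell} = (1-\epsilon_{k,\ell})k/\ell$ certifies both inequalities whenever $t \ge n^{\gamma_{k,\ell}}$; for $\omega = 2$ the arithmetic is explicit and produces $\gamma_{k,\ell} = k(k^2 - 2k - 1)/[\ell(k^2 - k - \ell - 1)]$. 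I expect this threshold bookkeeping, rather than the main algebraic identity, to be the principal source of friction, exactly as it was for Theorem~\ref{thm:k_1_optimal}.
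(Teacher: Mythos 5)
Your proposal follows essentially the same route as the paper: induction on $\ell$ with base case Theorem~\ref{thm:k_1_optimal}, the same two-term cost split for Algorithm~\ref{alg:optimal_kl} (H\"older plus the below-threshold term via Lemma~\ref{lem:list_exponent_monotone}), the same balancing of $x$ yielding $\alpha_{k,\ell}=\frac{(k-1)\alpha_k\alpha_{k-1,\ell-1}}{(k-1)\alpha_k+(k-\ell)\alpha_{k-1,\ell-1}}$ and closure via Claim~\ref{claim:zkl_recursive}, and the same two threshold constraints ($t\ge(\Delta_\ell/x)^{\gamma_k}$ and dominance of the H\"older term) determining $\gamma_{k,\ell}$. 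The only difference is that you leave the $\gamma_{k,\ell}$ arithmetic (the explicit $\omega=2$ computation and the $\epsilon_1,\epsilon_2$ argument for $\omega>2$) as a sketch, which the paper carries out in full, but the plan is the correct one.
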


\begin{proof}
    We show this inductively on $\ell$. For $\ell = 1$, we have $z_{k, 1} = y_k$, and this simply reduces to Theorem~\ref{thm:k_1_optimal}.
    
    For some $\ell > 1$, suppose that the theorem statement is true for all $\ell' < \ell$ and $k' > \ell'$. In particular, we assume that $\cliquelist{k',\ell'}$ takes time \[\tO(\Delta_{\ell'}^{\alpha_{k',\ell'}}\Delta_{k'}^{1 - \frac{\ell' \alpha_{k', \ell'}}{k'}})\]
    for $\Delta_{k'} \geq \Delta_{\ell'}^{\gamma_{k', \ell'}},$ and that the runtime is 
    \[\tO\left(\Delta_{\ell'}^{\alpha_{k',\ell'} + \gamma_{k',\ell'} \left(1 - \frac{\ell' \alpha_{k', \ell'}}{k'}\right)}\right)\]
    for $\Delta_{k'} \leq \Delta_{\ell'}^{\gamma_{k', \ell'}}.$
    We may assume this because the runtime is non-decreasing in $\Delta_{k'}$ by Lemma~\ref{lem:list_exponent_monotone}. 

    Recall that, 
    at a high level, Algorithm~\ref{alg:optimal_kl} first lists all $k$-cliques containing nodes that are contained in at most $x$ $\ell$-cliques, and deletes all such nodes. Now, there are at most $k\Delta_\ell/x$ nodes left in the graph, and we call the $\dense$ algorithm from Algorithm~\ref{alg:large_t_sparse_dense}.
    
    \paragraph{Runtime analysis.} Fix any $k > \ell$. In Step~\ref{step:kl_opt_lightnodes} of the algorithm, the runtime is given by (omitting $\tO(1)$ factors):
    \begin{align}
        &\sum_{v: \Delta_\ell(v) \leq x}
        \left(\Delta_\ell(v)^{\alpha_{k-1,\ell-1} + \gamma_{k-1,\ell-1} \left(1 - \frac{(\ell-1)\alpha_{k-1,\ell-1}}{k-1}\right)} + \Delta_\ell(v)^{\alpha_{k-1,\ell-1}} \Delta_k(v)^{1 - \frac{(\ell-1)\alpha_{k-1,\ell-1}}{k-1}}
        \right)\nonumber\\
        &\leq \Delta_\ell x^{\alpha_{k-1,\ell-1} + \gamma_{k-1,\ell-1} \left(1 - \frac{(\ell-1)\alpha_{k-1,\ell-1}}{k-1}\right) - 1} + \Delta_\ell^{\frac{(\ell-1)\alpha_{k-1,\ell-1}}{k-1}} t^{1 - \frac{(\ell-1)\alpha_{k-1,\ell-1}}{k-1}} x^{\frac{(k-\ell)\alpha_{k-1,\ell-1}}{k-1}},\label{eq:k_l_holders}   
    \end{align}
    where we use H\"{o}lder's inequality to bound the second term. Suppose for now that $t$ is large enough so that the second term dominates. 
    
    In Step~\ref{step:kl_densecall}, by Theorem~\ref{thm:k_1_optimal}, the runtime can be bounded by $(k\Delta_\ell/x)^{\alpha_k} t^{1 - \frac{\alpha_k}{k}}$  up to $\tilde{O}(1)$ factors, if we have
    \begin{equation}\label{eq:k_l_secondbound}
        t \geq (\Delta_\ell/x)^{\gamma_k}.
    \end{equation}
    
    Suppose that $t$ is large enough so that this inequality holds. Then, the runtime of the algorithm is
    \begin{align*}
        \Delta_\ell^{\frac{(\ell-1)\alpha_{k-1,\ell-1}}{k-1}} t^{1 - \frac{(\ell-1)\alpha_{k-1,\ell-1}}{k-1}} x^{\frac{(k-\ell)\alpha_{k-1,\ell-1}}{k-1}} +  \left(\frac{\Delta_\ell}{x}\right)^{\alpha_k} t^{1 - \frac{\alpha_k}{k}}.
    \end{align*}
    Choosing \begin{equation}\label{eqn:x_k_l_listing}
    x = \Delta_\ell^{\frac{(k-1) \alpha_k - (\ell-1)  \alpha_{k-1,\ell-1} }{(k-1)\alpha_k + (k-\ell)\alpha_{k-1,\ell-1}}} 
    t^{\frac{k(\ell-1) \alpha_{k-1,\ell-1} - (k-1) \alpha_k}{k((k-1)\alpha_k + (k-\ell)\alpha_{k-1,\ell-1})}},\end{equation}
    we get a runtime of (omitting $\tO(1)$ factors)
    \[
        \Delta_\ell^{\frac{\alpha_k \alpha_{k-1,\ell-1} (k-1)}{\alpha_k(k-1) + \alpha_{k-1,\ell-1} (k-\ell))}} t^{1 - \frac{\ell}{k} \cdot \frac{\alpha_k \alpha_{k-1,\ell-1} (k-1)}{\alpha_k(k-1) + \alpha_{k-1,\ell-1} (k-\ell))}},
    \]
    therefore giving 
    \begin{align*}
        \alpha_{k,\ell} &= \frac{\alpha_k \alpha_{k-1,\ell-1} (k-1)}{\alpha_k(k-1) + \alpha_{k-1,\ell-1} (k-\ell)}\\
        &= \frac{\frac{x_k}{y_k}\cdot \frac{x_{k-1}}{z_{k-1,\ell-1}}}{\frac{x_k}{y_k} + \frac{k-\ell}{k-1} \cdot \frac{x_{k-1}}{z_{k-1,\ell-1}}}\\
        &= \frac{x_k}{\frac{x_k}{x_{k-1}} z_{k-1,\ell-1} + \frac{k-\ell}{k-1}\cdot y_k} = \frac{x_k}{z_{k,\ell}}
    \end{align*}
    by Claim~\ref{claim:zkl_recursive}.
    
    \paragraph{Bound on $\gamma_{k, \ell}$ if $\omega = 2$.} 
    If $\ell = 1$, then $\gamma_{k, 1}$ matches the value of $\gamma_k$ we obtained from Theorem~\ref{thm:k_1_optimal}. Thus, we assume $\ell > 1$ and the bound for all $\ell' < \ell$ holds. 
    Recall that when $\omega = 2$, $\alpha_k = \frac{2}{k-1}$ and $\alpha_{k-1, \ell-1} = \frac{2}{(\ell-1)(k-\ell)}$. Therefore, substituting this into \eqref{eqn:x_k_l_listing}, we obtain
    \[
        x = \Delta_\ell^{\frac{(k-\ell-1)(\ell - 1)}{\ell(k-\ell)}} t^{\frac{\ell - 1}{k(k-\ell)}}.
    \]
    First, we check that \eqref{eq:k_l_secondbound} holds. In fact,
    \begin{align*}
        &t \geq \left(\frac{\Delta_\ell}{x}\right)^{\gamma_k} = \left(\Delta_\ell \cdot \Delta_\ell^{-\frac{(k-\ell-1)(\ell - 1)}{\ell(k-\ell)}} t^{-\frac{\ell - 1}{k(k-\ell)}}\right)^
        {\gamma_k}\\
        \iff & t^{1 + \frac{\gamma_k(\ell - 1)}{k(k-\ell)}} \geq \Delta_\ell^{\frac{(k-1)\gamma_k}{\ell(k-\ell)}}\\
        \iff & t \geq \Delta_\ell^{\frac{k(k-1)\gamma_k}{k\ell(k- \ell) + \ell (\ell - 1)\gamma_k}}.
    \end{align*}
    Substituting $\gamma_k = k - 1 - \frac{2}{k^2 - k - 2}$ from Theorem~\ref{thm:k_1_optimal}, we get the inequality
    $t \geq \Delta_\ell^{ \frac{k(k^2 - 2k - 1)}{\ell(k^2 - k - \ell - 1)}} = \Delta_\ell^{\gamma_{k, \ell}}$, which is indeed true by our choice of $\gamma_{k, \ell}$. 
    
    Now, it suffices to show that if $t \geq \Delta_\ell^{\gamma_{k ,\ell}}$, then the second term dominates in \eqref{eq:k_l_holders}. In fact, the second term dominates as long as 
    \begin{align*}
        &t \geq \Delta_\ell x^{\gamma_{k-1,\ell-1} - 1}\\
        \iff &t \geq \Delta_\ell^\frac{1+\frac{(k-\ell-1)(\ell - 1)}{\ell(k-\ell)}(\gamma_{k-1,\ell-1}-1)}{1 - \frac{\ell - 1}{k(k-\ell)}(\gamma_{k-1,\ell-1}-1)} = \Delta_\ell^{\frac{k(k-1)(k-3)}{\ell(k^2 - 3k + 3 - \ell)}},
    \end{align*}
    where the last equality holds because $\gamma_{k-1,\ell-1} = \frac{(k-1)((k-1)^2 - 2(k-1)-1)}{(\ell-1) ((k-1)^2 - k - \ell + 1)}$
    by induction. 
    Hence, it suffices to show that $\gamma_{k,\ell}$ is at least the exponent on the right-hand side.
    \begin{align*}
        &\gamma_{k, \ell} = \frac{k(k^2 - 2k - 1)}{\ell(k^2 - k - \ell - 1)} \geq \frac{k(k-1)(k-3)}{\ell(k^2 - 3k + 3 - \ell)}\\
        \iff & 1 - \frac{k-\ell}{k^2-k-1-\ell} \geq 1 - \frac{k-\ell}{k^2 - 3k - \ell + 3}\\
        \iff & k^2 - k - 1 - \ell \geq k^2 - 3k - \ell + 3\\
        \iff & k \geq 2,
    \end{align*}
    which is true since $k \geq 3.$
    
    \paragraph{Bound on $\gamma_{k, \ell}$ if $\omega > 2$.} In this case, we show that there exists some $\epsilon_{k,\ell} > 0$ such that $\gamma_{k,\ell} \leq \frac{k}{\ell}(1 - \epsilon_{k, \ell})$.

    First, consider \eqref{eq:k_l_secondbound}. Note that one can rewrite 
    \begin{align*}
       x = \Delta_\ell^{1 - \frac{\alpha_{k,\ell}}{\alpha_k}} t^{\frac{\ell \alpha_{k,\ell}}{k\alpha_k} - \frac{1}{k}}.
    \end{align*}
    Rewriting $q = \frac{\alpha_{k, \ell}}{\alpha_k}$, and substituting this into \eqref{eq:k_l_secondbound}, we obtain
    \begin{align*}
        t \geq \left(\frac{\Delta_\ell}{x}\right)^{\gamma_k} = \left(\frac{\Delta_\ell^q}{t^{\frac{\ell q}{k} - \frac{1}{k}}}\right)^{\gamma_k} \iff t \geq \Delta_\ell^{\frac{kq \gamma_k}{\ell q \gamma_k + k - \gamma_k}}.
    \end{align*}
By choosing $\epsilon_1 = \frac{k - \gamma_k}{\ell q \gamma_k + k - \gamma_k}$ (which is positive since $k > \gamma_k$ by Theorem~\ref{thm:k-l-large-t-listing}, it is easy to check that the right-hand side is equal to $\Delta_\ell^{\frac{k}{\ell}(1 - \epsilon_1)}$.

Now, consider \eqref{eq:k_l_holders}. For the second term to dominate, we can rewrite the inequality as 
\begin{align*}
    t \geq \Delta_\ell \cdot x^{\gamma_{k-1,\ell-1} - 1} = \Delta_\ell^{1 + (\gamma_{k-1,\ell-1} - 1)(1-q)} t^{\frac{1}{k} \cdot (\gamma_{k-1,\ell-1} - 1)(\ell q - 1)}.
\end{align*}
Rearranging this, we see that we require
\begin{align*}
    t \geq \Delta_\ell^{\frac{1 + (\gamma_{k-1,\ell-1} - 1) (1 - q)}{1 - \frac{1}{k}(\gamma_{k-1,\ell-1} - 1)(\ell q - 1)}}.
\end{align*}
By the induction hypothesis, we know there exists some $0 < \epsilon' < 1$ such that $\gamma_{k-1,\ell-1} = \frac{k-1}{\ell - 1}(1 - \epsilon').$ Therefore, substituting this into the above equation and rearranging, we require
\begin{align}\label{eq:gamma_kl_holderbound}
    t \geq \Delta_\ell^{\frac{k}{\ell}\left(1 - \frac{(k-1)(\ell - 1)\epsilon'}{(q\ell - 1)(k-1) \epsilon' + \ell((k-1) - q(k-\ell))}\right)}.
\end{align}
Let $\epsilon_{num} = (k-1)(\ell-1)\epsilon'$ and and $\epsilon_{den} = (q\ell - 1)(k-1) \epsilon' + \ell((k-1) - q(k-\ell))$. Clearly, since $k \geq 3$ and $\ell \geq 2$, $\epsilon_{num} > 0$. Now, consider two cases.
\begin{itemize}
    \item $q\ell - 1 \geq 0$. Then, since $\epsilon' > 0$, we have $\epsilon_{den} \geq \ell((k-1) - q(k-\ell)) > 0$ since $q = \frac{\alpha_{k,\ell}}{\alpha_k} = \frac{y_k}{z_{k,\ell}} < \frac{k-1}{k-\ell}$ by Claim~\ref{claim:zkl_recursive}.
    \item $q\ell - 1 < 0$. Then, since $\epsilon' < 1,$ $\ell \geq 2,$ $q > 0$ and $k \geq 3$
    \begin{align*}
        \epsilon_{den} &> (q\ell - 1)(k-1)  + \ell((k-1) - q(k-\ell))\\
        & = q\ell (\ell - 1) + (\ell-1)(k-1) > 0.
    \end{align*}
\end{itemize}

    Therefore, let $\epsilon_2 = \frac{\epsilon_{num}}{\epsilon_{den}}$. Clearly, $\epsilon_2 > 0$. Then, if $t \geq \Delta_\ell^{\frac{k}{\ell}(1 - \epsilon_2)}$, then \eqref{eq:gamma_kl_holderbound} holds. Hence, we can pick $\epsilon_{k, \ell} = \min\{\epsilon_1, \epsilon_2\} > 0$ to ensure both conditions \eqref{eq:k_l_holders} and \eqref{eq:k_l_secondbound} hold.
\end{proof}

\section{Extending the Algorithm to Graphs with Fewer \texorpdfstring{$k$}{k}-Cliques}
\label{sec:general-list}
In this section, we show how to apply our algorithm in Section~\ref{sec:upper-bound} which only works for very large $t$ (or rather, does not have improved runtime for smaller $t$) to other ranges of $t$ as well, via black-box reductions.

\begin{theorem}
\label{thm:reduction_from_small_t}
Suppose for every $1 \le \ell < k$, $\cliquelist{k, \ell}$ can be solved in $\tO(\Delta_\ell^{\alpha_{k, \ell}} t^{1-\frac{\ell \alpha_{k, \ell}}{k}})$ time when $t \ge \Delta_\ell^{\gamma_{k, \ell}}$. Then for every $1 \le \ell \le k$ and $1 \le s < k$ where $\lceil \frac{k}{s} \rceil \ne \lceil \frac{\ell}{s} \rceil$, $\cliquelist{k, \ell}$ can be solved in 
$$\tO\left( \left(\Delta_{\ell}^{\frac{s}{\ell} \lceil \frac{\ell}{s}\rceil} \right)^{\alpha_{k', \ell'}} t^{1-\frac{\ell' \alpha_{k', \ell'}}{k'}} \right)$$
time for $t \ge \left(\Delta_{\ell}^{\frac{s}{\ell} \lceil \frac{\ell}{s}\rceil} \right)^{\gamma_{k', \ell'}}$, where $k' = \lceil \frac{k}{s}\rceil$ and $\ell' = \lceil \frac{\ell}{s}\rceil$. 
\end{theorem}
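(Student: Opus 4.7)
The plan is to reduce $\cliquelist{k, \ell}$ to $\cliquelist{k', \ell'}$ on an auxiliary multipartite graph $G'$ built from small cliques of $G$, in the spirit of the Ne\v{s}etril--Poljak / Eisenbrand--Grandoni construction. Set part sizes $s_1 = \cdots = s_{k'-1} = s$ and $s_{k'} = k - s(k'-1) \in (0, s]$ so that $\sum_i s_i = k$. Let $V_i$ be the set of all $s_i$-cliques of $G$ and form a $k'$-partite graph $G'$ on $V_1 \sqcup \cdots \sqcup V_{k'}$ in which a pair $C_i \in V_i, C_j \in V_j$ ($i \ne j$) is joined iff $C_i \cup C_j$ induces an $(s_i + s_j)$-clique in $G$. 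Then $k'$-cliques of $G'$ are tuples with one clique from each part whose union spans a $k$-clique of $G$, so each $k$-clique of $G$ corresponds to $\binom{k}{s_1, \ldots, s_{k'}} = \Theta(1)$ many $k'$-cliques of $G'$. The hypothesis $k' \ne \ell'$, together with $\ell' \le k'$ coming from $\ell \le k$, ensures $\ell' < k'$ and so $\cliquelist{k',\ell'}$ is non-trivially defined on $G'$.

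The next step is to count and enumerate all $\ell'$-cliques of $G'$ to feed into the black-box algorithm. An $\ell'$-clique of $G'$ picks a set $S \subseteq [k']$ of $\ell'$ parts together with a clique of $G$ of size $j_S := \sum_{i \in S} s_i \le s \ell' = s \lceil \ell/s \rceil$. The dominant case $S \subseteq [k'-1]$ yields $j_S = s \ell' \ge \ell$, where Lemma~\ref{lem:simple_list_ub} lists all such cliques of $G$ in $\tO(\Delta_\ell^{s\ell'/\ell})$ time; the remaining cases have $j_S < s \ell'$, and for $j_S \ge \ell$ we apply Lemma~\ref{lem:simple_list_ub} with an even smaller bound, while for $j_S < \ell$ each such $j_S$-clique is a subset of some given $\ell$-clique and can be enumerated in $\tO(\Delta_\ell)$ time. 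Combining and using $s\ell'/\ell = (s/\ell)\lceil \ell/s \rceil \ge 1$, the number of $\ell'$-cliques of $G'$, and the time to list them, is $\tO\bigl(\Delta_\ell^{(s/\ell)\lceil \ell/s \rceil}\bigr)$, matching the quantity $\Delta'_{\ell'}$ appearing in the theorem.

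Finally, apply the hypothesized algorithm for $\cliquelist{k', \ell'}$ on $G'$ with the list of $\ell'$-cliques produced above, and map the reported $k'$-cliques back to $k$-cliques of $G$ with $\tO(1)$ overhead for deduplication. Because the number of $k'$-cliques of $G'$ is $\Theta(t)$, the assumption $t \ge \bigl(\Delta_\ell^{(s/\ell)\lceil \ell/s\rceil}\bigr)^{\gamma_{k',\ell'}} = (\Delta'_{\ell'})^{\gamma_{k',\ell'}}$ places us in the regime in which the black-box runs in time $\tO\bigl((\Delta'_{\ell'})^{\alpha_{k',\ell'}} t^{1-\ell'\alpha_{k',\ell'}/k'}\bigr)$, which is the bound claimed in the theorem (the construction and enumeration cost $\tO(\Delta_\ell^{(s/\ell)\lceil \ell/s\rceil})$ is absorbed).

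The main obstacle is purely bookkeeping: verifying that subsets $S$ containing the short part $V_{k'}$---which correspond to cliques of $G$ of size strictly smaller than $s\ell'$, possibly even below $\ell$---do not contribute more $\ell'$-cliques in $G'$ than the dominant $\tO(\Delta_\ell^{s\ell'/\ell})$ term. This is the one place where the bound $\Delta_j \le O(\Delta_\ell^{\max(j/\ell,\, 1)})$ must be applied consistently across the two regimes $j \ge \ell$ and $j < \ell$, so that the final count matches the $\Delta'_{\ell'}$ used by the black-box algorithm.
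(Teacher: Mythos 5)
Your construction is essentially the paper's: build a $k'$-partite auxiliary graph whose nodes are small cliques of $G$ (the paper first makes $G$ $k$-partite and uses tuples over blocks of $s$ vertex-parts, giving a one-to-one correspondence; you use all $s_i$-cliques and absorb the resulting constant multiplicity), bound the number of $\ell'$-cliques of $G'$ by $\tO(\Delta_\ell^{(s/\ell)\lceil \ell/s\rceil})$, and feed $G'$ to the black-box $\cliquelist{k',\ell'}$ algorithm. Up to the bookkeeping point you yourself flag, this matches the paper's argument.

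That bookkeeping point, however, is where your writeup has a genuine flaw. You claim that when a set $S$ of $\ell'$ parts meets the short part, so that $j_S := \sum_{i\in S} s_i < \ell$, ``each such $j_S$-clique is a subset of some given $\ell$-clique and can be enumerated in $\tO(\Delta_\ell)$ time.'' This is false: a clique of size $j<\ell$ need not be contained in any $\ell$-clique, and its count is not bounded by any function of $\Delta_\ell$ (already $\Delta_4$ can vastly exceed any power of $\Delta_5$). Moreover, you cannot simply drop these cliques: the black box is defined to receive \emph{all} $\ell'$-cliques of $G'$, and even after the natural pruning (keep only $G'$-nodes whose cliques lie inside some listed $\ell$-clique, which does preserve all $k'$-cliques) one can arrange many $j_S$-cliques, with $3\le j_S<\ell$, all of whose constituent parts lie in $\ell$-cliques while the unions do not, so the $\ell'$-clique count of $G'$ exceeds the claimed $\tO(\Delta_\ell^{(s/\ell)\lceil\ell/s\rceil})$. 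The paper's proof handles the short-part cliques by noting they have size at most $s\ell'$ and invoking Lemma~\ref{lem:simple_list_ub}; that argument is airtight when the size is at least $\ell$ (which holds, e.g., whenever $s\mid k$, or more generally when $s\lceil \ell/s\rceil-\ell \ge s\lceil k/s\rceil - k$), and sizes $1$ and $2$ are fine after discarding vertices and edges outside all listed $\ell$-cliques, which covers every instantiation the paper actually uses ($\ell=1,2$). Your proposal is correct in those regimes, but the specific justification you give for the remaining sub-case $3 \le j_S < \ell$ does not work and would need a different argument (or an explicit restriction of the parameter range).
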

\begin{proof}
Let $G$ be the input of a $\cliquelist{k, \ell}$ instance. Without loss of generality, assume $G$ is $k$-partite with parts $V_1, \ldots, V_k$. Create a new $k'$-partite graph $G'$ on node parts $U_1 = V_1 \times \cdots \times V_s,  U_2 = V_{s+1} \times \cdots \times V_{2s}, \ldots, U_{k'} = V_{s(k' - 1) + 1} \times \cdots \times V_k$ (each node corresponds to a set of at most $s$ nodes). Keep a node $(v_1, v_2, \ldots, v_i)$ if and only if $(v_1, v_2, \ldots, v_i)$ forms a clique in $G$.  
Add an edge between two nodes $(v_1, v_2, \ldots, v_i)$ and $(v_1', v_2', \ldots, v_{i'}')$ belonging to two different parts if and only if the nodes $(v_1, v_2, \ldots, v_i, v_1', v_2', \ldots, v_{i'}')$ form a clique in $G$. 

Clearly, $k'$-cliques in $G'$ have one-to-one correspondence with $k$-cliques in $G$, so it suffices to list $t$ $k'$-cliques in $G'$ in order to list $t$ $k$-cliques in $G$. Furthermore, distinct $\ell'$-clique in $G'$ corresponds to distinct clique in $G$. Depending on whether an $\ell'$-clique uses a node in $U_{k'}$, it corresponds to either an $(s\ell')$-clique in $G$ or an $(s\ell'+k-sk')$-clique in $G$. Either way, it is a clique of size at most $s\ell'$. Thus, by Lemma~\ref{lem:simple_list_ub}, there are $\tO(\Delta_\ell^{\frac{s \ell'}{\ell}}) = \tO(\Delta_\ell^{\frac{s}{\ell} \lceil \frac{\ell}{s}\rceil})$ such cliques in $G$ and we can list them in $\tO(\Delta_\ell^{\frac{s}{\ell} \lceil \frac{\ell}{s}\rceil})$ time as well. 

Thus, to solve $\cliquelist{k, \ell}$ on $G$ with $t$ $k$-cliques, it suffices to solve $\cliquelist{k', \ell'}$ on $G'$ with $t$ $k'$-cliques. The theorem thus easily follows. 
\end{proof}

Let us give some examples to show how to use Theorem~\ref{thm:reduction_from_small_t}. 

First, for any $1 \le \ell \le k$, let us take the extreme example $s = 1$. In this case, the running time is exactly the running time given in Theorem~\ref{thm:k-l-large-t-listing}. On the other extreme end, $s = k - 1$. Then $k' = 2$ and $\ell' = 1$. In this extreme case, we have $\alpha_{k', \ell'} = 2$ and $\gamma_{k', \ell'} = 0$. Thus, we get an algorithm that works for any $t \ge 1$, although its running time $\tO\left(\Delta_\ell^{\frac{2(k-1)}{\ell}}\right)$ is not great. One can imagine when increasing $s$ from $1$ to $k-1$, we achieve a trade-off between the bound for $t$ and running time of the algorithm. 

Let us give the following more concrete examples. For simplicity, we assume $\omega = 2$. 

\begin{corollary}
\label{cor:k-1-listing-general}
Assume $\omega = 2$. 
Fix any integer  $k \ge 2$, and any integer $1 \le s < \frac{k}{2}$. $\cliquelist{k, 1}$ can be solved in $\tO\left(n^{\frac{2s}{k'-1}} t^{1-\frac{2}{k'(k'-1)}} \right)$ time when $t \ge n^{s(k'-1-\frac{2}{k'^2-k'-2})}$, where $k' = \lceil \frac{k}{s}\rceil$.
\end{corollary}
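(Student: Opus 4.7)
The plan is to derive the corollary as a direct instantiation of Theorem~\ref{thm:reduction_from_small_t} with $\ell=1$, followed by plugging in the $\omega=2$ values of $\alpha_{k',1}$ and $\gamma_{k',1}$ coming from Theorem~\ref{thm:k-l-large-t-listing} (equivalently Theorem~\ref{thm:k_1_optimal}). First I would check that the hypotheses of Theorem~\ref{thm:reduction_from_small_t} are met. We have $\ell=1$ and $1\le s < k/2 < k$, so the range condition on $s$ is satisfied. Moreover $\ell' = \lceil \ell/s\rceil = \lceil 1/s\rceil = 1$ since $s\ge 1$, while $k' = \lceil k/s\rceil \ge 3$ because $s < k/2$ forces $k/s > 2$; in particular $\lceil k/s\rceil\ne\lceil \ell/s\rceil$, as required. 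Also, the resulting pair $(k',\ell')=(k',1)$ falls in the regime $1\le \ell' < k'$ of Theorem~\ref{thm:k-l-large-t-listing}, so we may invoke it in a black-box manner.

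Next I would substitute into the conclusion of Theorem~\ref{thm:reduction_from_small_t}. With $\ell=1$ the quantity $\Delta_\ell^{\frac{s}{\ell}\lceil \ell/s\rceil}$ becomes $n^{s\cdot 1}=n^s$, so the theorem yields a $\cliquelist{k,1}$ algorithm with running time
\[
\tO\!\left( (n^s)^{\alpha_{k',1}}\, t^{1-\frac{\alpha_{k',1}}{k'}}\right)
\qquad\text{valid for}\qquad t\ge (n^s)^{\gamma_{k',1}}.
\]
By Theorem~\ref{thm:k_1_optimal}, when $\omega=2$ one has $x_{k'}=k'$ and $y_{k'}=k'(k'-1)/2$, hence $\alpha_{k',1} = 2/(k'-1)$. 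Plugging this in gives the exponent $s\alpha_{k',1} = \tfrac{2s}{k'-1}$ on $n$, and exponent $1-\tfrac{\alpha_{k',1}}{k'} = 1-\tfrac{2}{k'(k'-1)}$ on $t$, matching the corollary's stated runtime.

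Finally I would verify that the threshold matches. Theorem~\ref{thm:k_1_optimal} gives, for $\omega=2$,
\[
\gamma_{k'} \;=\; k'\!\left(1-\tfrac{1}{k'-\alpha_{k'}}\right) \;=\; k'\cdot\tfrac{k'^2-2k'-1}{k'^2-k'-2},
\]
using $k'-\alpha_{k'} = k' - \tfrac{2}{k'-1} = \tfrac{k'^2-k'-2}{k'-1}$. Therefore the required bound on $t$ becomes $t\ge n^{s\gamma_{k',1}} = n^{s\cdot k'(k'^2-2k'-1)/(k'^2-k'-2)}$. The small piece of algebra
\[
(k'-1)(k'^2-k'-2)-2 \;=\; k'(k'^2-2k'-1)
\]
shows this exponent equals $s\!\left(k'-1-\tfrac{2}{k'^2-k'-2}\right)$, which is exactly the threshold stated in the corollary.

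The proof is essentially a bookkeeping exercise, so there is no real obstacle; the only mildly non-trivial step is the arithmetic simplification in the last paragraph that identifies $s\gamma_{k',1}$ with the form $s\!\left(k'-1-\tfrac{2}{k'^2-k'-2}\right)$ appearing in the corollary. Once this identity is verified, the runtime and threshold for $\cliquelist{k,1}$ follow immediately from the black-box reduction of Theorem~\ref{thm:reduction_from_small_t}.
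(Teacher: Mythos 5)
Your proposal is correct and follows essentially the same route as the paper: instantiate Theorem~\ref{thm:reduction_from_small_t} with $\ell=1$ (so $\ell'=1$, $k'=\lceil k/s\rceil\ge 3$) and plug in the $\omega=2$ values $\alpha_{k',1}=\tfrac{2}{k'-1}$ and $\gamma_{k',1}=k'-1-\tfrac{2}{k'^2-k'-2}$ from Theorem~\ref{thm:k_1_optimal}. The extra hypothesis-checking and the algebraic identification of $\gamma_{k'}$ with $k'-1-\tfrac{2}{k'^2-k'-2}$ are fine but already implicit in the paper's statements.
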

\begin{proof}
Apply Theorem~\ref{thm:reduction_from_small_t}. We get $k' = \lceil \frac{k}{s}\rceil\ge 3$, $\ell' = 1$, and an algorithm for $\cliquelist{k, 1}$ that runs in $\tO(\left(n^{s}\right)^{\alpha_{k', 1}} t^{1-\frac{\alpha_{k',1}}{k'}})$ time when $t \ge \left(n^{s}\right)^{\gamma_{k', 1}}$. By Theorem~\ref{thm:k_1_optimal}, when $\omega = 2$, $\alpha_{k', 1} = \frac{2}{k'-1}$ and $\gamma_{k', 1} = k'-1-\frac{2}{k'^2 - k' - 2}$. Thus, we get an 
$\tO\left(n^{\frac{2s}{k'-1}} t^{1-\frac{2}{k'(k'-1)}} \right)$ time algorithm for $t \ge n^{s(k'-1-\frac{2}{k'^2-k'-2})}$. 
\end{proof}

\begin{example}[$\cliquelist{12, 1}$]\label{eg:12_1_list}
$\cliquelist{12, 1}$ has the following running times (by setting $s = 1, 2, 3, 4$ in Corollary~\ref{cor:k-1-listing-general}):
\begin{itemize}
    \item $\tO\left(n^{\frac{2}{11}} t^{\frac{65}{66}}\right)$ when $t \ge n^{11-\frac{1}{65}}$;
    \item $\tO\left(n^{\frac{4}{5}} t^{\frac{14}{15}}\right)$ when $t \ge n^{10-\frac{1}{7}}$;
    \item $\tO\left(n^{2} t^{\frac{5}{6}}\right)$ when $t \ge n^{9-\frac{3}{5}}$;
    \item $\tO\left(n^{4} t^{\frac{2}{3}}\right)$ when $t \ge n^{6}$;
    \item $\tO\left(n^8\right)$ when $t<n^6$.
\end{itemize}
\end{example}
Figure~\ref{fig:12_1} shows a pictorial representation of the $\cliquelist{12,1}$ runtime.

\begin{figure}[ht]
    \centering
    \includegraphics[width=0.5\textwidth]{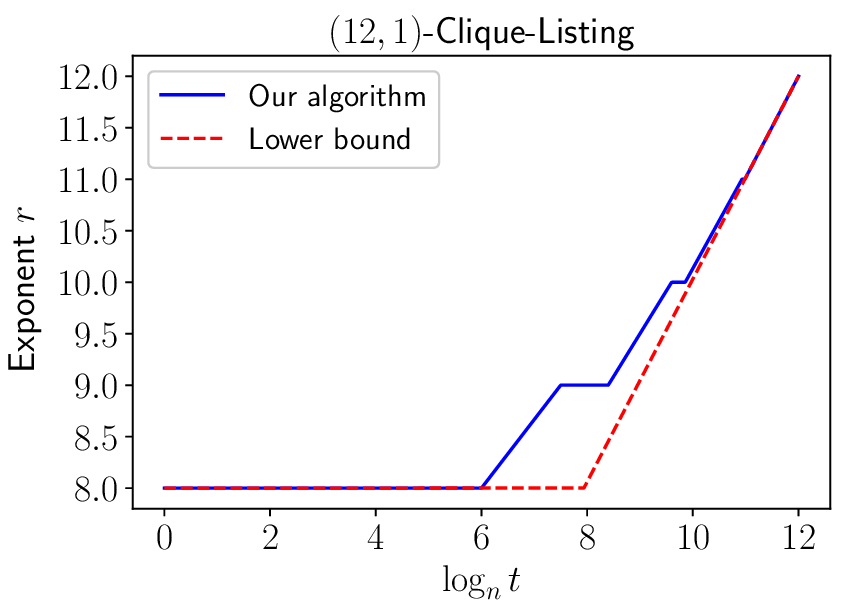}
    \caption{Upper and lower bounds for the runtime $r$ for $\cliquelist{12, 1}$ for a graph with $t$ $k$-cliques, assuming $\omega = 2$. The upper bound is from Example~\ref{eg:12_1_list} and the lower bounds are from Proposition~\ref{prop:eh_upper_bound} and Theorem~\ref{thm:lower_bound}.}\label{fig:12_1}
\end{figure}

We can similarly obtain the following corollary for $\cliquelist{k, 2}$. 

\begin{corollary}
\label{cor:k-2-listing-general}
Assume $\omega = 2$. 
Fix any integer  $k \ge 2$, and any integer $2 \le s < \frac{k}{2}$. $\cliquelist{k, 2}$ can be solved in $\tO\left(m^{\frac{s}{k'-1}} t^{1-\frac{2}{k'(k'-1)}} \right)$ time when $t \ge m^{\frac{s}{2} (k'-1-\frac{2}{k'^2-k'-2})}$, where $k' = \lceil \frac{k}{s}\rceil$.
\end{corollary}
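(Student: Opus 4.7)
The plan is to derive this corollary as a direct specialization of Theorem~\ref{thm:reduction_from_small_t} combined with the explicit exponents from Theorem~\ref{thm:k_1_optimal} at $\omega = 2$, exactly mirroring the proof of Corollary~\ref{cor:k-1-listing-general}.

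First I would verify that the hypothesis of Theorem~\ref{thm:reduction_from_small_t} is satisfied for the chosen parameters $\ell = 2$ and $s$ with $2 \le s < k/2$. Since $s \ge 2 = \ell$, we have $\ell' := \lceil \ell / s \rceil = 1$. Since $s < k/2$, we have $k/s > 2$, so $k' := \lceil k/s \rceil \ge 3$, and in particular $k' \ne \ell'$, so the non-triviality condition $\lceil k/s \rceil \ne \lceil \ell/s \rceil$ holds. Thus Theorem~\ref{thm:reduction_from_small_t} applies.

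Next I would substitute $\ell = 2$ and $\ell' = 1$ into the conclusion of Theorem~\ref{thm:reduction_from_small_t}. The base quantity $\Delta_\ell^{(s/\ell)\lceil \ell/s \rceil}$ simplifies to $m^{s/2}$ since $\lceil 2/s \rceil = 1$ for $s \ge 2$. Thus the theorem provides an algorithm running in time
\[
\tO\!\left( m^{(s/2)\,\alpha_{k',1}}\, t^{1 - \alpha_{k',1}/k'} \right)
\]
whenever $t \ge m^{(s/2)\,\gamma_{k',1}}$.

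Finally, I would substitute the explicit values from Theorem~\ref{thm:k_1_optimal} at $\omega = 2$, namely $\alpha_{k',1} = \frac{2}{k'-1}$ and $\gamma_{k',1} = k' - 1 - \frac{2}{k'^2 - k' - 2}$. The exponent on $m$ becomes $(s/2) \cdot \frac{2}{k'-1} = \frac{s}{k'-1}$, the exponent on $t$ becomes $1 - \frac{2}{k'(k'-1)}$, and the threshold becomes $m^{(s/2)(k'-1 - 2/(k'^2-k'-2))}$, matching exactly the statement of the corollary.

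There is essentially no obstacle here: the entire argument is a bookkeeping exercise, since the heavy lifting was done in Theorem~\ref{thm:reduction_from_small_t} (the reduction from $\cliquelist{k, \ell}$ to $\cliquelist{k', \ell'}$ via grouping vertices into $s$-tuples that form cliques, using Lemma~\ref{lem:simple_list_ub} to bound the number of resulting cliques-as-nodes) and in Theorem~\ref{thm:k_1_optimal} (the optimal listing algorithm in terms of $n$). The only thing to be careful about is that $\ell' = 1$ rather than $2$, which is what forces the factor $s/2$ to appear in both the exponent on $m$ and in the threshold.
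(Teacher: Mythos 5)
Your proposal is correct and follows essentially the same route as the paper's own proof: apply Theorem~\ref{thm:reduction_from_small_t} with $\ell = 2$, so that $\ell' = 1$, $k' = \lceil k/s\rceil \ge 3$, and the base quantity becomes $m^{s/2}$, then plug in $\alpha_{k',1} = \frac{2}{k'-1}$ and $\gamma_{k',1} = k'-1-\frac{2}{k'^2-k'-2}$ from Theorem~\ref{thm:k_1_optimal} at $\omega = 2$. The only addition over the paper's (very terse) proof is your explicit check of the condition $\lceil k/s\rceil \ne \lceil \ell/s\rceil$, which is a welcome piece of diligence rather than a deviation.
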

\begin{proof}
Apply Theorem~\ref{thm:reduction_from_small_t}. We get $k' = \lceil \frac{k}{s}\rceil\ge 3$, $\ell' = 1$, and an algorithm for $\cliquelist{k, 2}$ that runs in $\tO\left(\left(m^{\frac{s}{2}}\right)^{\alpha_{k', 1}} t^{1-\frac{\alpha_{k',1}}{k'}}\right)$ time when $t \ge \left(m^{\frac{s}{2}}\right)^{\gamma_{k', 1}}$. By Theorem~\ref{thm:k_1_optimal}, when $\omega = 2$, $\alpha_{k', 1} = \frac{2}{k'-1}$ and $\gamma_{k', 1} = k'-1-\frac{2}{k'^2 - k' - 2}$. The corollary then follows. 
\end{proof}

\section{6-Clique Madness}
\label{sec:6clique}
In this section, we show that our algorithm in Section~\ref{sec:upper-bound} is improvable by showing a faster algorithm for $\cliquelist{6, 1}$. See Figure~\ref{fig:6_1} for a comparison of the bounds achieved by the algorithm in Section~\ref{sec:general-list} and this section. 

The new algorithm for $\cliquelist{6, 1}$ comprises two parts:  Algorithms~\ref{alg:6_clique_Algo_1} and \ref{alg:6_clique_Algo_2}. 

\begin{algorithm}
    \caption{$\cliquelist{6, 1}$ Algorithm I.}\label{alg:6_clique_Algo_1}
    \begin{algorithmic}
    \item \textbf{Input:} $(G := (V, E), n, t)$
    \item \textbf{Output:} The list of all $\le t$ $6$-cliques.
    \begin{enumerate}
    \item If $n \le 5$, list nothing and return. 
    \item Call a $K_4$ light if it is contained in at most $\rho$ $K_6$ for some $\rho \ge 1$. Clearly, there are at most $15t/\rho$ dense $K_4$. 
    \item Just as before, we can list all $K_6$ containing light $K_4$ in $\tO(\rho \MM(n^2, \frac{n^2}{\rho}, n^2))$ time. 
    \label{line:6-1-list-denseK4}
    \item Call an edge light if it is contained in at most $\lambda$ $K_6$ for some $\lambda \ge 1$. All other edges are dense. There are at most $15 t/\lambda$ dense edges. 
    \item Just as before, we can list all $K_6$ containing one light edge and one dense $K_4$ that is disjoint with the light edge in $\tO(\lambda \MM(n, \frac{t/\rho}{\lambda}, n))$ time. 
    \label{line:6-1-list-denseedge}
    \item For each node $v$ connected to  $d_v \le x$ dense edges for some $x \ge 1$, run the $\cliquelist{5, 1}$ algorithm from Corollary~\ref{cor:5_1_opt} in its neighbors connected to it by dense edges. Delete this node afterwards. 
    \item The number of remaining nodes is at most $30t/x\lambda$; recurse. 
    \end{enumerate}
    \end{algorithmic}
\end{algorithm}

\begin{remark}
    Intuitively, Algorithm~\ref{alg:6_clique_Algo_1} is similar to Algorithm~\ref{alg:optimal_kl} with one main difference: we first bound the number of 4-cliques in the graph by $O(t/\rho)$ by getting rid of light 4-cliques rather than simply bounding the number of 4-cliques by $n^4$. This idea allows us to get a better bound on $\gamma_6$ than in Theorem~\ref{thm:k_1_optimal}. This idea can also be extended to all $k \geq 6$.
\end{remark}

\begin{lemma}
\label{lem:6_clique_Algo_1}
Algorithm~\ref{alg:6_clique_Algo_1} is correct and runs in $\tO(n^4 + n^{5/2}t^{1/2}+n^{2/5}t^{14/15})$ time if $\omega = 2$. 
\end{lemma}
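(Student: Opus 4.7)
I would prove the lemma in three parts: correctness, per-call running time, and parameter balancing.

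\textbf{Correctness.} Every $K_6$ falls into one of four cases. If it contains a light $K_4$, Step~\ref{line:6-1-list-denseK4} outputs it. Otherwise all 15 of its $K_4$-subgraphs are dense; if any edge $e$ of the $K_6$ is light, then the four other vertices span a dense $K_4$ disjoint from $e$, so Step~\ref{line:6-1-list-denseedge} outputs it. Otherwise every edge of the $K_6$ is dense, and in particular every vertex has positive dense-degree. The low-dense-degree loop then lists every $K_6$ meeting a vertex $v$ with $d_v := \deg_{\text{dense}}(v) \le x$ by running $\cliquelist{5,1}$ (Corollary~\ref{cor:5_1_opt}) on $v$'s dense neighborhood. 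Whatever is left uses only vertices with dense-degree $>x$; from $\sum_v d_v = 2|E_{\text{dense}}| \le 30t/\lambda$, such vertices number $<30t/(x\lambda)$, and the recursion handles them. The witness-finding sampling in Steps~\ref{line:6-1-list-denseK4} and \ref{line:6-1-list-denseedge} is standard and is repeated $\tO(\rho)$ and $\tO(\lambda)$ times respectively.

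\textbf{Per-call cost.} With $\omega=2$ the preliminaries give $\MM(n^2,n^2/\rho,n^2)=\tO(n^4)$ and $\MM(n,t/(\rho\lambda),n)=\tO(n^2+nt/(\rho\lambda))$, so Steps~\ref{line:6-1-list-denseK4} and \ref{line:6-1-list-denseedge} cost $\tO(\rho n^4)$ and $\tO(\lambda n^2 + nt/\rho)$. Corollary~\ref{cor:5_1_opt} bounds the $\cliquelist{5,1}$ call at $v$ by $\tO(d_v^4 + d_v^{1/2}\Delta_6(v)^{9/10})$. Using $d_v \le x$ and $\sum_{v : d_v \le x} d_v \le 30t/\lambda$ yields $\sum d_v^4 \le x^3 \cdot 30t/\lambda$, and applying H\"older's inequality (Corollary~\ref{cor:holders_useful}) as
\[\sum_v d_v^{1/2}\Delta_6(v)^{9/10} = \sum_v (d_v^5)^{1/10}\Delta_6(v)^{9/10} \le \Big(\sum_v d_v^5\Big)^{1/10}\Big(\sum_v \Delta_6(v)\Big)^{9/10}\]
together with $\sum_v d_v^5 \le x^4 \cdot 30t/\lambda$ and $\sum_v \Delta_6(v) \le 6t$ bounds Step~6 by $\tO(x^3t/\lambda + x^{2/5}t/\lambda^{1/10})$. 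Enforcing $x\lambda = \Theta(t/n)$ halves $n$ at every recursive call, so the depth is $O(\log n)$ and the total running time is a $\tO(1)$ factor times this per-call cost.

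\textbf{Parameter balancing.} In the small regime $t\le n^3$, set $\rho=1,\,\lambda=n^2,\,x=\Theta(\max\{1,t/n^3\})$; every contribution is then $\tO(n^4)$. In the intermediate regime $n^3\le t\le n^{63/13}$, set $\rho=x=t^{1/2}/n^{3/2}$ and $\lambda=n^{1/2}t^{1/2}$; the three matrix-product terms all equal $n^{5/2}t^{1/2}$, and the Step~6 terms $x^3t/\lambda=t^2/n^5$ and $x^{2/5}t/\lambda^{1/10}=t^{23/20}/n^{13/20}$ both fit under $n^{5/2}t^{1/2}$ iff $t\le n^5$ and $t\le n^{63/13}$ respectively, so the binding constraint at the right endpoint is exactly $t=n^{63/13}$. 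In the large regime $t\ge n^{63/13}$, set $\rho=t^{14/15}/n^{18/5},\,\lambda=t^{14/15}/n^{8/5},\,x=\Theta(n^{3/5}t^{1/15})$; then $\rho n^4 = \lambda n^2 = n^{2/5}t^{14/15}$, $nt/\rho = n^{23/5}t^{1/15}$ meets the target iff $t\ge n^{63/13}$, and the Step~6 terms evaluate to $\tO(n^{17/5}t^{4/15})$ and exactly $n^{2/5}t^{14/15}$, the first being below target because $t\ge n^{9/2}$. Taking the pointwise maximum across the three regimes yields $\tO(n^4 + n^{5/2}t^{1/2} + n^{2/5}t^{14/15})$. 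The main delicacy is the H\"older manipulation for Step~6 and verifying that the crossover exponent $63/13$ is precisely where the per-step contributions simultaneously switch between the two nontrivial asymptotic modes.
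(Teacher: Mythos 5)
Your proposal is correct and follows essentially the same route as the paper's proof: the same four-case correctness argument, the same per-call bound $\tO(\rho n^4 + \lambda n^2 + nt/\rho + x^3 t/\lambda + x^{2/5}t/\lambda^{1/10})$ via the identical H\"older step (just phrased with $\sum_v d_v^5$ instead of pulling out $x^{2/5}$ first), the same halving recursion with $x\lambda = \Theta(t/n)$, and the same parameter settings in the three regimes with the same crossover exponents $3$ and $63/13$.
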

\begin{proof}
After Line~\ref{line:6-1-list-denseK4}, the algorithm has listed all $K_6$ containing at least one light $K_4$. After Line~\ref{line:6-1-list-denseedge}, the algorithm  has also listed all $K_6$ containing a dense $K_4$ and a disjoint light edge. Thus, after this point, only $K_6$ containing no light edges are not listed. Then clearly, the next two steps list all such $K_6$. 

The running time, excluding the recursion, is (if $\omega = 2$)
\begin{align*}
    &\tO\left(\rho \MM\left(n^2, \frac{n^2}{\rho}, n^2\right) + \lambda \MM\left(n, \frac{t/\rho}{\lambda}, n\right) + \sum_{v: d_v \le x} \left( d_v^4 + d_v^{1/2}\Delta_6(v)^{9/10} \right) \right)\\
    \le & \tO\left(\rho n^4 + \lambda n^2 + \frac{nt}{\rho}  +  (t/\lambda) x^3 + (t/\lambda)^{1/10} t^{9/10} x^{2/5} \right).
\end{align*}
The inequality is due to $\sum_{v}d_v \le O(t/\lambda)$,  $\sum_v \Delta_6(v) \le O(t)$ and H\"{o}lder's inequality. 

We also set $\lambda = \max\{1, \frac{15t}{xn}\}$, so that each recursion level decreases $n$ by a factor of at least $2$. The overall time complexity is thus within $\tO(1)$ of the time complexity of the first recursion level. The running time then becomes (assuming $\frac{15t}{xn} \ge 1$)
\begin{align*}
    \tO\left(\rho n^4 + \frac{nt}{x} + \frac{nt}{\rho}  +  x^4 n + t^{9/10}x^{1/2}n^{1/10} \right).
\end{align*}
The running time of the algorithm is thus
\begin{itemize}
    \item $\tO(n^4)$ when $t \le n^3$ by setting $\rho = 1$ and $x=1$ (even though in this setting, $\frac{15t}{xn}$ will be less than $1$ if $t < n / 15$, the running time still holds by setting $\lambda = 1$);
    \item $\tO(n^{5/2} t^{1/2})$ when $n^3 < t \le n^{63/13}$ by setting $x = \rho = n^{-3/2} t^{1/2}$;
    \item and $\tO(n^{2/5} t^{14/15})$ when $t > n^{63/13}$ by setting $x = n^{3/5} t^{1/15}$ and $\rho = n^{-18/5}t^{14/15}$. \qedhere
\end{itemize}
\end{proof}

In Algorithm~\ref{alg:6_clique_Algo_2}, we show another alternative algorithm for $\cliquelist{6, 1}$ that performs better for different ranges of $t$.  

\begin{algorithm}
    \caption{$\cliquelist{6, 1}$ Algorithm II.}\label{alg:6_clique_Algo_2}
    \begin{algorithmic}
    \item \textbf{Input:} $(G := (V, E), n, t)$
    \item \textbf{Output:} The list of all $\le t$ $6$-cliques
    \item \textbf{The Algorithm:}
    \begin{enumerate}
    \item If $n \le 5$, list nothing and return. 
    \item Call a $K_4$ light if it is contained in at most $\rho$ $K_6$ for some $\rho \ge 1$. Clearly, there are at most $15 t/\rho$ dense $K_4$. 
    \item Just as before, we can list all $K_6$ containing light $K_4$ in $\tO(\rho \MM(n^2, \frac{n^2}{\rho}, n^2))$ time. 
    \item Call an edge $e$ light if it is contained in $q_e \le x$ dense $K_4$ for some $x \ge 1$. All other edges are dense. There are at most $90 t/\rho x$ dense edges. 
    \item For every light edge $e$, we call the $\cliquelist{4, 2}$ algorithm in Section~\ref{sec:4_5_l_listing} using all edges that are disjoint with $e$ and form a dense $K_4$ with $e$. We remove edge $e$ afterwards. 
    \item For each node $v$ connected to  $d_v \le y$ dense edges for some $y \ge 1$, run the $\cliquelist{5, 1}$ algorithm from Corollary~\ref{cor:5_1_opt} in its neighbors connected to it by dense edges. Delete this node afterwards. 
    \item The number of remaining nodes is at most $180 t/\rho x y $; recurse. 
    \end{enumerate}
    \end{algorithmic}
\end{algorithm}

\begin{remark}
    Intuitively, Algorithm~\ref{alg:6_clique_Algo_2} is similar to the algorithm obtained from Theorem~\ref{thm:reduction_from_small_t} by setting $k = 6$ and $s = 2$ and reducing the problem to $\cliquelist{3, 1}$. However, instead of calling $\cliquelist{2, 1}$ (as one would in the usual $\cliquelist{3, 1}$ algorithm), we   call $\cliquelist{4, 2}$ instead. This is better because $\cliquelist{4, 2}$ takes advantage of matrix multiplication whereas $\cliquelist{2, 1}$ simply uses brute-force.
\end{remark}

\begin{lemma}
\label{lem:6_clique_Algo_2}
Algorithm~\ref{alg:6_clique_Algo_2} is correct and runs in $\tO(n^4 + n^{15/7}t^{4/7}+n^{37/21}t^{2/3}+n^{29/25}t^{4/5}+n^{9/10}t^{17/20})$ time. 
\end{lemma}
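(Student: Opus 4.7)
As in Lemma~\ref{lem:6_clique_Algo_1}, the plan is to carry out the analysis under the assumption $\omega = 2$, which matches the exponents appearing in the stated bound. The argument has three pieces: a correctness walk through which traces what is listed at each step, a per-level running-time bound that reduces to two vertex/edge sums controlled by H\"older's inequality, and a case split on $t$ that chooses the parameters $\rho, x, y$ so as to extract each of the four listed monomials.

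\emph{Correctness.} After Step~3 every $K_6$ containing at least one light $K_4$ has been listed, so any still-unlisted $K_6$ has all fifteen of its induced $K_4$'s dense. Consequently, for every light edge $e = (u,v)$ and every unlisted $K_6$ on nodes $\{u,v,a,b,c,d\}$, all six $K_4$'s $\{u,v,p,q\}$ with $\{p,q\} \subseteq \{a,b,c,d\}$ are dense, which is equivalent to $\{a,b,c,d\}$ being a $K_4$ in the auxiliary graph built in Step~5. Running $\cliquelist{4,2}$ on that auxiliary graph (Theorem~\ref{thm:4_2}) therefore recovers every unlisted $K_6$ through $e$; afterwards $e$ is deleted, so every remaining $K_6$ uses only dense edges. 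Step~6 then lists every such $K_6$ passing through a node of dense-degree at most $y$ via $\cliquelist{5,1}$ (Corollary~\ref{cor:5_1_opt}) in its dense-neighborhood. The residual graph has dense-degree greater than $y$ at every node, and since there are at most $90 t/(\rho x)$ dense edges it contains at most $180 t/(\rho x y)$ nodes, so the recursion finishes.

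\emph{Per-level running time.} Writing $q_e$ for the number of dense $K_4$'s through edge $e$ and $d_v$ for the dense-degree of $v$, the estimates $\sum_e q_e \le 90 t/\rho$, $\sum_v d_v \le 180 t/(\rho x)$, $\sum_e \Delta_6(e) \le 15 t$ and $\sum_v \Delta_6(v) \le 6 t$ let us bound one level by
\begin{equation*}
\tO\!\left(\rho\, \MM(n^2, n^2/\rho, n^2) \,+\, \sum_{e\text{ light}} T_{4,2}(q_e, \Delta_6(e)) \,+\, \sum_{v:\, d_v \le y} T_{5,1}(d_v, \Delta_6(v))\right).
\end{equation*}
With $\omega = 2$ the first term is $\tO(\rho n^4)$. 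For each of the three branches of Theorem~\ref{thm:4_2} and each of the two terms of Corollary~\ref{cor:5_1_opt}, we apply H\"older's inequality (Corollary~\ref{cor:holders_useful}) using $q_e \le x$, $d_v \le y$ and the global sums above to turn the edge and vertex sums into monomials in $n, t, \rho, x, y$. We then set $y = \max\{1, \Theta(t/(\rho x n))\}$ so that the node count drops by at least a factor of two per recursion, making the total cost $\tO(1)$ times the top-level cost.

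\emph{Parameter choice and case analysis.} Substituting the optimized $y$ collapses the cost to $\tO(\rho n^4)$ plus a handful of monomials in $n, t, \rho, x$. The plan is then to split into four regimes on $t$ relative to $n$: for $t \lesssim n^4$ we take $\rho = x = 1$ and recover the $n^4$ term; for each successively larger range of $t$, the dominant piece among the $\cliquelist{4,2}$ and $\cliquelist{5,1}$ contributions changes, and the corresponding balance of $\rho n^4$ against it yields in turn $n^{15/7} t^{4/7}$, $n^{37/21} t^{2/3}$, $n^{29/25} t^{4/5}$ and $n^{9/10} t^{17/20}$. The main obstacle is bookkeeping: in each regime one must verify that the chosen $\rho, x, y$ are at least $1$, that $q_e \le x$ keeps the $\cliquelist{4,2}$ calls in the intended branch of its piecewise bound, and that the large-$t$ threshold of Corollary~\ref{cor:5_1_opt} is either satisfied or extended for free using monotonicity (Lemma~\ref{lem:list_exponent_monotone}); finally the envelope of the four monomials is checked termwise against the stated bound.
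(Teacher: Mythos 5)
Your overall route is the same as the paper's: list all $K_6$'s through light $K_4$'s by the matrix-multiplication sampling step, then for each light edge run $\cliquelist{4,2}$ on the auxiliary graph of disjoint edges forming dense $K_4$'s with it, then run $\cliquelist{5,1}$ in the dense-neighborhoods of nodes of small dense-degree, bound the per-level cost by H\"older using $\sum_e q_e = O(t/\rho)$, $\sum_v d_v = O(t/(\rho x))$, $\sum_e \Delta_6(e), \sum_v \Delta_6(v) = O(t)$, and control the recursion by forcing $\rho x y = \Omega(t/n)$ (the paper solves this constraint for $\rho$, you solve it for $y$; that reparameterization is immaterial). Your correctness discussion is in fact more explicit than the paper's one-line remark and is accurate.

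The gap is in the part you dismiss as bookkeeping, and it is not merely cosmetic: the one concrete instantiation you give is wrong. You claim that for $t \lesssim n^4$ one can take $\rho = x = 1$ and "recover the $n^4$ term." With $\rho = x = 1$ your own recursion constraint forces $y = \Theta(t/n)$, and then the node-side terms give $\sum_{v: d_v \le y} d_v^4 \le (t/\rho x)\, y^3 = t^4/n^3$ (already $n^9$ at $t = n^3$) and $(t/\rho x)^{1/10} t^{9/10} y^{2/5} = t^{7/5}/n^{2/5}$ (about $n^{5.2}$ at $t = n^4$), far above $n^4$. In the paper the $\tO(n^4)$ regime requires the nontrivial choice $x = n^{3/2}$, $y = n^{3/4}$ and only extends to $t \le n^{13/4}$; for $n^{13/4} < t \le n^4$ the achievable bound is $n^{15/7} t^{4/7} > n^4$ (one can check from the post-substitution expression $\tfrac{n^3 t}{xy} + x^{3/2} y n + y^4 n + \cdots$ that no choice of $x, y$ gives $n^4$ once $t > n^{13/4}$), with $x = n^{4/7} t^{2/7}$, $y = n^{2/7} t^{1/7}$; the remaining regimes need $x = n^{22/21} t^{1/6}, y = n^{4/21} t^{1/6}$, then $x = n^{9/5}, y = n^{1/25} t^{1/5}$, then $x = n^{1/2} t^{1/4}, y = n^{8/5} t^{-1/10}$, each with its own verification that the parameters are $\ge 1$ and that the intended term dominates. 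Since the entire content of this lemma beyond the (easy) correctness is exactly this five-regime optimization, leaving it unexecuted — and anchoring it on an instantiation that fails — is a genuine gap; you should carry out the substitution of $y$ (or $\rho$) explicitly, obtain the six-monomial expression in $n, t, \rho, x$ (respectively $x, y$), and exhibit the parameter choices regime by regime as above.
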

\begin{proof}
The correctness of the algorithm is almost immediate. The running time of the algorithm, excluding the recursion, is (if $\omega = 2$)
\begin{align*}
    &\tO\left(\rho \MM\left(n^2, \frac{n^2}{\rho}, n^2\right) + \sum_{e: q_e \le x} \left(q_e^{3/2} + q_e \Delta_6(e)^{2/5} + q_e^{1/2}\Delta_6(e)^{3/4} \right)+ \sum_{v: d_v \le y} \left( d_v^4 + d_v^{1/2}\Delta_6(v)^{9/10} \right) \right)\\
    \le & \tO\left(\rho n^4 + (t/\rho)x^{1/2} + (t/\rho)^{3/5}t^{2/5}x^{2/5}+(t/\rho)^{1/4}t^{3/4}x^{1/4}+ (t/\rho x) y^3 + (t/\rho x)^{1/10} t^{9/10}y^{2/5}\right).
\end{align*}
The inequality is due to 
$\sum_e q_e = O(t/\rho), \sum_v d_v = O(t/\rho x)$ and H\"{o}lder's inequality. 
We also set $\rho = \max\{1, \frac{90  t}{xyn}\}$ so that each recursion level decreases $n$ by a factor of at least $2$. The overall time complexity is thus within $\tO(1)$ of the time complexity of the first recursion level. The running time then becomes (assuming $\rho =\frac{90  t}{xyn}$)
$$\tO\left(\frac{n^3 t}{xy} + x^{3/2}yn +xy^{3/5}n^{3/5}t^{2/5}+x^{1/2}y^{1/4}n^{1/4}t^{3/4}+ y^4 n + y^{1/2}n^{1/10}t^{9/10}\right).$$
The running time of the algorithm is thus
\begin{itemize}
    \item $\tO(n^4)$ when $t \le n^{13/4}$ by setting $x = n^{3/2}, y = n^{3/4}$ (even though in this setting, $\frac{90t}{xyn}$ will be less than $1$ if $t < n^{7/4} / 90$,  the running time still holds by setting $\rho = 1$);
    \item $\tO(n^{15/7}t^{4/7})$ when $n^{13/4} < t \le n^4$ by setting $x=n^{4/7}t^{2/7}$ and $y = n^{2/7} t^{1/7}$;
    \item $\tO(n^{37/21}t^{2/3})$ when $n^4 < t \le n^{158/35}$ by setting $x=n^{22/21}t^{1/6}$ and $y = n^{4/21}t^{1/6}$;
    \item $\tO(n^{29/25}t^{4/5})$ when $n^{158/35} < t \le n^{26/5}$ by setting $x=n^{9/5}$ and $y = n^{1/25}t^{1/5}$;
    \item $\tO(n^{9/10}t^{17/20})$ when $n^{26/5} < t \le n^{6}$ by setting $x=n^{1/2}t^{1/4}$ and $y = n^{8/5}t^{-1/10}$ (note that $y$ can be $> n$ sometimes. This would mean all nodes are ``dense nodes'', and we could improve the running time by decreasing $y$ to $n$. Nevertheless, Algorithm I performs better in this regime. )
\end{itemize}
\end{proof}

Combining Lemma~\ref{lem:6_clique_Algo_1} and Lemma~\ref{lem:6_clique_Algo_2}, we obtain the following upper bound for $\cliquelist{6, 1}$. 
\begin{proposition}
If $\omega = 2$, $\cliquelist{6, 1}$ can be solved in time $$\tO\left(\min\left\{n^4 + n^{5/2}t^{1/2}+n^{2/5}t^{14/15}, n^4 + n^{15/7}t^{4/7}+n^{37/21}t^{2/3}+n^{29/25}t^{4/5}+n^{9/10}t^{17/20} \right\}\right).$$
\end{proposition}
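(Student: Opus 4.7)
The proposition is essentially a combination of Lemma~\ref{lem:6_clique_Algo_1} and Lemma~\ref{lem:6_clique_Algo_2}, which are already stated in the excerpt. My plan is therefore to obtain the $\min$ bound by running both algorithms ``in parallel'' and halting as soon as one finishes.

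\noindent\textbf{Step 1.} Invoke Lemma~\ref{lem:6_clique_Algo_1} to obtain Algorithm~\ref{alg:6_clique_Algo_1}, which correctly lists all $6$-cliques in time $\tO(n^4 + n^{5/2}t^{1/2}+n^{2/5}t^{14/15})$. Invoke Lemma~\ref{lem:6_clique_Algo_2} to obtain Algorithm~\ref{alg:6_clique_Algo_2}, which correctly lists all $6$-cliques in time $\tO(n^4 + n^{15/7}t^{4/7}+n^{37/21}t^{2/3}+n^{29/25}t^{4/5}+n^{9/10}t^{17/20})$.

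\noindent\textbf{Step 2.} Given the input graph, simulate Algorithms~\ref{alg:6_clique_Algo_1} and~\ref{alg:6_clique_Algo_2} in parallel (alternating one step of each). Terminate as soon as one of them completes and return its output. If $T_1$ and $T_2$ denote the two running times above, the combined procedure runs in time $O(\min\{T_1, T_2\})$, and the output is correct because whichever algorithm finishes first already produces the complete list of $6$-cliques by Lemmas~\ref{lem:6_clique_Algo_1} and~\ref{lem:6_clique_Algo_2}. As in Lemma~\ref{lem:equivalence_specify}, we may also assume we know a constant-factor approximation of $t$ with only $\tO(1)$ overhead, so the ``input $t$'' appearing in the running time of each subroutine can be taken to match the actual number of $6$-cliques up to a constant.

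\noindent\textbf{Main obstacle and remarks.} Since both lemmas have already been proved in the excerpt, there is no real technical obstacle left: the only subtlety is the standard parallel-simulation trick used to convert two separate upper bounds into a single $\min$-bound, and the fact that this increases the running time by only a constant factor. No new parameter optimization is required, since the piecewise regime analysis needed to derive each of the two bounds was already carried out inside the proofs of Lemma~\ref{lem:6_clique_Algo_1} and Lemma~\ref{lem:6_clique_Algo_2}. The proposition therefore follows immediately.
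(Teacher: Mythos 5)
Your proposal is correct and matches the paper, which obtains this proposition by directly combining Lemma~\ref{lem:6_clique_Algo_1} and Lemma~\ref{lem:6_clique_Algo_2}; your parallel-simulation step (or simply choosing the cheaper algorithm, since $n$ and the specified $t$ are known) is exactly the intended, immediate way to get the $\min$ bound.
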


\bibliographystyle{alpha}
\bibliography{ref}

\end{document}